\newtheorem{theorem}{Theorem}
\newtheorem{definition}{Definition}
\newtheorem{lemma}{Lemma}
\newtheorem{proposition}[theorem]{Proposition}
\newtheorem{corollary}{Corollary}
\newtheorem{claim}{Claim}
\newenvironment{proof}{\noindent\textbf{Proof: }\ignorespaces}{}
\newcommand{\qed}{\hspace*{\fill}$\Box$\medskip}
\newcommand{\eat}[1]{}
\def\con{\subseteq}
\def\Mt.{\mathy{T(M)}}
\def\mathy #1{\ifmmode {#1} \else{$#1$}\fi}
\def\P.{{$\cal P$}}
\def\N.{{$\cal N$}}
\def\C.{\mathy{\cal C}}
\def\A.{\mathy{\cal A}}
\def\B.{\mathy{\cal B}}
\def\remark #1{\noindent{\bf Remark:} #1\\}
\def\numremark #1 #2{\noindent{\bf Remark #1:} #2\\}
\long\def\claim #1{\bigskip\noindent{\bf Claim} {\it #1}\bigskip}
\def\case #1 #2{\bigskip\noindent{\em Case #1: #2}\par}
\def\set #1#2{\{ #1:#2 \}}
\newcommand{\bproof}{\noindent{\bf Proof: }}
\newcommand{\ecproof}{\hfill $\spadesuit$\\}
\long\def\claim #1{\bigskip\noindent{\bf Claim} {\it #1}\bigskip}
\def\i{($i$) } \def\xi{($i$)}
\def\ii{($ii$) } \def\xii{($ii$)}
\def\iii{($iii$) } \def\xiii{($iii$)}
\title{ \Large A Model for Minimizing
Active Processor Time}
\author{
Jessica Chang\thanks{
Dept. of Computer Science and Engineering, University of Washington, Seattle WA 98195,
\texttt{jschang@cs.washington.edu}.
Research supported by an NSF Graduate Research Fellowship and
NSF CCF-1016509.}
\and
Harold N. Gabow \thanks{University of Colorado, Boulder CO 80309, 
\texttt{hal@cs.colorado.edu}.}
\and
Samir Khuller\thanks{
Dept. of Computer Science, University of Maryland, College Park MD 20742, \texttt{samir@cs.umd.edu}.
Research supported by NSF CCF-0728839, NSF CCF-0937865 and a Google Research Award.} }
\date{}
\begin{document}
\thispagestyle{plain}
\maketitle

\begin{abstract} 
We introduce the following elementary 
scheduling problem. We are 
given a collection of $n$ jobs, where each job 
$J_i$ has an integer length $\ell_i$ as well as 
a set $T_i$ of time intervals in which it can be 
feasibly scheduled.  
Given a parameter $B$, the processor can schedule up to $B$ jobs
at a timeslot $t$ so long as it is ``active'' at $t$.
The goal is to schedule all the jobs in the fewest 
number of active timeslots.  The machine consumes 
a fixed amount of energy per active timeslot, 
{\em regardless} of the number of jobs scheduled 
in that slot (as long as the number of jobs is non-zero).
In other words, subject to $\ell_i$ units of each 
job $i$ being scheduled in its feasible region and 
at each slot at most $B$ jobs being scheduled, we 
are interested in minimizing the total time during 
which the machine is active.  We present a linear 
time algorithm for the case where jobs are unit length 
and each $T_i$ is a single interval.  For general $T_i$, 
we show that the problem is $NP$-complete even
for $B=3$. However when $B=2$, we show that it
can be efficiently solved.
In addition, we consider a version of the 
problem where jobs have arbitrary lengths and 
can be preempted at any point in time.
For general $B$, the problem can be solved by linear 
programming.  For $B=2$, the problem amounts to 
finding a triangle-free 2-matching on a special graph.
We extend the algorithm of Babenko et. al. \cite{COCOON10}
to handle our variant, 
and also to handle non-unit length jobs.
This yields an $O(\sqrt L m)$ time algorithm
to solve the preemptive scheduling problem for $B=2$,
where $L = \sum_i \ell_i$.
We also show that for $B=2$ and unit length jobs,
the optimal non-preemptive schedule has $\le 4/3$ 
times the active time of the optimal
preemptive schedule; this bound extends to 
several versions of the problem when jobs
have arbitrary length.
\end{abstract}

\thispagestyle{plain}
\section{Introduction}
Power management strategies have been widely studied in the
scheduling literature \cite{Albers09,Albers10,IP,YDS,IGS,Baptiste}.  
Many of the models are motivated by
the energy consumption of the processor.
Consider, alternatively, the energy consumed 
by the operation of large storage systems.
Data is stored in memory which may be turned 
on and off \cite{SoCC2010}, and each task or 
job needs to access a subset of data items to run. 
At each time step, the scheduler can work on
a group of at most $B$ jobs. The only requirement is
that the memory banks containing the 
required data from these jobs be turned on. 
The problem studied in this paper is the special 
case where all the data is in one memory bank.
For even special cases involving multiple memory 
banks, the problem becomes $NP$-complete: if each job 
needs access to multiple memory banks
in order to be satisfied, via a reduction from 
the $k$-densest subgraph problem, it is $NP$-complete 
to determine whether there exists a schedule
satisfying $C$ jobs and being active for at most 
$A$ units of time.

We propose a simple model for measuring energy 
usage on a parallel machine.  Rather than focusing 
on conventional metrics measuring the quality of 
the schedule, we focus on problems motivated by
energy savings in ``efficient'' schedules.

In many applications, a job has many 
intervals of availability because, e.g., it 
interfaces with an external event such as 
a satellite reading or a recurring broadcast.  
The real-time and period scheduling literatures
address problems in this space.
More broadly, tasks may be constrained by 
user availability, introducing irregularity 
in the feasible intervals.  Our model is defined
generally enough to capture jobs of this nature.

More formally, we are given a collection of 
$n$ jobs, each job $J_i$ having an integer 
length $\ell_i$ and a set $T_i$ of time intervals with integer boundaries
in which it can be feasibly scheduled.  
In particular, 
$T_i = \{ I_k^i=[r_k^i, d_k^i] \}_{k=1}^{m_i}$ is a non-empty
set of disjoint intervals.  Note that if $m_i = 1$, then
we can think of job $J_i$ as having a single release time
and a single deadline.
For ease of notation, we may sometimes refer to job $J_i$ as job $i$.
In addition, time is 
divided into unit length timeslots
and for a given parallelism parameter $B$, 
the system (or machine) can schedule up to $B$ jobs in a single timeslot.
If the machine schedules any jobs at timeslot $t$, we say that
it is ``active at $t$''.
The goal is to schedule all jobs, i.e. schedule them 
within their feasible regions, 
while minimizing the number of slots during which the machine is active. 
The machine consumes a fixed amount of energy per active slot. 
In other words, subject to each job $J_i$ being 
scheduled within its feasible region $T_i$, 
and subject to at most $B$ jobs being scheduled 
at any time, we would like to
minimize the total active time spent scheduling the jobs.  
Note that there may be instances when there is no feasible schedule 
for all the jobs.  However, this case is easy to detect.
Note that for a timeslot significantly large (e.g. on the
order of an hour), any
overhead cost for starting a memory bank is negligible compared
to the energy spent being ``on'' for that unit of time.

To illustrate this model in other domains,
consider the following operational problem.
Suppose that a ship can carry up to $B$ cargo
containers from one port to another. 
Jobs have delivery requirements leading to release times
and deadlines. Finding an optimal schedule corresponds to the
minimum number of times we need to send the ship to deliver all the
packages on time. The motivating assumption is that it costs roughly the
same to send the ship, regardless of load and that there is an
upper bound on the load.

We could also consider this as a basic form of ``batch'' processing
similar to the work initiated by Ikura and Gimple \cite{IG}. 
Their algorithm is designed to minimize completion time for 
batch processing on a single machine for the 
special case of {\em agreeable}\footnote{When the ordering of jobs by release
times is the same as the ordering of jobs by deadlines.}
release times and deadlines. 
Baptiste \cite{Baptiste00} extended the Ikura and Gimple results 
to general release times and deadlines and an efficient algorithm was 
recently given by Condotta et. al.\cite{CKS}. 
All of these works focus merely on trying to find 
a feasible schedule (which then can be 
used as a subroutine to minimize maximum lateness). However
in our problem, in addition
we wish to minimize the number of batches. 

In the scheduling literature, often problems with unit processing times 
are trivial since they can be solved using matching techniques. However,
in different models which allow for overlap in job satisfaction, 
e.g. broadcast scheduling \cite{EH,CK,CEGK}, the problems often
turn out to be $NP$-complete; in fact, several variants of broadcast
scheduling have been shown to be $NP$-complete \cite{CEGK}. 
The problem considered in this paper also contains
an element of ``overlap'' since we can schedule up to $B$
jobs in a slot at unit cost and wish to minimize the number
of active slots.  

For the cases of unit length jobs and those in 
which jobs can be preempted at integral time points,
our scheduling problem can be modeled as a 
bipartite matching problem in which each node 
on the left needs to be matched with a node on the
right. Each node on the right has a capacity of $B$, 
and we are interested in minimizing the 
number of nodes on the right that have non-zero degree.
This problem can easily be shown to be $NP$-hard.
Hence it is slightly surprising that for unit 
length jobs with each $T_i$ being a single
interval, we can develop 
a fast algorithm to obtain an optimal solution to 
the scheduling problem defined above\footnote{The problem 
can be solved in $O(n^2T^2(n+T)$ time using Dynamic 
Programming as was shown by Even et. al. \cite{Even},
albeit the complexity of their solution is high. Their 
algorithm solves the problem of stabbing a collection 
of horizontal intervals with the smallest number of 
vertical stabbers, each stabber having a bounded capacity.}.  
Our algorithm is an almost greedy scheme, which intuitively abides 
by a lazy activation principle: schedule jobs in batches of size up to 
$B$ delaying the batch as long as possible.  At each step, we select 
``filler'' jobs (with later deadlines) to fill slots which otherwise would
have at least one and less than $B$ jobs, based on an Earliest Deadline 
First (EDF) strategy.  The algorithm as described does not quite work, 
since we may schedule some jobs using the lazy activation principle 
and later discover that these jobs should have been scheduled earlier 
to make space for other jobs with later deadlines. One way to address 
this problem is to dynamically re-assign jobs to time slots.  Our first 
attempt was based on this idea, but it resulted in a 
slower algorithm with a more complicated analysis.  However, we are 
able to address this issue by pre-processing the jobs to create 
a new instance with ``adjusted" deadlines, so that at most 
$B$ jobs have the same deadline. Then, no re-assignment 
of jobs is required.
%
%
As we will see, for infeasible instances, this algorithm has the
additional property that it will schedule the maximum number of jobs.

In particular, as our paper shows, even the problem where $B=2$ has a lot 
of structure due to its connection with matchings in graphs. We anticipate that 
this structure will be useful in the design of improved approximation algorithms
for $B>2$ (an $O(\log n)$ approximation algorithm follows from the work
of Wolsey \cite{Wolsey}).

\noindent
{\bf Main Results:}

\begin{enumerate}
\item For the case where jobs have 
unit length and each $T_i$ is a single interval,
we first develop an algorithm whose running time is $O(n \log n)$.
We then show how to improve its running time to linear.
Our algorithm takes $n$ jobs as input with integral release 
times and deadlines and outputs a schedule with the smallest 
number of active slots.   
The algorithm has the additional property that for
infeasible instances, it schedules the maximum number of jobs.
We also note that the slotted aspect of the time model is but a technical
convenience.  It can be shown without loss
of generality that time is slotted when job lengths, release times
and deadlines are integral (Section~\ref{sec:main}).  

\item 
When the release times and deadlines are not integral,
non-preemptively scheduling unit length jobs to
minimize the number of batches can be solved optimally 
in polynomial time via dynamic programming (Section~\ref{sec:dp}). 
This objective differs from active time:
a batch must start all its jobs at the same time and
the system may work on at most one batch at a time.
Even so, scheduling unit length jobs with integral
release times and deadlines to minimize active time
is clearly a special case of this.  We extend the 
result to the case when we have a budget on the number 
of active slots (Section~\ref{sec:dp2}).

\item In addition, we consider the generalization 
to arbitrary $T_i$.  
This problem is closely related to vertex cover with hard capacities,
the $k$-center problem and capacitated facility location, 
all classic covering problems.  In
particular, for the special case where every job is feasible in
exactly two timeslots, there is a LP-rounding 2-approximation,
which is implied from the vertex cover result in \cite{GHKKS}.
The complexity 
of the problem depends on the value of $B$, since for any fixed $B \ge 3$,
the problem is $NP$-hard. When $B=2$ this problem can be 
solved optimally  in  $O(m \sqrt{n}  )$ time where $m$ is the 
total number of time slots which are feasible for some job (Section~\ref{sec:disjoint}). 
We show that this problem is essentially equivalent to the maximum 
matching problem computationally.
In addition, we show
that this algorithm can be extended to the case of non-unit length jobs when
a job can be scheduled in unit sized pieces (Section~\ref{sec:nonunit}).

\item
We also consider a version of the problem when 
jobs have arbitrary lengths and can be preempted at any point in time,
i.e. not just at integer time points.
For general $B$ the problem can be solved by linear programming.
For $B=2$ the problem amounts to finding a maximum 
triangle-free 2-matching on a special graph.
Babenko et. al. present an elegant algorithm showing that 
a maximum cardinality triangle-free 2-matching
can be found in the same time as a maximum cardinality
matching \cite{COCOON10}.  We extend it for our 
scheduling problem to show that when $B=2$ and jobs
have arbitrary integral length, an optimal preemptive
schedule can be found in $O(\sqrt L m)$ time, for
$L$ the total length of all jobs.  Any preemptions occur at
integral or half-integral times.  

\item
In Section 7, we also give a tight
bound on the gain from arbitrary preemption: an optimal
schedule allowing only preemption at integral times uses
at most $4/3$ the active time of the optimal preemptive schedule.
We also note that this bound is best possible 
since there is a trivial example with three unit 
jobs where the optimal schedule which allows preemption
only at integer points uses two slots, 
and if we allow 
arbitrary preemptions, these jobs
can be scheduled in 1.5 slots giving the ratio of $4/3$.

\end{enumerate}

\subsection{Related Work}

A classical problem related to our work is the well known
``Scheduling unit jobs on $B$ processors with precedence constraints'',
in which $n$ unit jobs are given with 
precedence constraints and the goal is
to schedule the jobs on $B$ 
processors to minimize the maximum completion
time.  Again this can be viewed as minimizing the active time.  
For arbitrary $B$ the problem is $NP$-complete
\cite{GJ}. For fixed $B$, the problem is known to be  W[2] hard \cite{BF}. 
For the case where $B=2$, this problem can 
be solved optimally in polynomial
time \cite{Fuji,Gabow}.  Garey and Johnson \cite{GJ1,GJ2} consider 
the problem of scheduling unit jobs 
with integer release times and deadlines 
with precedence constraints, providing 
polynomial time algorithms for $B=2$.  
In addition, their algorithm finds a 
schedule with minimum lateness.   
This was extended to the case of real 
release times and deadlines by Wu and 
Jaffar \cite{WuJaffar}, who gave an $O(n^4)$ algorithm.
Their primary technique involves computing 
successor-tree-consistent deadlines, 
which effectively upper bounds the 
latest completion time for each job.  
Successor-tree-consistency allows 
the optimal schedule to be computed 
via a slight variation to Simon's forward 
scheduling algorithm for independent 
unit length jobs.  For scheduling unit length jobs with arbitrary
release times and deadlines on $B$ processors 
to minimize the sum of completion times, Simons and Warmuth
\cite{SW} extended the work by Simons \cite{S} giving an
algorithm with running time $O(n^2 B)$ to find a feasible
solution.  For constant $B$, the running time is improved
in \cite{LOQ11}.

A closely related problem of minimizing busy time 
has been recently studied by Khandekar, Schieber,
Shachnai and Tamir \cite{KSST}. In the busy time problem, jobs of arbitrary
length have individual demands $r_j$. 
The jobs have release times and deadlines 
and need to be scheduled in batches,
with the additional requirement that 
the total demand of jobs in the batch at any point of time 
is at most a given value. Each batch is scheduled on a single machine.
The busy time of a batch is defined
as the busy time of the machine that schedules it, i.e. the duration of the
earliest start time to latest end time of jobs in the batch.  We highlight that
their model permits access to an unbounded number of machines; thus every
instance is feasible, albeit with potentially high cost. 
Our problem is slightly different in that we have uniform demands  (as in \cite{IPDPS}), 
and we do not
have an unbounded number of machines. The non-unit length generalization makes the
problem $NP$-hard, even for the uniform demand case \cite{IPDPS,WZ}. 
In \cite{IPDPS}, the authors consider the uniform 
demand case and present a 4-approximation as well as results
for the special case where the jobs are {\em interval} jobs, i.e. the processing time
is exactly the length of the interval. In \cite{KSST}, they consider a more general
problem and develop an approximation algorithm with a factor of 5. Their main
idea is to first ``cluster'' the jobs with the assumption that each
batch has infinite capacity and then fix this as the position
of the job by modifying the release time and deadline, thus converting it
to an {\em interval} job. The main algorithm then partitions the
jobs by demand into two categories and uses a greedy method to schedule the
jobs.  A number of applications are mentioned in \cite{KSST,IPDPS}.

Baptiste \cite{Baptiste} examines a related problem of ``min gap'' scheduling unit length jobs 
on a single processor to minimize the number of idle intervals; in this model, 
the algorithm determines when the processor sleeps. They give an 
optimal dynamic programming algorithm which builds from a dominance
property of the optimal offline schedule.  
Baptiste, Chrobak and D\"urr in \cite{BCD} improve the running time; 
their algorithm in fact applies to the generalized problem in which jobs 
have arbitrary processing times. This work was subsequently extended to handle
multiple processors by  Demaine et. al. \cite{DGHSZ}, who also
provide an approximation algorithm for the case where each job
has multiple intervals in which it can be scheduled.  They also give
$\log n$ lower bounds on the approximation ratio.  
The cost function in this lower bound does not apply to 
the problems studied in this paper.

Also related is the \textit{dynamic speed scaling} problem, 
in which the scheduler determines the 
non-negative speed at which the processor runs.
For a single processor,
Yao et al. \cite{YDS} give an exact offline solution that 
minimizes the total power consumption when the power
is a convex function of the speed.
Irani et al. \cite{IGS} study an extended problem in which
the machine can also be put into the ``sleep'' state, during
which period no cost is incurred other than the constant wake-up
cost.
They present a 2-approximation and an 
$O(1)$-competitive algorithm in the online setting.  
The problem was recently shown to be $NP$-complete
and the approximation improved to 4/3 by Albers
and Antoniadis \cite{AA12}.
Despite the significant 
results in \cite{IGS}, its authors acknowledge 
that a continuous power function is unrealistic; 
in practice, systems run at 
a finite number of potential speeds.  Our work is the 
special case in which power is represented by a step 
function.

Li and Yao \cite{LY} consider a discretized version of the problem, in which the 
system may operate at one of a finite number of speeds.  Their algorithm is exact 
and runs in time $O(dn \log n)$, where $d$ is the number of possible speeds.  The main
idea is to first partition the jobs, and then to determine the speeds of these jobs, 
partition by partition. However, their model is not quite the same as ours; despite 
the discretization of the speeds, they still assume that the underlying power function is convex 
and therefore cannot capture the step from speed 0 to positive speed.

Demaine et. al. \cite{DGHSZ} investigated problems involving multiple processors
and more generally, multiple feasible intervals for jobs.  
For the multiple processors settings, they provide a polynomial-time algorithm
which minimizes the total number of gaps in the schedule.  The
algorithm also minimizes the total transition energy plus total time in active slots, over
multiple processors.  Notice that this setting is not quite a generalization of ours, since
the total active time is summed over each processor.
Unlike our cost model, it is cheaper to activate fewer rather than more processors at any given time. 
Finally, they give for the multi-interval setting a 
$(1 + (\frac{2}{3}+\epsilon)\alpha)$-approximation, where $\alpha$ is the cost
to transition to the active state.  Notice that their cost model is very closely related to
ours, with the exception that we consider settings in which the startup cost is negligible enough
to be assumed zero.  Thus, their lower bounds on the approximation ratio, which
explicitly assume a non-zero $\alpha$, do not apply to our problem.

Demaine and Zadimoghaddam \cite{DZ} recently studied the problem of minimizing 
energy consumption in schedules over multiple processors.  Their model is quite 
general in that the feasible time slots in which each unit length job may be scheduled 
may not comprise a single time interval.  Also, each processor has an arbitrary (unrelated) 
energy function and can (but doesn't have to) go to a sleep state.  They provide a 
$O(\log n)$-approximation for this problem by first proving a general result for the 
submodular maximization problem with budget constraints, and then reducing their 
scheduling problem to a matching problem on a bipartite graph with a submodular 
matching function.  They also show that the problem is Set-Cover-hard, thus 
demonstrating the tightness of their result.  

We refer the reader to surveys \cite{IP,Albers09,Albers10} for a 
more comprehensive overview of the latest scheduling results for power management problems.

\section{Lazy Activation for Unit Jobs and Single Execution Windows}
\label{sec:main}

We first provide a high level description of the algorithm, followed
by pseudo-code and the proof of optimality. We may assume that
the instance is feasible, since this is easy to check
by an EDF computation.

Denote the distinct deadlines
by $d_{i_1} < d_{i_2} < \ldots < d_{i_k}$, and let
$S_p$ be the set of jobs with deadline $d_{i_p}$.  Then 
$S_1 \cup S_2 \ldots \cup S_k$ is the entire set of jobs where the
deadlines range from time $1$ to $T=d_{i_k}$.
It is not hard to argue that w.l.o.g. $T$ is $O(n)$. 

We process the jobs in two phases.  In Phase I we scan the 
jobs in order of {\em decreasing} deadline. We do not
schedule any jobs, but only modify the deadlines  of jobs to create a
new instance, whose optimal solution is equivalent to
that of the original instance. The desired property of the new instance is that at most $B$
jobs will have the same deadline.
Process the time slots from right to left. At slot $D$, let $S$ be the 
set of jobs that currently have deadline $D$. From $S$, select
$\max(0,|S| - B)$ jobs with earliest release times and decrement
their deadlines by one.
If $|S|  \le B$ then we do not modify the deadlines of jobs in $S$.
(Note that a job may have its deadline reduced multiple times since
it may be processed repeatedly.)
 
Assume for simplicity's sake that after the first phase,  $S_p$ refers to the jobs
of (modified) deadline $d_{i_p}$. 
We now describe Phase II in which jobs are actually scheduled. 
Initially all jobs are {\em unscheduled}. As the algorithm
assigns jobs to active time slots, we change the status of jobs to {\em scheduled}. 
Once a job is scheduled, it remains scheduled. Once a slot is declared active, 
it will remain active for the entire duration of the algorithm.

Our algorithm, in general, schedules jobs in increasing order by deadline.
As we will see shortly, this is not quite true since 
we may schedule some jobs with later deadlines whenever there is available 
space. 
In precise terms, we schedule the time slots
$d_{i_p}$ from left to right. To schedule $d_{i_p}$, if there
are still unscheduled jobs with that deadline
we schedule them. If fewer than $B$ jobs get scheduled,
we schedule additional jobs that are available.
Job $j$ is available if it is currently unscheduled
and has $d_{i_p} \in [r_j,d_j)$. We schedule these
available jobs EDF, until the slot is full or
no more jobs are available.

\subsection{Formal Algorithm Description}
Let $B > 0$ be the number of jobs that the system can satisfy in a single time slot.
For every job $j$, denote $j$'s release time and deadline by $r_j$ and $d_j$, respectively.
We index the $n$ jobs in order of increasing deadline, i.e. such that
$d_1 \leq d_2 \leq \ldots \leq d_n = T$. 
Normalize to 0 the earliest release time of any job.
Then without loss, all feasible schedules are active only within the interval $[0,T]$.

In the first phase we scan the jobs from right to left in decreasing deadline order.
At each step we consider the set of jobs with a common deadline and leave
up to $B$ jobs with the latest release times untouched.  We modify the
deadlines of the rest of the jobs in this set, decrementing them each by one, 
and then continue processing the jobs.
 
The algorithm for the second phase
simultaneously maintains a set $W$ of active time slots and a set 
$J$ of satisfied jobs, both of which are initially empty.  In each iteration, 
we look at the unsatisfied job $j^*$ of earliest deadline, i.e. $j^* = \arg \min_{j \notin J} d_j$.  
Let $d^*$ be $j^*$'s deadline, and let $J^*$ be the set of all 
unsatisfied jobs with the same deadline $d^*$. 
The algorithm activates the latest possible time slot which can satisfy $J^*$ 
and adds it to $W$.  Only one slot is needed to satisfy $J^*$ since $|J^*| \leq B$.

The algorithm assigns jobs to the newly activated time slot $t$
first adding $J^*$ to $J$ and assigning those jobs to slot $t$.
If fewer than $B$ jobs are available, we  fill the remaining space by selecting 
available (filler) jobs from the remaining set of unscheduled jobs, again selecting 
based on EDF. These filler jobs will then be added to $J$.

The following pseudocode formalizes the above description of the second phase.
{\em SelectFillers} takes as input the set of active
slots $W$ and the set of scheduled jobs $J$, returning the set
of filler jobs $J'$. These jobs are then
added to the set of scheduled jobs.

\begin{algorithm}
$J \leftarrow \emptyset$; $W \leftarrow \emptyset$ \;
\While{$\exists \; j \notin J$}{
	$d^* \leftarrow \arg \min_{j \notin J} d_j$ \;
	$J^* \leftarrow \{ j \notin J : d_j = d^* \}$ \;
	$W \leftarrow W \cup d^*$ \;  
        $J \leftarrow J \cup J^*$ \;
	$J' \leftarrow SelectFillers (W, J)$ \; 
	$J \leftarrow J \cup J'$ \;
}

\caption{Lazy Activation Algorithm}
\label{ALG}
\end{algorithm}

\begin{algorithm}
Choose available fillers based on EDF to fill the $B-|J^*|$ empty spots.

\caption{SelectFillers($W,J$)}
\label{ALG2}
\end{algorithm}

\subsection{Analysis of the Algorithm}

It  is easy to implement our algorithm in running time $O(n \log n)$ using
standard data structures. What is not completely obvious is why it computes
an optimal solution.

Suppose the initial instance $I$ is transformed by Phase I
to a modified instance $I'$.
We prove the following properties about an optimal solution for $I'$.

\begin{proposition}
An optimal solution for $I'$ has the same number of active slots as an optimal
solution for the original instance $I$.
\label{prop:preprocess}
\end{proposition}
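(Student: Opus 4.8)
The plan is to bound the two optima $\mathrm{OPT}(I)$ and $\mathrm{OPT}(I')$ from both sides. One direction is immediate: Phase~I only decreases deadlines and never changes release times, so the feasible window $[r_j,d_j)$ of every job in $I'$ is contained in its window in $I$. Hence every schedule feasible for $I'$ is feasible for $I$, giving $\mathrm{OPT}(I) \le \mathrm{OPT}(I')$. All the content lies in the reverse inequality $\mathrm{OPT}(I') \le \mathrm{OPT}(I)$.

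For the reverse direction I would view Phase~I as a sequence of elementary steps. A single step acts at a deadline $D$ for which the current instance has a set $S$ of more than $B$ jobs of deadline $D$, and decrements by one the deadlines of the subset $E \subseteq S$ consisting of the $|S|-B$ jobs of earliest release time. It suffices to prove that each elementary step leaves the optimum unchanged; the proposition then follows by induction on the number of steps, the easy direction above supplying a matching lower bound at every step so that equality (and hence feasibility of each intermediate instance) is preserved throughout.

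The crux is an exchange argument for one step. Let $\tau$ be the latest time slot available to the jobs of $S$; decrementing a job's deadline from $D$ removes exactly $\tau$ from its feasible window. Fix an optimal schedule $\sigma$ for the instance just before the step. At most $B$ jobs occupy slot $\tau$, so at least $|S|-B = |E|$ jobs of $S$ run at slots earlier than $\tau$. I claim $\sigma$ can be rearranged, without activating any new slot, so that no job of $E$ runs at $\tau$; such a schedule is feasible for the post-step instance and uses exactly the same active slots, which establishes that the step does not increase $\mathrm{OPT}$. To build it, iterate the following swap: while some $e \in E$ occupies $\tau$, a counting argument shows the $\ge |E|$ jobs of $S$ running before $\tau$ cannot all lie in $E$, so some $f \in S \setminus E$ runs at a slot $t < \tau$. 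Since $E$ collects the earliest release times, $r_e \le r_f \le t$, so $e$ may legally move to $t$ (respecting its decremented deadline), while $f$, whose deadline is still $D$, may legally move to $\tau$. The swap preserves the occupancy of both already-active slots and strictly increases the number of $E$-jobs scheduled before $\tau$, so it terminates.

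The step I expect to be the main obstacle is making this exchange airtight: confirming that selecting $E$ by earliest release time is precisely what forces $r_e \le r_f$ in every swap (a different tie-breaking rule could render the swap infeasible), and verifying that each swap preserves per-slot occupancy and all capacity bounds so that no new slot is ever activated. Once the single-step lemma is in hand, chaining it through the right-to-left scan of Phase~I --- during which a given job may have its deadline decremented several times --- is a routine induction.
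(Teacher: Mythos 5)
Your proposal is correct and follows essentially the same route as the paper: both decompose Phase~I into elementary deadline-decrement steps and use an exchange argument exploiting that the decremented jobs have the earliest release times, so an untouched job with the same deadline scheduled earlier can be swapped into the last slot. Your batch-per-deadline formulation with the explicit counting and termination argument is a slightly more careful rendering of the paper's one-job-at-a-time version, but it is not a different proof.
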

 
\begin{proof}
It is easy to see that any feasible solution for $I'$ is feasible for $I$ since pre-processing 
only created a more constrained instance in the transformation; each job's
window in $I'$ is a subset of its window in the original instance $I$.
We now argue that a solution for $I$ can be transformed to a solution for $I'$
using the same number of slots. 
Suppose a feasible schedule $\sigma$ for $I$ is infeasible for $I'$ due to 
a job $x$. In other words, $\sigma$ schedules job $x$ after its modified deadline in $I'$. 
We can argue  this step by step, by showing that decrementing the
deadline of a single job does not change the optimal solution;
since the modification is done by a sequence of such operations, the optimal solution
is preserved.
Assume that the deadline of $x$ was reduced by one.
In the instance $I'$, we have $B$ jobs with deadline $d_x$, out of which at
most $B-1$ jobs can be scheduled with $x$. Hence there is at least one job
scheduled earlier whose deadline is still $d_x$. Since its release time
cannot be before the release time of $x$, we can exchange these two jobs.
This makes the schedule feasible for $I'$, and this establishes the proposition.
\qed
\end{proof}

\begin{proposition}
Without loss of generality an optimal solution for $I'$ uses a subset of slots that are deadlines.
\label{prop:deadlines}
\end{proposition}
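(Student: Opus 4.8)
The plan is to prove this by a right-shifting exchange argument that crucially exploits the defining property of $I'$ from Phase~I, namely that at most $B$ jobs share any single deadline. First I would fix, among all optimal schedules for $I'$, the one that is pushed as far right as possible: list the active slots of a schedule in decreasing order and select the optimal schedule whose resulting sequence is lexicographically largest. Such a schedule exists, since there are only finitely many schedules and all optimal ones have the same number of active slots; call it $\sigma$. I use the convention (consistent with the algorithm's $W \leftarrow W \cup d^*$) that a job $j$ may occupy slot $s$ exactly when $r_j \le s \le d_j$, so in particular a job may sit in the slot equal to its deadline.

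Suppose, for contradiction, that $\sigma$ activates a slot that is not a deadline, and let $t$ be the largest such slot. Let $[t,e]$ be the maximal block of consecutive active slots containing $t$, so slot $e+1$ is inactive. Two easy observations then hold. First, every slot among $t+1,\dots,e$ is a deadline, since it is active and lies to the right of $t$, which was chosen to be the largest non-deadline active slot. Second, every job scheduled somewhere in $[t,e]$ has deadline at least $t+1$: a job in slot $t$ has deadline strictly larger than $t$ because $t$ is not a deadline, and a job in any later slot $s$ has deadline at least $s>t$.

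The heart of the argument is to re-pack all jobs currently scheduled in the block $[t,e]$ into the shifted window $[t+1,e+1]$, leaving every other slot of $\sigma$ untouched. I would establish feasibility through the standard Hall/flow condition for scheduling unit jobs with release--deadline windows on slots of capacity $B$: a feasible packing into $[t+1,e+1]$ exists provided that for every subinterval $[u,v]\subseteq[t+1,e+1]$ the number of block jobs whose entire window lies in $[u,v]$ is at most $B(v-u+1)$ (each block job's window meets $[t+1,e+1]$, since its deadline is at least $t+1$ and its release is at most its old slot $\le e$). But any such job has its deadline in $[u,v]$, and since $I'$ allows at most $B$ jobs per deadline value, the count is at most $B(v-u+1)$, so the condition holds. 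This is precisely the step that would fail for the original instance $I$, and I expect it to be the main obstacle: it is the $\le B$-per-deadline structure produced by Phase~I that makes the shifted window $[t+1,e+1]$ exactly roomy enough.

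Finally I would compare the re-packed schedule with $\sigma$. The re-packing occupies at most $e-t+1$ slots inside $[t+1,e+1]$, a window of exactly $e-t+1$ slots. If it uses strictly fewer, the total number of active slots decreases, contradicting optimality. Otherwise it uses all of $[t+1,e+1]$, meaning slot $t$ has been vacated while slot $e+1$ is newly activated; reading active slots in decreasing order, the slots above $e+1$ agree while the block's top entry rises from $e$ to $e+1$, contradicting the lexicographic maximality of $\sigma$. Either way we reach a contradiction, so every active slot of $\sigma$ is a deadline, which is exactly the claim.
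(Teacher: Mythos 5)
Your overall strategy is sound and genuinely different from the paper's. The paper picks the optimal schedule with the fewest non-deadline slots, takes the rightmost such slot $t$, slides the jobs of $t$ rightward until they collide with a full deadline slot $t'$, and then uses the at-most-$B$-per-deadline property of $I'$ to keep the $B$ earliest-deadline jobs at $t'$ and cascade the rest further right, eventually vacating $t$. You instead shift an entire block one slot to the right in a single step, certify feasibility by a Hall-type condition, and derive the contradiction from a lexicographic extremal choice. Both arguments hinge on exactly the same structural fact about $I'$ (at most $B$ jobs per deadline), and your version, once correct, is arguably cleaner because it avoids reasoning about where the cascading displaced jobs eventually land.

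However, the feasibility step --- which you rightly call the heart of the argument --- has a genuine gap as written. The condition you state (``for every $[u,v]\subseteq[t+1,e+1]$, the number of block jobs whose \emph{entire window} lies in $[u,v]$ is at most $B(v-u+1)$'') is not the correct Hall/deficiency condition for packing into $[t+1,e+1]$: a block job with $r_j\le t$ or $d_j>e+1$ has no subinterval of $[t+1,e+1]$ containing its entire window, so it is never counted, even though it competes for those slots (e.g.\ $B{+}1$ jobs with $r_j\le t$ and $d_j=t+1$ would satisfy your condition vacuously while being unschedulable --- only the per-deadline cap rules this out, and your condition never invokes it for them). The correct condition must count jobs whose window \emph{clipped} to $[t+1,e+1]$, namely $[\max(r_j,t+1),\min(d_j,e+1)]$, lies in $[u,v]$. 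With that correction, your justification ``any such job has its deadline in $[u,v]$'' is valid only when $v\le e$; when $v=e+1$ the counted set includes block jobs with $d_j>e+1$, whose deadlines are not in $[u,v]$, and the per-deadline cap gives nothing. That case needs a separate (easy) argument: the jobs counted for $[u,e+1]$ all have $r_j\ge u$ or $u=t+1$, hence were scheduled by $\sigma$ in the at most $e-u+1\le v-u+1$ slots of $[\max(u,t),e]$ at capacity $B$, so their number is at most $B(v-u+1)$. With the clipped-window formulation and this two-case verification, your proof goes through.
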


\begin{proof}
Among all optimal solutions select the one that uses the least number of non-deadline
slots.  Among all the slots that are not deadlines, choose the slot $t$ that is 
the right-most (i.e. latest) active slot. Let $X$ be the set of jobs 
which are assigned to $t$.  We now move $X$ as late as possible while maintaining 
feasibility for every job in $X$. We will end up next to a deadline slot $t'$ containing 
a set of blocking jobs $B$; otherwise we will have reduced the number of 
non-deadline slots. From the set of jobs $X \cup B$, select $B$ jobs with the earliest
deadlines and schedule them in $t'$. Since at most $B$ jobs have deadline $t'$,
the remaining jobs all have higher deadlines and the process can be repeated.
\qed
\end{proof}

\begin{theorem}
Algorithm Lazy Activation computes an optimal solution for $I'$ (with the smallest number
of active slots).
\end{theorem}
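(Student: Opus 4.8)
The plan is to prove optimality by a structural induction on the number of jobs, after first using the two propositions to normalize the instance. By Proposition~\ref{prop:preprocess} it suffices to reason about $I'$, and by Proposition~\ref{prop:deadlines} I may fix an optimal schedule $O$ for $I'$ that activates only deadline slots; the algorithm itself activates only deadline slots, so the two schedules live in the same ground set. Feasibility of the algorithm's output is immediate and I would dispatch it first: every job is placed inside its own window, since a job is either scheduled as a filler at a slot strictly inside $[r_j,d_j)$, or else reaches its own deadline unscheduled and is placed, together with the at-most-$B$ jobs sharing that deadline (Phase~I guarantees $|J^*|\le B$), in the slot $d^*$, which is feasible because $r_j\le d^*$. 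Thus the entire content of the theorem is the slot count.

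For the count I would induct on the number of jobs. Let $d^*$ be the earliest deadline, $J^*$ the (at most $B$) jobs having that deadline, and let $A\supseteq J^*$ be the batch the algorithm assigns to slot $d^*$, i.e. $J^*$ together with the $B-|J^*|$ earliest-deadline jobs available at $d^*$ (or fewer, if that many are not available). The key observation driving the induction is that the algorithm's run on $I'$ consists of this single activation of $d^*$ followed by exactly the run of the algorithm on the reduced instance $I'\setminus A$: every remaining job has deadline strictly larger than $d^*$, so a fresh run on $I'\setminus A$ activates only slots $>d^*$ and never reconsiders $d^*$ or any earlier slot, which is precisely the continuation of the original run. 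Hence $\mathrm{ALG}(I') = 1 + \mathrm{ALG}(I'\setminus A)$, and the induction step reduces to the single inequality $\mathrm{OPT}(I') \ge 1 + \mathrm{OPT}(I'\setminus A)$: combined with the inductive hypothesis $\mathrm{ALG}(I'\setminus A)\le \mathrm{OPT}(I'\setminus A)$ this yields $\mathrm{ALG}(I')\le \mathrm{OPT}(I')$.

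To establish $\mathrm{OPT}(I')\ge 1+\mathrm{OPT}(I'\setminus A)$ I would run an exchange argument on the optimal schedule $O$. Because every job of $J^*$ has the earliest deadline, $O$ must use at least one active slot $\le d^*$; let $s^*$ be its latest such slot. I would then delete the jobs of $A$ from the instance, remove the slot $s^*$, and re-home each surviving job $y$ that occupied $s^*$ into the capacity freed inside the other active slots of $O$. Since $y\notin A$ and $d^*$ is the earliest deadline, $y$ has deadline strictly greater than $d^*$, so it is available at $d^*$; the fact that the algorithm did \emph{not} pick $y$ means $A$ already contains $B$ jobs no less urgent than $y$, and this EDF-dominance is exactly what should guarantee enough room to re-home the displaced jobs without a new activation. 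Carrying this out produces a feasible schedule for $I'\setminus A$ with $|O|-1$ active slots, which is the desired inequality.

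The main obstacle is this last re-homing step. Showing that the displaced non-batch jobs all fit feasibly into the surviving slots is a Hall-type/augmenting-path claim rather than a one-line swap, and two cases make it delicate: (a) formalizing the EDF-dominance of the algorithm's filler choice so that it translates into available capacity at the right places, and (b) handling jobs whose release time equals $d^*$ (so that $s^*=d^*$), which cannot be pushed to an earlier slot and must be absorbed to the right --- here I expect to lean on the maximality of $s^*$ and on the Phase~I guarantee that at most $B$ jobs share any deadline to rule out a forced extra activation.
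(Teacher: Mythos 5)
Your proposal follows essentially the same route as the paper's proof: normalize via Propositions~\ref{prop:preprocess} and~\ref{prop:deadlines}, then induct on the instance by showing that an optimal schedule can be assumed to place exactly the algorithm's first batch $A$ in the slot $d^*$. The obstacle you flag at the end largely dissolves once Proposition~\ref{prop:deadlines} is used in full: since the optimal schedule $O$ uses only deadline slots and $d^*$ is the earliest deadline, necessarily $s^*=d^*$, and the re-homing is a one-for-one EDF swap rather than a Hall-type argument --- because $|O(d^*)|\le B=|A|$ one can match each displaced $y\in O(d^*)\setminus A$ to a distinct $z\in A\setminus O(d^*)$ with $d_z\le d_y$ that $O$ schedules at some deadline slot $t>d^*$, and $r_y\le d^*<t<d_z\le d_y$ shows $y$ fits at $t$; this is exactly the exchange the paper invokes with ``otherwise we can exchange the jobs.''
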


\begin{proof}
It is enough to prove that the optimal solution w.l.o.g. schedules exactly the same jobs
that we schedule in the first active slot (earliest deadline $d_1$). By removing these jobs 
from the instance $I'$, it is easy to see that our algorithm computes an optimal solution.
The proof for the claim is as follows.
Due to the above propositions, the jobs with deadline $d_1$ w.l.o.g. are all scheduled 
in time slot $d_1$ in the optimal solution; Algorithm Lazy Activation also schedules all the jobs in this time slot. In addition, 
observe that to fill the remaining slots among the available slots, the optimal solution
again w.l.o.g. selects available jobs with earliest deadline, i.e. the filler jobs chosen by 
Algorithm Lazy Activation. Otherwise we can exchange the jobs to achieve this property. 
The proof follows by induction.
\qed
\end{proof}
\noindent
%
%

\subsection{On Infeasible Instances}

In this section, we consider the behavior of
the Lazy Activation algorithm on instances for which
it is impossible to schedule all jobs
within their individual windows of feasibility.
We will show that Lazy Activation maximizes the number
of jobs satisfied.  
In fact, we will see
that it does so in the fewest number of active timeslots.
Denote by $\mathcal{S}_{LA}$ the schedule returned
by Lazy Activation.
In Phase I, it is possible for a job's deadline to be
decremented all the way to its release time; in this
case, we say that the job's window has \textit{collapsed}.
%
%

\begin{proposition}
On infeasible instances, Lazy Activation maximizes
the number of jobs satisfied.
\end{proposition}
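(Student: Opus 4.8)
The plan is to show that Lazy Activation schedules the maximum possible number of jobs on infeasible instances by adapting the optimality argument used for feasible instances. The key structural insight I would exploit is that Phase I preprocessing remains valid even on infeasible instances: decrementing deadlines to obtain the modified instance $I'$ preserves the maximum number of schedulable jobs, by essentially the same exchange argument as in Proposition~\ref{prop:preprocess}. The one subtlety is the notion of a \textit{collapsed} window flagged in the setup: if a job's deadline is decremented all the way to its release time, that job can never be scheduled, so I would first argue that such jobs contribute nothing to any optimal solution and can be discarded without changing the maximum.

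\textbf{Main structure of the argument.} First I would establish the analogue of Propositions~\ref{prop:preprocess} and~\ref{prop:deadlines} for the ``maximize jobs satisfied'' objective rather than ``minimize active slots,'' namely that some maximum-cardinality schedule uses only deadline slots and agrees with $I'$. Then I would run the same induction on the earliest deadline $d_1$: I claim that some optimal (max-cardinality) schedule satisfies exactly the jobs Lazy Activation places in the first active slot. The crux is the filler step. In the feasible case one argues that any optimal schedule can be exchanged so that its filler jobs are the EDF choices; here I must additionally handle the case where the optimal solution \emph{drops} some jobs. The key exchange lemma I would prove is: if the optimal solution declines to satisfy an available earliest-deadline job that Lazy Activation selects as a filler, I can swap it in for some later-deadline (or unscheduled) job without decreasing the total number satisfied and without violating any deadline, because the earlier-deadline job is feasible in the current slot while the displaced job, having a later deadline, can be rescheduled or dropped harmlessly.

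\textbf{The main obstacle} I anticipate is making the exchange argument airtight when the optimal solution and Lazy Activation \emph{disagree about which jobs to abandon entirely}, not merely about placement. In the feasible case every job is scheduled, so an exchange just permutes assignments; here an exchange may need to evict a job from the optimal schedule to make room for the EDF choice, and I must verify that the evicted job can always be re-placed in some later active slot (or that its loss is compensated), so that cardinality never drops. I would handle this by a careful charging/potential argument on the jobs ordered by deadline: process slots left to right, and maintain the invariant that after processing slot $d_{i_p}$, the set of jobs satisfied by Lazy Activation through $d_{i_p}$ is at least as large as that of the optimal solution, using the greedy EDF property to guarantee that Lazy Activation never ``wastes'' a slot on a later-deadline job when an earlier-deadline job is still feasible. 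Establishing this invariant cleanly, and confirming it is preserved across the activation of multiple new slots in a single iteration, is where the real work lies; the remaining steps are then routine consequences of the induction.

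Finally, I would note in closing that the same argument in fact yields the stronger claim flagged in the text—that Lazy Activation achieves this maximum in the \emph{fewest} active slots—by reapplying the minimum-active-slots optimality theorem to the subinstance consisting of exactly the satisfiable jobs.
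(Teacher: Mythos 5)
You have the right first half: the proposition does reduce to showing that each Phase~I deadline decrement preserves the maximum achievable throughput, and your exchange argument for that step (swap $x$ with one of the $B$ later-released jobs sharing its deadline, or simply schedule that job in place of $x$ if $x$ was dropped) is exactly what the paper does. The gap is in the second half. You do not need, and the paper does not use, any slot-by-slot comparison between Lazy Activation's filler choices and an optimal schedule, nor any analogue of Proposition~\ref{prop:deadlines} for the throughput objective. The observation you are missing is that after Phase~I the non-collapsed jobs form a \emph{feasible} instance: at most $B$ jobs share any modified deadline, so Phase~II schedules every non-collapsed job, either at its modified deadline or earlier as a filler. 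Since collapsed jobs have empty windows in $I'$, the maximum throughput of $I'$ is at most the number of non-collapsed jobs; combined with the fact that Phase~I preserves maximum throughput, Lazy Activation's count equals the optimum of $I$ and the proof is finished. By contrast, the induction you outline --- maintaining a prefix invariant against an optimal schedule that may abandon a \emph{different} set of jobs, with evictions that must be re-placed in later slots --- is precisely the part you flag as ``where the real work lies,'' and you leave it as a sketch. As written, the proposal has a hole exactly at its crux, and that crux is avoidable.

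A smaller point: your closing claim that the fewest-active-slots statement follows ``by reapplying the minimum-active-slots optimality theorem to the subinstance of satisfiable jobs'' is too quick. A competing schedule achieving the same throughput need not satisfy the same set of jobs, so one must first argue (as the paper does in the proposition that follows this one) that any throughput-maximizing schedule decomposes along the intervals of excess demand, with Lazy Activation running at full capacity $B$ inside those intervals. That is a separate argument, not a corollary of the feasible-case theorem.
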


\begin{proof}
One can use an argument similar to Proposition 1
to show that even if Phase I collapses the windows
of some jobs, it does not change the maximum
number of jobs that can be scheduled.  For completeness'
sake, we detail it here.  As before, we argue step-by-step that
each decrement changing the instance from $I$ to $I'$
does not affect the maximum throughput.  Suppose
a deadline $d_x$ of job $x$ is reduced by one.  Let $\sigma$
be a feasible schedule on $I$ achieving maximum throughput
on $I$.  We can transform $\sigma$ into $\sigma'$
that is feasible on instance $I'$ and that satisfies
the same number of jobs.  If $\sigma$ does not schedule job
$x$ at $d_x$, then $\sigma$ is already feasible on $I'$.
Suppose $\sigma$ schedules $x$ at $d_x$.  Then $\sigma$
can do at most $B-1$ other jobs at $d_x$.  In $I'$, there
are $B$ jobs with deadline $d_x$ and release time $\ge r_x$.
Thus, there exists a job $j$ which is not scheduled by $\sigma$
at $d_x$ and which has release time at least $r_x$.
Modify $\sigma$ by swapping jobs $j$ and $x$.  (If $j$ was
not scheduled in $\sigma$, then schedule $j$ and not $x$.)  The new
schedule satisfies the same number of jobs and is also
feasible for $I'$.

Thus, the modification of deadlines in Phase I
does not change the maximum number of jobs
which can be scheduled.  In particular, jobs whose
windows collapse in Phase I w.l.o.g. are also
dropped in some throughput-maximizing schedule.
In Phase II, Lazy Activation schedules every job
whose window has not collapse, either at its
deadline or earlier, i.e. as a filler.  Therefore,
Lazy Activation maximizes the number of jobs
satisfied.
\qed
\end{proof}

\begin{proposition}
If Lazy Activation's schedule $\mathcal{S}_{LA}$ satisfies
$n' < n$ jobs in $k$ active slots, then any schedule 
$\mathcal{S}$ satisfying $n'$ jobs does so in at least $k$
active time slots.
\end{proposition}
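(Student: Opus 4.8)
The plan is to reduce the statement to the feasible case already settled by the optimality Theorem for Lazy Activation. Let $N^*$ denote the set of jobs actually scheduled by Lazy Activation, i.e. the jobs whose windows did not collapse during Phase I. By the previous proposition, $|N^*| = n'$ is the maximum throughput and $N^*$ is feasible. Moreover, once the collapsed jobs are set aside (they are never available to be scheduled in Phase II), Phase II run on $I'$ is effectively Lazy Activation run on the feasible subinstance induced by $N^*$ alone; hence by the optimality Theorem the $k$ active slots it uses are the fewest possible for scheduling \emph{all} of $N^*$. In particular, no schedule of the job set $N^*$ can use fewer than $k$ slots.

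Now fix an arbitrary schedule $\mathcal{S}$ that satisfies $n'$ jobs, say the set $N$ with $|N|=n'$, and let $W$ be its set of active slots, $|W|=k'$. To finish, it suffices to exhibit a feasible assignment of all of $N^*$ that uses only slots of $W$: this shows $N^*$ can be scheduled in $k'$ slots, so $k\le k'$ by the previous paragraph, which is exactly the claim. First I would record a useful consequence of maximality: if some surviving job $y\notin N$ had its entire window disjoint from $W$, we could activate a fresh slot in that window and add $y$ to $\mathcal{S}$, obtaining $n'+1$ satisfied jobs and contradicting that $n'$ is the maximum throughput. Thus every surviving job that $\mathcal{S}$ omits has a feasible slot inside $W$.

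The heart of the argument is to upgrade this pointwise fact into simultaneous schedulability of all of $N^*$ within $W$. I would do this by a transformation argument in the spirit of the exchange used for Proposition 1: starting from $\mathcal{S}$, repeatedly pick a non-surviving job $x\in N\setminus N^*$ and a surviving job $y\in N^*\setminus N$ and reassign so that $y$ is scheduled and $x$ is dropped, never using a slot outside $W$ and never exceeding capacity $B$. Since $|N|=|N^*|$, iterating until $N\setminus N^*$ is empty yields a feasible placement of exactly $N^*$ inside $W$. Equivalently, and perhaps more cleanly, one verifies Hall's condition for $N^*$ against $W$: because the windows are single intervals, it is enough to check, for every interval $[a,b]$, that the number of jobs of $N^*$ whose window lies in $[a,b]$ is at most $B\,|W\cap[a,b]|$, and this can be derived from the corresponding inequalities that $\mathcal{S}$ certifies for $N$ together with the fact that $N$ and $N^*$ are both maximum.

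I expect this last step to be the main obstacle. The subtlety is that $\mathcal{S}$ is arbitrary and may place a collapsed job $x$ in a slot that does not lie in the window of the surviving job $y$ we wish to insert, so a naive in-place swap can fail and the reassignment must in general reroute several jobs among the slots of $W$. The interval structure of the windows is precisely what makes this tractable: it guarantees that the binding Hall sets are themselves intervals, so the single family of inequalities above suffices, and it is this structure that lets the exchange be carried out without ever introducing a slot outside $W$. Conceptually, this step asserts that the minimum number of active slots is the same for every maximum-cardinality set of schedulable jobs, which is the natural ``matroidal'' counterpart of the optimality Theorem.
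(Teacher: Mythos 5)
Your overall plan is sound and genuinely different from the paper's: you try to show that the set $N^*$ of surviving jobs can be rescheduled inside the slot set $W$ of an arbitrary maximum-throughput schedule $\mathcal{S}$, and then invoke the feasible-case optimality theorem for $N^*$. The paper instead decomposes the time horizon into the over-demanded intervals $T_1$ (where Lazy Activation packs exactly $B$ jobs per slot and is therefore trivially slot-optimal) and the remainder $T_2$ (where the induced subinstance is feasible and the optimality theorem applies), and proves that any throughput-maximizing $\mathcal{S}$ must schedule $J_2$ entirely in $T_2$, so the two contributions to active time can be bounded separately. The paper's decomposition sidesteps both of the points where your argument is thin.

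The genuine gap is the step you yourself flag as ``the main obstacle'': you never actually prove that all of $N^*$ can be scheduled using only slots of $W$. The pointwise observation (every omitted surviving job has \emph{some} feasible slot in $W$) is far weaker than simultaneous schedulability, and neither of your two sketches closes the distance. The iterated $x\leftrightarrow y$ swap fails exactly for the reason you name, and the Hall's-condition route is not carried out: the inequalities that $\mathcal{S}$ certifies are for $N$ against $W$, and transferring them to $N^*$ against $W$ requires knowing how $N$ and $N^*$ relate, which is the very exchange property you are trying to establish. The claim is in fact true, and the clean way to prove it is the standard symmetric-difference argument on the bipartite graph of jobs versus slot-copies: let $M$ realize $\mathcal{S}$ and $M^*$ realize $\mathcal{S}_{LA}$; each $y\in N^*\setminus N$ starts an alternating path in $M\triangle M^*$ that cannot end at a slot-copy uncovered by $M$ (that would be an augmenting path contradicting the maximality of $n'$), hence ends at a job of $N\setminus N^*$; flipping these vertex-disjoint paths yields a schedule of exactly $N^*$ using only slot-copies already covered by $M$, i.e., only slots of $W$. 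You should also shore up the preliminary claim that Lazy Activation's $k$ slots are the minimum needed to schedule all of $N^*$: Phase I's deadline modifications for the surviving jobs were computed in the presence of the collapsed jobs, so ``Phase II is effectively Lazy Activation run on the subinstance induced by $N^*$'' needs an argument (e.g., a Proposition~1-style step-by-step justification that each decrement, including those forced by eventually-collapsing jobs, preserves the optimum for the surviving set). With those two repairs your route goes through, but as written the proof is incomplete at its central step.
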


\begin{proof}
Suppose that Lazy Activation collapses $\kappa = n-n'$ jobs,
denoted $j_{a_1}\ldots j_{a_\kappa}$.
Let $\alpha_t$ be the number of jobs that
had deadline $t$ at the start of the iteration in which Phase I
processed $t$ as the deadline. Deadlines will be decremented
precisely when $\alpha_t > B$. 
For each collapsed job $j_{a_i}$, we will identify an interval
$I_{a_i}$ of excess demand by intuitively
``unrolling'' the iterations of Phase I starting 
from the point of collapse. More formally, set $t'$ to the
latest slot such that every slot $t \in [r_j, t']$ is
such that $\alpha_t > B$. 
Define $I_{a_i}$ to be $[r_j, t')$.
Notice that $t'$ is the original deadline of some job.  
Let $J_{a_i} = \{j : [r_j, d_j) \subseteq I_{a_i} \}$. 
Then the collapsed job $j_{a_i} \in J_{a_i}$
and also $|J_{a_i}| > B \cdot |I_{a_i}|$.  In fact,
$J_{a_i}$ consists exactly of two types of jobs: jobs which
have collapsed and $B \cdot |I_{a_i}|$ jobs which have
not collapsed.  Lazy Activation schedules the latter
job set in $I_{a_i}$ at $B$ jobs per slot.  

Now partition the original instance $(J, T)$ into
two subinstances $(J_1, T_1)$ and $(J_2, T_2)$, where
$J_1 = \bigcup_{i=1}^k J_{a_i}$ and $T_1 = \bigcup_{i=1}^k I_{a_i}$.
Obviously jobs of $J_1$ cannot be scheduled in slots of $T_2$
by definition.  
We claim that since $\mathcal{S}$ maximizes throughput, it
necessarily schedules $J_2$ in $T_2$.
Suppose there exists a job $j \in J_2$ that is
scheduled by $\mathcal{S}$ in some interval $I_{a_i}$.
Then since
$|J_{a_i}| > B \cdot |I_{a_i}|$, there is some job of $J_{a_i}$
that is missed by $\mathcal{S}$. 
Since it is possible to schedule all jobs of $J_2$ only
in slots of $T_2$ ($\mathcal{S}_{LA}$ is such an example),
missing that many jobs of $J_{a_i}$ was unnecessary.
This contradicts the fact that $\mathcal{S}$ maximizes throughput.

Then the active time $A(\mathcal{S})$ of $\mathcal{S}$
($\mathcal{S}_{LA}$, respectively) can be decomposed into
two components: the active time $A_1(\mathcal{S})$ spent
satisfying $J_1$ and the active time spent satisfying $J_2$.
Then,
\begin{align*}
  A(\mathcal{S}) &= A_1(\mathcal{S}) + A_2(\mathcal{S}) \\
	&\ge A_1(\mathcal{S}_{LA}) + A_2(\mathcal{S}_{LA}) \\
	&= A(\mathcal{S}_{LA}) \\
	&= k
\end{align*}
where the inequality follows from the facts that 
(1) Lazy Activation
minimizes active time on feasible instances and
(2) there are exactly $B \cdot |I_{a_i}|$ jobs
contained in each $I_{a_i}$ that have not collapsed 
and Lazy Activation
schedules all of them.  Thus whenever Lazy Activation
is active in $T_2$, it schedules $B$ jobs per slot.
\qed 
\end{proof}

\subsection{Linear Time Implementation}

\def\Ip.{I$'$}

We conclude by showing that the algorithm can be implemented in time $O(n+T)$.
We start by giving the following equivalent version of  Phase I, which we
refer to as Phase \Ip..  Note that in ``assigning'' jobs to
deadlines, both Phases are only preprocessing deadlines.
The actual schedule is not developed until Phase II.

\bigskip

Initially each new deadline value has no jobs assigned to it.
Process the jobs $j$ in order of decreasing release time $r_j$
(jobs with the same release time can be processed in arbitrary order).
Assign $j$ to a new deadline equal to the largest value
less than or equal to $d_j$ that
currently has $<B$ jobs assigned to it.

\bigskip

To prove that Phase \Ip. computes the same deadlines as
Phase I, assume that both algorithms break ties
for release times the same way.  Let $D_j$ and $D_j'$
denote the deadlines to which job $j$ is assigned by
Phase I and Phase \Ip., respectively.  We show that for 
all $j$, $D_j'$ is equal to $D_j$, by induction
on Phase \Ip.'s iterations.
Clearly this holds for the first iteration, since
the first job considered has maximum release time,
so Phase I never decrements its deadline.  Therefore,
both Phases assign it to its original deadline.  
Now consider some job $j$ and
suppose that equality holds for all jobs
processed it iterations previous to that of job $j$.
Then $D_j \le D_j'$:
$D_j$ is an available candidate deadline for Phase \Ip.,
which chooses the latest such one.   Thus, when Phase I assigns
jobs to deadline $D_j'$, it considers assigning
job $j$ (feasible and unassigned) there as well.  
In fact, Phase I assigns to $D_j'$ the jobs that Phase \Ip.
has already assigned there.  There are less than $B$ of them,
each having release time at least $r_j$.  Among the
remaining jobs which are feasible at $D_j'$,
job $j$ has the maximum release time so Phase I
assigns $j$ to $D_j'$.  Therefore, $D_j = D_j'$, as claimed.

We implement Phase \Ip.
using an algorithm for disjoint set merging on the universe
of possible
deadline values $1,\ldots,T$. Each set consists of a deadline $D$
to which $<B$ jobs have been assigned, plus the
maximal set of consecutive values $D+1,\ldots$
that each have $B$ assigned jobs.
It is easy to see
that using the set merging data structure of
\cite{GTSet}
achieves time
$O(n+T)$ for Phase \Ip..

Phase II constructs an earliest-deadline first
schedule on the timeslots that are Phase I deadlines.
In particular, let $D_1<D_2<\ldots <D_\ell$ be the distinct
deadline values assigned to jobs in Phase I.
Modify release times and deadlines to compress the active interval
from $[0,T]$ to $[0,\ell]$, as follows: a job $j$ with
Phase I deadline $D_k$ gets deadline $k$; its 
release time $r_j$
gets changed to the largest integer $i$ such that
$D_i\le r_j$ (take $D_0=0$).
The algorithm computes an earliest-deadline first schedule
on this new instance. It returns the corresponding
decompressed schedule, with
slot $i$ changed back to $D_i$.

Phase II begins by assigning the new release times
in time $O(n+T)$.
Then it constructs an EDF schedule using
disjoint set merging as in Phase \Ip. \cite{Fred},
achieving time  $O(n+T)$ \cite{GTSet}.

%
%
\subsection{Slotted versus Unslotted Model}
Here we show that 
even if the time model is not slotted, the
fact that the release times and deadlines are integral
implies that without loss, the optimal solution schedules
jobs in slots implied by integer time boundaries.

\begin{theorem}
Without loss of generality, the optimal solution
schedules jobs so that they start and end at integral
time points.
\end{theorem}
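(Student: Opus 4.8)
The plan is to show that an optimal \emph{unslotted} schedule can be converted into one whose start times are all integral without increasing its active time; since every slotted schedule is in particular an unslotted one, this establishes the claim. First I would record the block structure shared by every schedule: the set of active instants is a disjoint union of maximal open intervals (blocks), and because the jobs have unit length, each job lies entirely inside a single block (a job straddling the right end of a block would keep the machine active just past that end, contradicting maximality). Inside a block $(p,q)$ the unit jobs cover $(p,q)$ with multiplicity between $1$ and $B$, the block contributes exactly $q-p$ to the active time, and every job $j$ in it satisfies $r_j\le s_j$ and $s_j+1\le d_j$ with $r_j,d_j$ integral.

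To integralize, among all optimal schedules I would fix the one minimizing the potential $\Phi=\sum_j s_j$; the minimum is attained because the feasible start vectors form a compact set (a box $\prod_j[r_j,d_j-1]$ cut by the closed capacity constraints) and both the active time $\mu(\bigcup_j[s_j,s_j+1])$ and $\Phi$ are continuous. I claim every $s_j$ is then integral. Otherwise pick $J$ with a fractional start; I would derive a contradiction by producing a feasible schedule of the same (optimal) active time but smaller $\Phi$. The natural move is to slide a set $S\ni J$ of jobs left by a small $\epsilon>0$. For the capacity bound to survive, $S$ must at least be closed under \emph{left-blocking}: whenever a job of $S$ starts at a saturated instant (count $=B$ immediately to its left), every job ending there---each of which, by unit length, starts exactly one unit earlier---must also lie in $S$. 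A useful observation is that tracing these unit-spaced predecessors from $J$ can never reach a job pinned at its integral release time, since every job so reached inherits $J$'s nonzero fractional start; hence the predecessors bottom out at jobs that are free to move left, and releases are respected.

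The crux---and where I expect the real difficulty---is to enlarge $S$ so that the leftward slide changes the active time by \emph{exactly} zero. Optimality already guarantees it cannot decrease; the danger is that it increases. Indeed the naive versions fail: sliding a single fractional job, or even one left-blocking chain, can strictly increase the active time. For instance, a job that begins a block at a fractional instant but is overlapped on its right by another job only opens fresh active time on its left when moved, while its right end remains covered. Avoiding this forces $S$ to be a full rigid component: the jobs must move in lockstep so that the active time opened at the free left ends of $S$ is precisely cancelled by the active time released at the now-exposed right ends. Pinning down this active-time-neutral, maximal leftward shift (and proving one exists whenever a fractional start remains) is the technical heart of the argument. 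Granting it, the slide keeps the schedule feasible and optimal while strictly lowering $\Phi$, contradicting minimality; hence the $\Phi$-minimal optimal schedule uses only integral start times, which is the theorem.
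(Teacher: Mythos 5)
Your write-up correctly identifies the central difficulty, but it does not resolve it: the entire content of the theorem is concentrated in the step you introduce with ``Granting it.'' You need to prove that whenever the $\Phi$-minimal optimal schedule has a fractional start, there exists a set $S$ of jobs whose uniform leftward slide is simultaneously feasible, active-time-neutral, and $\Phi$-decreasing. You correctly observe that the naive candidates for $S$ (a single job, a left-blocking chain) fail because sliding can open fresh active time at the left ends of $S$ without releasing any at the right ends, and you correctly observe that the fix must be some ``rigid component'' whose boundary gains and losses cancel --- but you neither define that component precisely nor prove that the cancellation occurs. Since optimality of the schedule is the only thing preventing the slide from strictly increasing active time, and optimality is a global property, there is no reason to expect a purely local definition of $S$ to work; this is a genuine gap, not a routine detail.

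It is worth noting that the paper's proof avoids this difficulty by abandoning the monovariant $\Phi=\sum_j s_j$ (which only permits leftward moves) in favor of inducting on the \emph{number} of non-integral jobs, which allows jobs to be moved in either direction. It examines the last unit interval $[t,t+1)$ in which a non-integral job ends. If some processor is busy throughout $[t,t+1)$, the non-integral jobs ending there are \emph{delayed} to end at $t+1$ --- a rightward move into already-active time, hence free; this is exactly the configuration where your leftward slide is most in danger of opening new active time. Otherwise, \emph{all} non-integral jobs starting after a landmark $i$ (the right end of the last inactive interval of positive length before $t$, or $0$) are advanced by a uniform $\epsilon$, and the active-time accounting is done globally in three pieces: a gain of $\epsilon$ of inactive time in $[t,t+1)$, no net change over the intermediate inactive intervals, and a loss of $\epsilon$ at $[i,i')$ --- with the case $i=0$ yielding a strict improvement that contradicts optimality. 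If you want to salvage your potential-function framework, you would at minimum have to show that the Case-1 configuration cannot occur in a $\Phi$-minimal optimal schedule, which is not obvious and is not argued in your proposal.
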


\begin{proof}
Call a job \textit{non-integral} if it starts (and ends)
at a non-integral time.  Suppose an optimal solution $\mathcal{O}$
has non-integral jobs.  We modify $\mathcal{O}$ to another
optimal schedule $\mathcal{M}$ with fewer such jobs.
Doing this repeatedly eliminates all non-integral jobs.

Let $I$ be the total inactive time of $\mathcal{O}$'s schedule, i.e.,
an optimal schedule maximizes $I$.  In $\mathcal{O}$,
let $[t, t+1)$ be the last interval (for $t$ an integer) 
in which a non-integral job ends.
\newline
\newline
\textit{Case 1: Some processor is busy throughout the interval 
$[t, t+1)$ (e.g., for unit jobs this means some job starts at $t$).}
Construct $\mathcal{M}$ by delaying every non-integral job which 
ends in this slot so that it ends at $t+1$ instead.  $I$ does not decrease,
since every job gets moved into a time period that was already
active in $\mathcal{O}$.  Clearly, the number of non-integral jobs
decreases.
\newline
\newline
\textit{Case 2: No processor is busy during the 
entire interval $[t, t+1)$.}
Let $i$ be the largest integer $< t$ such that an interval
of positive length $[i, i')$ is inactive ($i' > i$).  If no
such interval exists, let $i=0$.

Construct $\mathcal{M}$ by taking every non-integral job that
starts after $i$ and advancing its start time by 
$\epsilon = \min \{ s - \lfloor s \rfloor : s > i 
	\mbox{ the starting time of a non-integral job} \}$.
Obviously $0 < \epsilon < 1$.  Observe that the definition
of $i$ implies than a non-integral jobs advances if and only
if it ends after $i$.

$\mathcal{M}$ is a valid schedule: the choice of $\epsilon$ 
guarantees that no release time is violated.  We claim that
no processor executes more than one job at any point in time, i.e.
no job $j$ advances into the execution of a job $j'$ that does
not advance.  In proof, if $j'$ is integral, then it ends at an
integral time.  The definition of $\epsilon$ shows that $j$ does
not advance past an integral time.  If $j'$ is non-integral,
then it ends at or before $i$.  Again, the definition of 
$\epsilon$ implies that $j$ does not advance past integer time $i$.

To show that $\mathcal{M}$ is optimal, we will show that $I$ does
not change.  Specifically, we show that $I$ increases by $\epsilon$
at the end of the schedule, $I$ does not change in the middle,
and $I$ decreases by $\epsilon$ at the start.

In interval $[t, t+1)$, every job advances by $\epsilon$ (by Case 2),
so $I$ increases by $\epsilon$ in this interval.

Next, consider a maximal interval $[a,b)$ that is inactive in
$\mathcal{O}$ and starts at $a > i$.  We claim that $\mathcal{M}$
is inactive during a corresponding interval $[a', b')$ of the
same or greater length.  The definition of $i$ implies that
$a$ is non-integral, i.e., $a$ is the ending time of a non-integral
job.  This ending time moves to $a' = a - \epsilon$.  If $b$ is
the start time of a non-integral job, that start moves to
$b' = b - \epsilon$.  Clearly, $[a', b')$ is inactive in
$\mathcal{M}$.  The other possibility is that $b$ starts an
integral job.  Setting $b' = b$ gives an inactive interval of
greater length.

If $i=0$, we have shown that $\mathcal{M}$ has more inactive
time than $\mathcal{O}$, a contradiction.  If $i > 0$, the
interval $[i, i')$ shrinks to an interval of inactive time
$[i, i'-\epsilon)$ in $\mathcal{M}$.  This decreases $I$ by
$\epsilon$.  We conclude that $\mathcal{M}$ is an optimal
schedule.  The definition of $\epsilon$ implies that
$\mathcal{M}$ has more integral jobs than $\mathcal{O}$.
\qed
\end{proof}

\section{Unit Jobs with Arbitrary Release Times and Deadlines}
\label{sec:dp}
In this section, we consider a generalization of
the previously discussed problem.
Suppose that the unit length jobs have release times
and deadlines over the reals and that
we want to minimize the number of \textit{batches}
in the schedule, where a batch is a set of jobs which
all start and finish at the same time and where the system
can work on at most one batch at a time.
This objective is slightly more restrictive than active time; 
even so, scheduling unit length jobs with integer
release times and deadlines to minimize active time
is clearly a special case of this, since
in the former problem, the batch property holds without loss.

We describe a simple 
dynamic program which determines in $O(n^8)$ time
the minimum number of batches needed to 
non-preemptively satisfy all jobs, i.e. 
$1|p\mbox{-batch}, B<n, r_i, d_i, p_i=1| K$, 
where $K$ is the number of batches in the schedule.  
The dynamic program (and therefore, the notation) 
is similar to that found in \cite{Baptiste00}, 
on which several DP results in this area of 
scheduling are based.  Suppose throughout that 
jobs are listed in ascending order by deadline, 
i.e. $d_1 \leq d_2 \leq \ldots \leq d_n$.
  
\begin{definition}
  For job $k$ and for interval $[t_\ell, t_r]$ where 
  $r_k \in [t_\ell, t_r]$ and $t_r + 1 \leq d_k$,
  let the set of jobs
  $U_k(t_\ell, t_r) = \{ j \leq k : r_j \in [t_\ell, t_r] \}$ 
  (see Fig.~\ref{DPfig}).  
  Also let $U_0(t_\ell, t_r) = \emptyset$ for all intervals
  $[t_\ell, t_r]$.
\end{definition}

\begin{figure}[t]
\begin{center}

\epsfxsize=5.0in
\mbox{{\epsffile{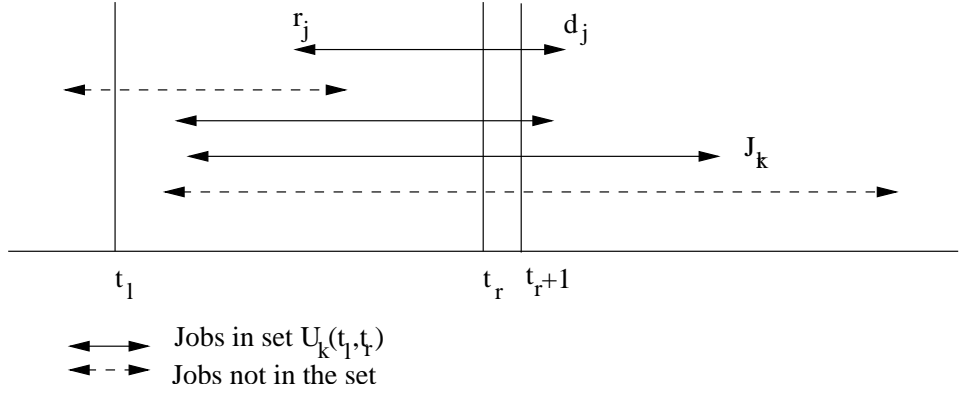}}}

\end{center}
\caption{Jobs of $U_k(t_\ell, t_r)$.}

\label{DPfig}
\end{figure}

Observe that the jobs in $U_k(t_\ell, t_r)$
must be scheduled entirely in $[t_\ell, d_k]$ 
in any feasible schedule.  We restrict ourselves 
without loss to the space of schedules which obey 
the EDF Principle: for any pair of jobs $i$ and $j$ 
where $i < j$, either job $i$ is scheduled before 
or with job $j$, or $r_i$ is after the point at 
which job $j$ is scheduled.  Indeed, if a schedule 
contains a pair of jobs $i$ and $j$ for which this 
does not hold, then one can swap them in the schedule 
without affecting feasibility or the number of batches.

\begin{definition}
  Let $P(k, t_\ell, t_r, \mu_r)$ be the minimum number 
  of batches required to schedule $U_k(t_\ell, t_r)$ such that
  (1) all batches start within $[t_\ell+1, t_r]$ and 
  (2) at most $B-\mu_r$ of these jobs are scheduled in the
      batch starting at $t_r$, and if $\mu_r > 0$, this
      coming at no additional cost.
\end{definition}

If such a schedule does not exist, e.g. $k > 0$ and
$t_\ell+1 > t_r$, then let $P(k, t_\ell, t_r, \mu_r) = \infty$.
Let $P_t(k, t_\ell, t_r, \mu_r)$ be the value
of $P(k, t_\ell, t_r, \mu_r)$ subject to job $k$ 
being scheduled in a batch that starts at time $t$.
Then
\[
	P(k, t_\ell, t_r, \mu_r) = \min_t P_t(k, t_\ell, t_r, \mu_r)
\]
Per Baptiste's observation in \cite{Baptiste}, it is 
enough to iterate over a set of $O(n^2)$ possible starting
times for batches.  For a given $t$, one can compute 
$P_t(k, t_\ell, t_r, \mu_r)$ as follows.  Let 
$L = \{ j \leq k : r_j \in [t_\ell, t] \}$ be the jobs of 
$U_k(t_\ell, t_r)$ which are released before or at $t$.  
Similarly, let $R = \{ j \leq k : r_j \in (t, t_r] \}$ be 
the jobs of $U_k(t_\ell, t_r)$ released after $t$.
The observation above implies that if $i \in L$, then 
$i$ is scheduled before or at $t$.  On the other hand, 
if $i \in R$, then job $i$ must be scheduled after $t$. 
Let $k_L \in \arg \max_{i \in L\backslash \{k\}} d_i$, or 0 if
$L \backslash \{k\} = \emptyset$.
Similarly, let $k_R \in \arg \max_{i \in R} d_i$, or
0 if $R = \emptyset$.

If $t = t_r$, then $L = U_k(t_\ell, t_r)$, $R$ is empty 
and $k_L$ (if positive) is the second-to-latest
deadline in $U_k(t_\ell, t_r)$.  Then,
\begin{equation}
  P_t(k, t_\ell, t_r, \mu_r) = \left\{ 
    \begin{array}{l l}
      P(k_L, t_\ell, t_r, \mu_r+1) & \quad \text{if $0 < \mu_r < B$} \\
      1 + P(k_L, t_\ell, t_r, 1) & \quad \text{if $\mu_r = 0$} \\
      \infty & \quad \text{otherwise} \\
    \end{array} \right.
\end{equation}
\noindent
Consider $t < t_r$.  If $\mu_r > 0$, then for 
$t \in (t_r-1, t_r)$, $P_t(k, t_\ell, t_r, \mu_r) = \infty$, 
since starting batches at $t$ and at $t_r$ will 
violate the constraint that at any given time, 
there is at most one batch running.  Otherwise, 
i.e. if $t < t_r-1$ or $\mu_r = 0$,
\[
P_t(k, t_\ell, t_r, \mu_r) = 1 + 
  P(k_L, t_\ell, t, 1 ) + P(k_R, t, t_r, \mu_r)
\]

Finding the optimal number of batches is equivalent 
to computing $P(n, \min_i r_i - p, d_n - 1, 0)$.
Since there are $O(n)$ jobs and $O(n^2)$ interesting 
times to consider, the total running time is $O(n^7B) = O(n^8)$.
The running time was recently improved to $O(n^3)$
in \cite{Frederick}.

\section{Maximizing Throughput Subject to Batch Size Constraint}
\label{sec:dp2}
\newcommand{\Kscr}{\mathcal{K}}

For infeasible instances, one can apply a 
technique of the same vein to maximize throughput 
subject to an upper bound $\Kscr$ on the number of batches.
Define $P(k, t_\ell, t_r, \mu_r, \kappa)$ 
to be the maximum number of jobs of $U_k(t_\ell, t_r)$ 
which can be scheduled in at most $\kappa$ 
batches starting in $[t_\ell+1, t_r]$, 
where as before, if $\mu_r > 0$, one can schedule up 
to $B-\mu_r$ jobs in the batch starting at 
$t_r$ without incurring cost toward the 
batch budget $\kappa$.

To understand the computation of 
$P(k, t_\ell, t_r, \mu_r, \kappa)$, 
let $t$ be the starting time of the batch which 
satisfies job $k$ (if such a time exists).  
Consider where the rest of $U_k(t_\ell, t_r)$
may be satisfied within $[t_\ell + 1, t_r]$.
Denote by $\alpha$ ($\beta$, respectively) 
the number of them which start in
$[t_\ell+1, t]$ ($(t, t_r]$, respectively).
Observe that $\beta \leq \kappa-\alpha$.  

There are four cases.
Suppose job $k$ is not satisfied in the 
optimal solution.  Then $P(k, t_\ell, t_r, \mu_r, 
\kappa) = P(k', t_\ell, t_r, \mu_r, \kappa)$
where $k' = \arg \max_{i \in U_k(t_\ell, t_r) 
\backslash \{ k \}} d_i$, i.e. the job of next 
latest deadline in $U_k(t_\ell, t_r)$.  (If 
$U_k(t_\ell, t_r) = \emptyset$, then $k' = 0$.)
If job $k$ is satisfied in the schedule, then 
define $L, R, k_L$ and $k_R$ as in Section 3.  
Suppose that job $k$ is satisfied in the 
batch starting at $t=t_r$.  Then
\begin{equation}
P(k, t_\ell, t_r, \mu_r, \kappa) = \left\{ 
  \begin{array}{l l}
    1+P(k_L, t_\ell, t_r, 1, \kappa-1) & \quad \text{if $\mu_r = 0$} \\
    1+P(k_L, t_\ell, t_r, \mu_r+1, \kappa) & \quad \text{if $0 < \mu_r < B$} \\
    -\infty & \quad \text{otherwise} \\
  \end{array} \right.
\end{equation}

If job $k$ is satisfied in a batch which starts 
at $t \leq t_r - 1$, or if $\mu_r = 0$ and $t \leq t_r$, 
then let $\alpha$ be the number of batches used to 
satisfy jobs in $L$; this includes the batch satisfying 
job $k$, so $\alpha > 0$.  Then $\beta = \kappa - \alpha$ 
is the budget on the number of batches used to satisfy $R$.  
Note that for the case where $\mu_r = 0$, if $R$ is 
not empty, then any $\alpha$ corresponding to a 
feasible schedule will be strictly less than $\kappa$, 
i.e. $\beta > 0$.

\begin{equation}
P(k, t_\ell, t_r, \mu_r, \kappa) = \left\{ 
  \begin{array}{l l}
    \max_{0 < \alpha \leq \kappa} (1 + P(k_L, t_\ell, t, 1, \alpha-1) 
      & \quad \text{if $0 \leq \mu_r < B$} \\
    \quad \quad + P(k_R, t, t_r, \mu_r, \kappa - \alpha)) & \quad \\
    P(k, t_\ell, t_r - 1, 0, \kappa )& \quad \text{otherwise} \\
  \end{array} \right.
\end{equation}

Finally, if $\mu_r > 0$ and $t \in (t_r-1, t_r)$, then
$P(k, t_\ell, t_r, \mu_r, \kappa) = -\infty$, since
the machine cannot work on multiple batches at the same
time.  $P(k, t_\ell, t_r, \mu_r, \kappa)$ is the maximum over 
these values.  Since there are  $O(n)$ possible
values for $k$, $O(n^2)$ possible values for both 
$t_r$ and $t_\ell$, and $O(n)$ possible  values
for $\mu_r$, and since $\kappa = O(n)$, there 
are $O(n^7)$ possible values of 
$P(k, t_\ell, t_r, \mu_r, \kappa)$ that need to be computed.  
Computing each one costs $O(n)$, yielding a total 
running time of $O(n^7 \mathcal{K})$.
We note that if the time model is slotted, then there 
are $O(n)$ possible values for both $t_r$ and $t_\ell$, 
and the total running time is instead $O(n^5 \mathcal{K})$.  
This algorithm can be extended to the case in which 
each job $J_k$ has an associated profit $v_k$ and
the goal is to schedule a profit-maximizing set of 
jobs in $\mathcal{K}$ slots.

\section{Disjoint Collection of Windows}
\label{sec:disjoint}
In this section, we discuss the more general problem 
in which each job's feasible regions $T_i$ are arbitrary
(as opposed to being a single interval).
For any fixed $B \geq 3$, we show that the problem
is $NP$-hard and discuss its relationship to other
classic covering problems.  
We also develop an efficient algorithm for 
the two processor case ($B=2$).

\subsection{Connections to Other Problems}
{\bf Capacitated Vertex Cover.}
There is an interesting relationship between the activation
problem of unit-length jobs with multiple feasible
windows and the problem of vertex cover with capacity 
constraints.  In the latter problem, we are given 
a graph $G=(V,E)$ and a capacity function $k(v)$ for 
each vertex.  The goal is to pick a subset $S$ of vertices 
and to determine an assignment of edges to vertices in $S$, 
so that each edge is assigned to an incident vertex in $S$, 
and so that each vertex $v$ has at most $k(v)$ adjacent 
edges assigned to it.  When multiple copies of a vertex 
may be chosen to be part of the cover, a primal-dual 
2-approximation is given in \cite{GHKO}.  In the hard 
capacities version of the problem (VCHC), a bounded 
number of copies of a vertex may be selected.  There 
is a 2-approximation using LP rounding \cite{GHKKS}, 
improving the previous bound of 3 given in \cite{CN}.

In the special case of the activation problem in which
each job has exactly two feasible time slots, there
is an equivalence with VCHC with uniform capacities $k(v) = B$.
Since jobs need not have adjacent feasible time slots, 
one can view time as a set of
slots rather than an ordering of slots.  Then, the
equivalence follows: the slots are the vertices and
the jobs are the edges.  For any valid capacity-respecting
vertex cover of size $C$, there exists a corresponding
activation schedule of cost $C$ (and vice-versa).
One implication of our result 
is that we can solve VCHC optimally when $k(v) = 2$.  
Furthermore, the 2-approximation for VCHC in \cite{GHKKS}
applies to the activation problem when each 
job has two feasible slots.
%
%
There is a similar relationship between VCHC on hypergraphs 
and the activation problem where jobs may have 
more than 2 feasible slots, and for size $g$ hyper edges,
an $O(g)$ approximation has recently been developed
\cite{Barna}. 
\newline
\newline
\noindent
{\bf Capacitated $K$-center.}
Another previously studied problem is the $K$-center 
problem with load capacity constraints \cite{BKP,KS}.  
Given an edge weighted graph satisfying the metric property, 
the goal is to pick $K$ nodes (called centers) and assign 
each vertex to a chosen center that is ``close'' to  it. We 
should not assign more than $L$ nodes to any chosen center 
and want to minimize the value $d_{\max}$ such that each 
node is assigned to a center within distance $d_{\max}$ of it.  
Previous work culminated in a 5-approximation for the 
problem where multiple copies of the same node may be chosen
as a center. If only one copy can be chosen then the bound 
goes up to 6.

There is the following correspondence between this and the 
following instance of the activation problem.  First, guess 
the maximum value $d_{\max}$ and create the (unweighted) 
graph $G^{d_{max}}$ induced by edges having weight at 
most $d_{max}$.  Then create a job $J_i$ and a timeslot 
$t_i$ for each node $i$ in the original graph $G$, and 
let job $J_i$ be feasible in slot $t_j$ if and only if 
node $i$ is within distance $d_{max}$ of node (or center) 
$j$ in $G$, i.e. if edge $(i,j)$ exists in $G^{d_{max}}$.  
(So $t_i$ should be a feasible slot for $J_i$.)  Then, 
finding $K$ centers to which all $n$ vertices can be assigned
in $G^{d_{max}}$ is equivalent to opening $K$ timeslots
and feasibly scheduling all $n$ jobs in them so that no 
slot is assigned more than $B$ jobs, where $B=L$.
Since we can solve the activation problem when $B=2$, we
can optimally solve the $K$-center problem when $L=2$.
\newline
\newline
\noindent
{\bf Capacitated Facility Location.}
Given a set of facilities (with capacities) 
and clients, the goal is to
open a subset of facilities and find
a capacity-respecting
assignment of clients to open facilities that 
minimizes the sum of facility opening costs and 
connection costs.  See \cite{CW,LSS}
for approximation algorithms for this problem. 
Here, we show that when capacities
are uniformly 2, the problem can be solved optimally
in polynomial time.  For clarity's sake,
the details of the connection are described 
at the end of Section~\ref{sub:poly_disjoint}.

\subsection{Proof of $NP$-hardness for B=3}
We prove that the activation problem for jobs of
arbitrary $T_i$ is $NP$-hard when $B=3$ via a reduction from 
3 EXACT COVER, which is known to be $NP$-hard \cite{GJ}.
Given a collection $X$ of $n$ elements and a 
collection of subsets $S_1, \ldots,
S_m$, each containing exactly three elements from $X$. Is there a 
sub-collection of exactly $\frac{n}{3}$ subsets that exactly cover all
of $X$? We can view this problem as a bipartite graph where one side
has the elements $X$ and the other side has a vertex for each
subset $S_i$, and edges denote membership. This problem maps to
the question of finding a dominating set of size exactly $\frac{n}{3}$
where we are only allowed to select from the side containing subsets.  
The relationship with the scheduling problem is now 
obvious: selecting a subset is akin to activating a 
certain time slot where  the set $X$ corresponds to a collection of jobs.
The edges specify in which time slots a job 
can be scheduled. Since the subsets
have size exactly 3, there is an exact cover of size 
$\frac{n}{3}$ if and only if there
is a schedule with exactly $\frac{n}{3}$ active slots.

We observe that there is an 
$O(\log n)$ approximation for this problem (for any $B$) by an easy
reduction to a submodular cover problem. This result follows from
the classical covering results due to Wolsey \cite{Wolsey}.

\def\case #1{\bigskip\noindent{\it Case #1}: }
\def\set #1#2{\{ #1:#2 \}}

\subsection{Polynomial solution for B=2}
\label{sub:poly_disjoint}

This subsection  considers the problem of scheduling jobs
in the fewest number of active slots, when there are two
processors and each job can be scheduled
in a specified subset of time slots. 
We use
the following graph $G$.
The vertex set is $J\cup T$, where $J$ is the set of all jobs
and $T$ the set of all time slots. The edge set is
\[
\set{(j,t)}{\mbox{job $j$ can be scheduled in time slot $t$}} \cup
\set {(t,t)} {t\in T}.
\]
A degree-constrained subgraph (DCS) problem is defined on $G$
by 
the degree constraints $d(j)\le 1$ for every $j\in J$ and
$d(t)\le 2$ for every $t\in T$.
By definition a loop $(t,t)$ contributes two to the degree of $t$.

Any DCS gives a schedule, and any schedule gives a DCS
that contains a loop in every nonactive time slot.  
Also any
DCS $D$ that covers $\iota$ jobs  and contains $\lambda$ loops
has cardinality 
\begin{equation}
\label{DEqn}
|D|=\iota+\lambda.
\end{equation}

\begin{lemma}
\label{MaxCardLemma}
A maximum cardinality DCS $D$ of $G$
minimizes the number of active slots used by any schedule of
 $|V(D)\cap J|$ jobs.
\end{lemma}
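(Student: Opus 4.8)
The plan is to turn the cardinality identity \eqref{DEqn}, namely $|D| = \iota + \lambda$, into a statement purely about active slots by observing that in a \emph{maximum} DCS the loop count $\lambda$ is forced to equal the number of non-active slots. Since $\iota = |V(D)\cap J|$ is exactly the number of scheduled jobs, among DCS's that schedule a fixed number of jobs, maximizing $|D|$ is the same as maximizing $\lambda$, i.e. minimizing the number of active slots. Everything then reduces to a one-line comparison of cardinalities.

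First I would fix a maximum cardinality DCS $D$, write $\iota = |V(D)\cap J|$ for the number of jobs it covers and $\lambda$ for its number of loops, so that $|D| = \iota + \lambda$. Reading $D$ as a schedule, a slot is active precisely when it carries at least one job edge. The key step is to show that every non-active slot of $D$ carries a loop: if some slot $t$ carried neither a job edge nor a loop, then $d(t)=0$, and adjoining the loop $(t,t)$ would give a DCS of strictly larger cardinality, contradicting maximality. Conversely an active slot already has degree $\ge 1$ from jobs, so it cannot also carry a loop without violating $d(t)\le 2$. Hence the loops of $D$ sit exactly on its non-active slots, and writing $a$ for the number of active slots and $|T|$ for the total number of slots gives $\lambda = |T| - a$, so that $|D| = \iota + |T| - a$.

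Next I would take any competing schedule $S$ that schedules $\iota$ jobs using $a'$ active slots and invoke the correspondence already established in the excerpt: $S$ yields a DCS $D'$ containing a loop in every non-active slot. That DCS covers $\iota$ jobs and has $|T| - a'$ loops, so $|D'| = \iota + |T| - a'$. Maximality of $D$ gives $|D| \ge |D'|$, which simplifies to $\iota + |T| - a \ge \iota + |T| - a'$, i.e. $a \le a'$. Since $S$ was an arbitrary schedule of $\iota$ jobs, no such schedule uses fewer active slots than $D$, which is the claim.

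The argument is short because the two facts cited from the excerpt — the identity $|D| = \iota + \lambda$ and the schedule-to-DCS correspondence that loops every idle slot — do most of the work. The one point needing genuine care, and the only place maximality is truly exploited, is the loop-saturation step: showing that a maximum DCS places a loop on every non-active slot while no active slot can host one. This is precisely what decouples the job count from the loop count and converts the cardinality bound into a statement about active slots, so that is where I would concentrate the write-up.
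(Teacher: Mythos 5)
Your proof is correct and follows essentially the same route as the paper's: both rest on the identity $|D|=\iota+\lambda$, the observation that a maximum DCS carries a loop exactly on its non-active slots, and the comparison with the DCS induced by a competing schedule of $\iota$ jobs. You merely spell out the loop-saturation step that the paper states without elaboration.
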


\begin{proof}
Let $D$ be
a maximum cardinality DCS.  $D$ 
covers $\iota=|V(D)\cap J|$ jobs. (\ref{DEqn}) shows
no schedule of $\iota$ jobs contains more loops that $D$.
 $D$ contains the loop $(t,t)$ precisely
when $t$ is not active. 
Thus $D$ minimizes the number of active slots for schedules of $\iota$ jobs.
 \qed
\end{proof}

To state the time bounds of this section,
let $n$ be the number of jobs and $m$
the number of given
pairs $(j,t)$ where job $j$ can be scheduled at time
$t$. Observe that $G$ has $O(m)$ vertices and $O(m)$ edges,
since we can assume every time slot $t$ is on some edge $(j,t)$.
In any graph, the majority of edges in a simple non-trivial
path are not loops. So in the algorithms for maximum cardinality 
DCS and maximum matching, an augmenting path has length $O(n)$.
A maximum cardinality DCS on $G$ can be found in time $O(\sqrt n m)$.
The approach is through non-bipartite matching.
We describe two ways to handle the loops. 

The first approach reduces the problem to maximum cardinality
matching in a graph called the $MG$ graph. 
To do this modify $G$, replacing each time slot vertex
$t$ by two vertices $t_1,t_2$. 
Replace each edge $(j,t)$ by two edges
from $j$ to the two replacement vertices for $t$. Finally replace each loop
$(t,t)$ by an edge joining the two corresponding replacement vertices. 

A DCS $D$ corresponds to a matching $M$ in a natural way:
If a time slot $t$ is active in $D$ then $M$
matches corresponding edges of the form  $(j,t_i)$. 
If the loop $(t,t)$ is in $D$ then $M$ matches 
the replacement edge $(t_1,t_2)$.
Thus it is easy to see that a maximum cardinality DCS corresponds to a maximum
cardinality matching of the same size.
The cardinality matching algorithm of Micali and Vazirani 
\cite{MV,GTSet} gives  the desired time bound.
(This approach works for the versions of the problem that we will consider
on unit jobs; it does not work when the jobs have longer length since
a pair of edges $(j,t_1)$, $(j,t_2)$ might get matched.)

A second approach is to use an algorithm for maximum cardinality
DCS on general graphs. If such an algorithm does not handle loops
directly, modify $G$ as follows by replacing each loop $(t,t)$ by
a triangle $(t,t_1,t_2,t)$, where each $t_i$ is a new vertex with
degree constraint $d(t_i)=1$. A DCS in $G$ corresponds to a DCS
in the new graph that contains exactly
$|T|$ more edges, one from each triangle.
The cardinality algorithm of Gabow and Tarjan \cite{GT} gives the desired time bound.

Now let $\iota^*$ be the greatest number of jobs that can be scheduled.
Clearly $\iota^*$ can be computed in time $O(\sqrt n m)$ by finding
a maximum cardinality DCS on the
bipartite graph formed by removing all loops from $G$ \cite{ET}.

\begin{theorem}
\label{I*Theorem}
A schedule for the greatest possible number of jobs ($\iota^*$)
that minimizes
the number of active slots can be found in time $O(\sqrt n m)$.
\end{theorem}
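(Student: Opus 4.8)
The plan is to produce a maximum-cardinality DCS of $G$ that, in addition, covers the maximum possible number of jobs $\iota^*$; Lemma~\ref{MaxCardLemma} then guarantees that the associated schedule minimizes the number of active slots among all schedules of $\iota^*$ jobs, which is exactly what the theorem asks for. First I would compute a maximum-cardinality DCS $D$ of $G$ in $O(\sqrt n m)$ time by the method described above, together with the value $\iota^*$, which is available in the same time bound. If $D$ already covers $\iota^*$ jobs we are finished, so assume it covers $\iota(D)<\iota^*$ jobs. The point is that a maximum-cardinality DCS only maximizes $\iota+\lambda$, and by~(\ref{DEqn}) a loop and a single job contribute equally to $|D|$; thus the algorithm is free to return a DCS that pays for loops instead of jobs, and such a $D$ need not be a throughput-optimal schedule.

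The heart of the argument is to convert $D$ into an equally large DCS that covers strictly more jobs. Discard the loops of $D$ and regard the remaining job-to-slot edges as a bipartite degree-constrained subgraph $M$ with $d(j)\le 1$ and $d(t)\le 2$; since $M$ covers $\iota(D)<\iota^*$ jobs, an augmenting path with respect to $M$ exists and ends at a slot $t_{\mathrm{end}}$ holding fewer than two jobs. I claim that in the maximum-cardinality DCS $D$ this terminal slot must carry a loop. Indeed, if $t_{\mathrm{end}}$ carried no loop, whether it held one job or none, then augmenting the path would raise $|D|$ by one, contradicting maximality; hence $t_{\mathrm{end}}$ is empty and carries the loop $(t_{\mathrm{end}},t_{\mathrm{end}})$. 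Deleting this loop and augmenting the path trades one loop for one job edge, so by~(\ref{DEqn}) the cardinality is unchanged while $\iota$ rises by one, and every interior slot of the path keeps its degree and therefore stays active, leaving all other loops untouched.

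Iterating this swap $\iota^*-\iota(D)$ times yields a DCS $D'$ with $|D'|=|D|$ and $\iota(D')=\iota^*$; being of maximum cardinality, $D'$ minimizes the number of active slots over all schedules of $\iota^*$ jobs by Lemma~\ref{MaxCardLemma}, and the schedule it encodes (each covered job at its slot, nonactive slots being exactly the looped ones) is the claimed optimum. For the running time I would not perform the swaps one at a time; instead, after computing $D$ I would run a single phased bipartite $b$-matching completion \cite{ET} on the job--slot graph starting from $M$, releasing the loop at the terminal slot of each augmenting path as that path is used. Because the vertex-disjoint paths of a phase have distinct terminal slots, each frees a distinct loop and the per-path accounting above applies verbatim; completing a bipartite $b$-matching to maximum cardinality costs $O(\sqrt n m)$, so together with the two $O(\sqrt n m)$ preprocessing computations the total time is $O(\sqrt n m)$.

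The main obstacle is precisely the cardinality-preservation claim: one must verify that \emph{every} job-augmenting path leaving a maximum-cardinality DCS terminates at a loop-bearing slot, so that raising coverage never forces a net loss of edges. This is what pins the maximum cardinality $|D|$ to the value attained by the throughput-optimal, active-slot-minimal schedule, and it is the step that lets Lemma~\ref{MaxCardLemma} be invoked at $\iota=\iota^*$ rather than merely at the possibly smaller $\iota(D)$. A secondary point to check carefully is that the phased completion genuinely realizes the one-loop-per-path bookkeeping, i.e.\ that interior slots of an augmenting path always have positive job-degree (hence no loop) and so cannot lose a loop inadvertently.
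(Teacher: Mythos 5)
Your proof is correct, but it runs in the opposite direction from the paper's. The paper first computes a throughput-optimal bipartite DCS $D_0$ (ignoring loops), then feeds $D_0$ as a warm start to the general maximum-cardinality DCS algorithm on $G$; correctness is then a one-line observation that augmenting paths never decrease any vertex's degree, so the jobs covered by $D_0$ stay covered while the loop count is driven to its maximum. You instead start from an arbitrary maximum-cardinality DCS $D$ and repair its throughput afterwards, via the exchange that trades the loop at the terminal slot of a job-augmenting path for the path itself. The crux of your argument --- that the terminal slot of any such path must carry a loop, since otherwise $|D|$ could be increased --- is sound, and it correctly preserves $|D|=\iota+\lambda$ while incrementing $\iota$; your inductive maintenance of maximum cardinality across swaps, and the vertex-disjointness argument for doing the swaps phase by phase within the Hopcroft--Karp/Even--Tarjan bound, both check out. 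What the paper's ordering buys is simplicity: all the work is delegated to the off-the-shelf DCS algorithm and no bespoke exchange or loop-release bookkeeping is needed. What your ordering buys is a more explicit structural picture --- it shows directly that any maximum-cardinality DCS can be rebalanced, loop by loop, into one that also maximizes throughput, making the role of the identity $|D|=\iota+\lambda$ transparent --- at the cost of the extra care you yourself flag regarding interior slots and the phased implementation.
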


\begin{proof}
The algorithm first
finds a DCS $D_0$ of $\iota^*$ jobs, as described above. It converts $D_0$
 to the desired DCS by using an algorithm to find a maximum cardinality
DCS on $G$, using $D_0$ as the initial DCS. 

The correctness of this approach follows from the fact that
the DCS algorithm
works by augmenting paths. This implies that as the initial DCS
$D_0$ is enlarged to the final DCS, no vertex's degree decreases.
Thus the final solution schedules the same jobs as $D_0$.
Now apply Lemma \ref{MaxCardLemma}.

Using $D_0$ as the initial DCS does not affect the time bound, since
the algorithm begins by finding augmenting paths of length 1,
i.e., an arbitrary set of edges
satisfying the degree constraints \cite{GT,MV}.
\qed
\end{proof}

The proof shows that
the choice of $\iota^*$ jobs is irrelevant --
the minimum number of active slots for a schedule of $\iota^*$ jobs
can be achieved using {\em any} set of $\iota^*$ jobs that can be scheduled.

We also note that matching is a natural tool for our power minimization
problem:
Given a maximum cardinality matching problem, we create a
job for every vertex and if two vertices are adjacent, we create a common time
slot in which they can be scheduled. A maximum cardinality matching 
corresponds to a schedule with minimum active time. 
Thus if $T(m)$ 
is the optimal possible time for our scheduling problem, a maximum cardinality
matching on a graph of $m$ edges can be found in $O(T(m))$ time.
For bipartite graphs
the number of time slots can be reduced from $m$
to $n$. Thus a maximum cardinality matching
on a bipartite graph of $n$ vertices and $m$ edges
can be found in $O( T(n,m) )$ time, for
$T(n,m)$ the optimal time to
find a minimum active time schedule
for $O(n)$ jobs, $O(n)$ time slots,
and $m$ pairs $(j,t)$.

Our algorithm can be extended to other  versions of the
power minimization problem. For example the following corollary models
a situation where 
power is limited. 

\begin{corollary}
For any given integer $\alpha$,
a schedule for the greatest possible
number of jobs using at most $\alpha$ active slots
can be found in time $O(\sqrt n m)$.
\end{corollary}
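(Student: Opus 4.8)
The plan is to stay inside the degree-constrained subgraph (DCS) framework of this section and to encode the power budget into the graph, so that a single maximum-cardinality DCS computation, run exactly as in Theorem~\ref{I*Theorem}, produces the answer. The starting point is the accounting already established: by (\ref{DEqn}) a DCS $D$ covering $\iota$ jobs and containing $\lambda$ loops has $|D|=\iota+\lambda$, and (as in the proof of Lemma~\ref{MaxCardLemma}) in a maximum-cardinality DCS every idle slot carries a loop, so the number of active slots is exactly $|T|-\lambda$. Hence the constraint ``at most $\alpha$ active slots'' is equivalent to ``at least $|T|-\alpha$ loops,'' and the corollary asks precisely for a DCS maximizing $\iota$ subject to $\lambda\ge |T|-\alpha$.

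First I would build an augmented graph $G_\alpha$ from $G$ by attaching a single budget vertex $s$ to the slot side, with degree bound $d(s)\le\alpha$ and a small per-slot gadget, engineered so that (i) placing any job in a slot forces exactly one unit of $s$'s capacity to be spent, while (ii) leaving a slot idle spends none. Since $s$ supplies only $\alpha$ units, every DCS of $G_\alpha$ activates at most $\alpha$ slots; conversely any budget-feasible schedule extends to a DCS of $G_\alpha$ of the corresponding size. If the gadget is arranged so that it contributes the \emph{same} additive amount to $|D|$ on every slot, then maximizing $|D|$ over $G_\alpha$ maximizes $\iota$ among budget-feasible solutions, and reading off the active slots as those carrying no loop recovers the desired schedule.

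For the running time I would invoke the maximum-cardinality DCS routine of Theorem~\ref{I*Theorem} verbatim: warm-start from an arbitrary degree-feasible initial subgraph and augment, so that, since augmenting paths never decrease any vertex's degree, the final DCS is genuinely maximum and its loop pattern is the claimed optimal schedule. Because the gadget only adds $O(|T|)=O(m)$ vertices and edges, the asymptotics are unchanged and the whole computation runs in $O(\sqrt n\,m)$ time.

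The hard part will be getting the gadget in (i)--(ii) exactly right, because the charge must be a true \emph{fixed} charge: a slot holding one job and a slot holding two jobs must both consume precisely one unit of $s$. This is genuinely delicate -- a naive encoding fails on singly-occupied slots, whose leftover unit of capacity can locally absorb the gadget and thereby dodge the budget charge, and indeed the natural LP relaxation has an integrality gap of $2$ (set $y_t=\tfrac12$ on $2\alpha$ singleton slots), so no flow- or bipartite-matching-style argument can succeed. The resolution I expect to need is a gadget that makes activation consume the slot's capacity in an all-or-nothing way, mirroring the loop trick in which a single edge accounts for two units of degree, so that a token is forced the instant a slot carries any job; enforcing this is exactly where the odd (non-bipartite) structure of the DCS must do the work. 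Once that local correspondence is verified -- namely that the gadget simultaneously guarantees $\lambda\ge|T|-\alpha$ and leaves $|D|$ equal to $\iota$ plus a constant independent of which slots are active -- the global optimality and the $O(\sqrt n\,m)$ bound follow immediately from Lemma~\ref{MaxCardLemma} and Theorem~\ref{I*Theorem}.
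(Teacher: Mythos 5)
Your plan hinges entirely on a gadget you never construct, and that gadget is precisely the hard part. You correctly identify the requirement -- a slot with one job and a slot with two jobs must each be forced to consume exactly one unit of the budget vertex $s$ -- and you correctly observe that the naive encoding fails because a singly-occupied slot has one unit of leftover capacity while a doubly-occupied slot has none, so no edge incident to the slot vertex itself can levy a uniform charge. But you then defer the construction to ``the odd structure of the DCS,'' without exhibiting it or arguing it exists. Nothing in the DCS framework \emph{forces} an active slot to take a budget edge: a DCS in which a slot holds one job and skips the gadget is perfectly degree-feasible, so the reduction must rule it out by a cardinality argument you have not supplied. There is also a secondary soundness issue in your claim that maximizing $|D|$ over $G_\alpha$ maximizes throughput: by (\ref{DEqn}) cardinality is $\iota+\lambda$ plus gadget contributions, and as the paper's own two-phase algorithm (compute $\iota^*$ first, then warm-start) shows, raw cardinality maximization can trade scheduled jobs for loops; your ``arbitrary degree-feasible warm start'' does not pin $\iota$ to the right value, and you would need the gadget to equalize the per-slot contribution of active and idle slots exactly for this to come out right. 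As written, the proof does not go through.

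The paper's actual argument avoids all of this machinery. It takes the unconstrained optimum $D^*$ from Theorem~\ref{I*Theorem}, writes $\iota^*=\tau_1+2\tau_2$ with $\tau_i$ the number of slots holding $i$ jobs and $\alpha=\tau_1+\tau_2-\Delta$, and simply unschedules jobs from the least-loaded active slots: first the $\Delta$ (or all $\tau_1$) singleton slots, then, if needed, $\Delta-\tau_1$ of the doubly-occupied ones. Each unscheduled singleton slot converts one job into one loop, so $|D|=|D^*|$ is preserved; since fixing the number of active slots fixes $\lambda$, equation (\ref{DEqn}) immediately gives that the result schedules the greatest possible number of jobs within the budget. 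This is an $O(1)$-per-slot post-processing step on top of the existing $O(\sqrt n\, m)$ computation, and it yields the answer for every value of $\alpha$ simultaneously. I would recommend abandoning the gadget route unless you can actually write down the construction and verify properties (i)--(ii); the exchange argument is both simpler and already complete.
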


\begin{proof}
Start by constructing the DCS $D^*$ of
Theorem \ref{I*Theorem} that
schedules
$\iota^*$ jobs and contains say $\lambda^*$ loops.
Let $\iota^*= \tau_1+2\tau_2$, where $D^*$ schedules $\tau_i$ time slots
with $i$ jobs, $i=1,2$. Let $\alpha= \tau_1+\tau_2 -\Delta$.
We consider the following cases for $\Delta$. 

\case {$\Delta\le 0$} $D^*$ has $\le \alpha$ active slots. 

\case {$0<\Delta\le \tau_1$} Choose any $\Delta$ time slots scheduling
just one job and unschedule those jobs. This gives a schedule $D$ with
$\alpha$ active slots. As a DCS $D$ has $\iota^*-\Delta$ jobs
and $\lambda^*+\Delta$ loops.
(\ref{DEqn}) shows
$|D|=|D^*|$. Since fixing the number of active slots fixes $\lambda$,
(\ref{DEqn}) also shows
$D$ schedules the greatest possible number of jobs.

\case {$\tau_1<\Delta$} In $D^*$, unschedule  the jobs that are
scheduled in the $\tau_1$ time slots with one job, or scheduled in
$\Delta-\tau_1$ other time slots (chosen arbitrarily from the $\tau_2$
slots with two jobs).
The result is a schedule $D$ with $\alpha $ active slots, each processing
two jobs. $D$ obviously schedules the greatest possible number of jobs.
\qed\end{proof}

The proof shows that the DCS $D^*$ is easily turned into a table
(of $\tau_1+\tau_2$ entries) that gives the desired schedule for every
value of $\alpha$.
\newline
\newline
\noindent
%
%
Now we highlight the connection to capacitated facility location
in which capacites are two.  Create
a weighted graph $MG_w$,
where jobs correspond to clients
and a facility corresponds to a slot, which is a
pair of nodes $(t_1, t_2)$ in the graph. 
Add edges $(j,t_1)$ and $(j,t_2)$ with 
weights equivalent to the
cost of connecting client $j$ to facility $t$.  Also
put an edge between every pair
of slot nodes $t_1$ and $t_2$, with weight
$-C(t)$ where $C(t)$ is the cost of opening facility $t$.
Then, finding a minimum cost solution to the facility
location problem amounts to determining a minimum cost
matching in $MG_w$ in which each job is scheduled, 
since the former cost is simply the latter plus an
additive term $\sum_t C(t)$.
The matching can be achieved by generalizing our algorithm
to the weighted case in a natural way.

\section{Preemptive Scheduling for Integral Length Jobs}
\label{sec:nonunit}

This section discusses preemptive scheduling (on the 
integer time points only) for $B=2$. Now
each job $j$ has an arbitrary integral length $\ell(j)$,
and a set $T_j$ of time slots in which
one unit of its length can be executed. 
We wish to assign
each job $j$ to exactly  $\ell(j)$ of these time slots.
again minimizing the 
number of active time slots. We state several results for this model
and sketch their proofs.

\begin{theorem}
If a schedule executing 
every job to completion exists,
such a schedule
minimizing the number of active time slots
can be found 
in time $O(\sqrt L m)$ for $L=\sum_j \ell(j)$.
\end{theorem}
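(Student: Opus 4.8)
The plan is to reuse the degree-constrained-subgraph (DCS) machinery of Section~\ref{sub:poly_disjoint}, reweighting the job side to account for length. I would build the graph $G$ with vertex set $J\cup T$, an edge $(j,t)$ for every slot $t\in T_j$, and a loop $(t,t)$ at every slot, but now impose the degree constraints $d(j)\le \ell(j)$ for each job and $d(t)\le 2$ for each slot, a loop still counting two toward $d(t)$. A DCS that saturates every job (i.e.\ $d(j)=\ell(j)$ for all $j$) is exactly a complete schedule: the $\ell(j)$ edges at $j$ name the $\ell(j)$ slots in which $j$ runs, no slot carries more than two job-units, and, as before, a loop occupies a slot precisely when that slot is idle. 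Since a single edge $(j,t)$ can be used at most once, no slot ever receives two units of the same job, so the one-unit-per-slot requirement is automatic. If all jobs are saturated then $|D|=L+\lambda$, the exact analogue of (\ref{DEqn}) with $\iota=L$; hence, over complete schedules, maximizing $|D|$ is the same as maximizing the number of loops $\lambda$, which (as every job-empty slot carries a loop in a maximum DCS) is the same as minimizing the number of active slots. This is the length-weighted version of Lemma~\ref{MaxCardLemma}.

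The algorithm then mirrors the proof of Theorem~\ref{I*Theorem}. First I would compute a DCS $D_0$ that saturates every job by solving a bipartite $b$-matching on $G$ with its loops deleted; such a $D_0$ exists precisely because a complete schedule was assumed to exist, and this step is just a feasibility computation. Second, starting from $D_0$, I would run a maximum-cardinality DCS computation on the full graph $G$ (loops included). Because cardinality augmentation only raises the degrees of the two deficient endpoints of an augmenting path and leaves every internal vertex's degree unchanged, a job already at degree $\ell(j)$ can never be an endpoint; thus every job stays saturated throughout, the final DCS is a complete schedule, and by the previous paragraph it uses the fewest active slots. Initializing with $D_0$ costs nothing in the time bound, exactly as argued for Theorem~\ref{I*Theorem}, since the DCS algorithm begins with trivial (length-one) augmenting paths.

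The only genuinely new point, and the main obstacle, is the running time together with the fact that the slot-splitting reduction of Section~\ref{sub:poly_disjoint} is no longer available: splitting each slot $t$ into $t_1,t_2$ fails here because a single long job could be matched to both $t_1$ and $t_2$, illegally running two of its units in slot $t$. I would therefore use the second reduction of that section, solving maximum-cardinality DCS on $G$ directly, replacing each loop $(t,t)$ by a triangle $(t,t_1,t_2,t)$ with $d(t_i)=1$ so that loops are handled by an off-the-shelf routine, and invoking the cardinality DCS algorithm of Gabow and Tarjan~\cite{GT}. On $G$ the edge $(j,t)$ is a single edge, so the double-assignment pathology cannot occur. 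For the time bound, the analysis of the unit case carries over with $n$ replaced by the total job length $L=\sum_j \ell(j)$: loops contribute only trivially to the length of a simple augmenting path, so augmenting paths remain short and the number of augmentation phases is $O(\sqrt{L})$, each phase costing $O(m)$. This yields the claimed $O(\sqrt{L}\,m)$ bound, and the hard part is precisely confirming that discounting loops still gives an $O(\sqrt{L})$ phase count once the job degrees can be large.
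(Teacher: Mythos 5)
Your proposal is correct and follows essentially the same route as the paper: the same graph $G$ with degree constraints $d(j)\le\ell(j)$ and $d(t)\le 2$, the identity $|D|=L+\lambda$ as the length-weighted analogue of (\ref{DEqn}), the two-stage computation (a job-saturating $D_0$ from the loop-free bipartite graph, then cardinality augmentation which never decreases degrees), and the observation that the slot-splitting reduction fails for long jobs so the loops must be handled directly via the Gabow--Tarjan DCS algorithm. The paper gives only a brief sketch of this argument, and your write-up fills in the same details it intends.
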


\def\PS{\noindent{\bf Proof Sketch:}\ }
\PS  
The algorithm is similar to Theorem \ref{I*Theorem}.
We use the graph $G$, modified
so job $j$ has degree constraint $d(j)\le \ell(j)$.
Any DCS scheduling every job satisfies
(\ref{DEqn}) for $\iota=L$. The rest of the argument
is unchanged. The time bound follows from
\cite{GT}.  
\qed

Now suppose  we cannot schedule all the jobs to completion.
Then it is NP-complete
to even schedule the greatest possible number of jobs (for any fixed number
of processors). We show the proof for $B=1$, the extension for higher $B$ is
trivial.
The reduction is similar to the NP-completeness proof in Subsection 3.1 and is 
as follows: 
We create a time slot for each element in $X$. We create a collection of
$m$ jobs (one corresponding to each set) of length 3 each.  Each job can be
scheduled in the time slots that correspond to the elements in the 
corresponding subset. We can schedule $\frac{n}{3}$ jobs
if and only if there is a solution to the 3 Exact Cover problem.

\def\wg.{{$\widehat G$}}

The previous result assumes that $\ell(j)=3$ for all jobs $j$.
For $B=2$,  the special case where each $\ell(j)$ is 1 or 2 can be modeled by a
graph \wg. which is $G$ augmented by a loop $(j,j)$
for every job $j$ of length 2; also each such $j$ has
degree constraint $d(j)=2$. The following results reduce the problem
to maximum weight DCS.

\begin{theorem}
(a) A schedule for the greatest
possible number of jobs, minimizing the number of active slots,
can be found 
in time  
 $O(\sqrt {n\alpha(m,m)\log^3 m}\ m)$.

(b)  For every  $\iota_1$, 
a schedule of $\iota_1$
unit jobs plus the greatest possible number of length two jobs, 
 minimizing the number of active slots,
can be found in the time bound of (a).

(c) The collection of all schedules of (b)  (i.e., a schedule for every 
possible value of $\iota_1$) can be found in time
$O(n (m +n'\log n'))$, for $n'$ the number of jobs and time slots.
\end{theorem}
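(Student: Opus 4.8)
The plan is to reduce all three parts to a maximum weight DCS on $\widehat G$, choosing edge weights so that a maximum weight DCS realizes the desired lexicographic objective (schedule as many jobs as possible, then use as few active slots as possible), and then --- for part (c) --- to exploit the fact that the optimal schedules for consecutive values of $\iota_1$ differ by a single augmenting path.

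First I would record the analog of identity (\ref{DEqn}) for $\widehat G$. If a DCS $\widehat D$ schedules $\iota_1$ unit jobs and $\iota_2$ length-two jobs, leaves $n_2-\iota_2$ of the $n_2$ length-two jobs unscheduled (each absorbing its loop $(j,j)$), and uses $|T|-A$ slot loops (where $A$ is the number of active slots among the $|T|$ slots), then counting edges gives
\begin{equation}
\label{DCS2Eqn}
|\widehat D| = \iota + n_2 + |T| - A,
\end{equation}
with $\iota=\iota_1+\iota_2$. Since $n_2$ and $|T|$ are fixed, maximizing $|\widehat D|$ only maximizes $\iota-A$, a trade-off rather than the lexicographic order we want; this is exactly why weights are needed. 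For part (a) I would set the weight of every slot loop $(t,t)$ to $1$ and the weight of every other edge and every job loop $(j,j)$ to $P:=|T|+1$. A short computation using (\ref{DCS2Eqn}) then gives total weight $P\,\iota - A + (P n_2 + |T|)$, so since $P>|T|\ge A$ a maximum weight DCS first maximizes $\iota$ and then minimizes $A$. By the DCS--schedule correspondence (the analog of Lemma \ref{MaxCardLemma}) this yields a schedule of the greatest number of jobs with the fewest active slots. Because a length-two job could otherwise grab both copies of one slot, I would run the general-graph weighted DCS algorithm (the loop-via-triangle reduction), not the $MG$ matching; its running time is the claimed $O(\sqrt{n\alpha(m,m)\log^3 m}\;m)$.

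For part (b) I would pin $\iota_1$ with a gadget rather than a parameter search. Add a reservoir vertex $R$, join it to every unit job $j$ by an edge, set $d(j)=1$ exactly for every unit job and $d(R)=n_1-\iota_1$ exactly (where $n_1$ is the number of unit jobs). Then every unit job has degree exactly one --- an edge to a slot (scheduled) or to $R$ (dropped) --- and exactly $n_1-\iota_1$ are dropped, so exactly $\iota_1$ are scheduled. Now the count of unit jobs is fixed, so I would give weight $0$ to all unit-job and reservoir edges, keep weight $P$ on the length-two edges/loops and weight $1$ on slot loops; (\ref{DCS2Eqn}) then shows the objective becomes $P\,\iota_2 - A$ plus a constant, i.e.\ maximize the number of length-two jobs and then minimize active slots, subject to the fixed $\iota_1$. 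This is a single weighted DCS of the same asymptotic size, hence the time bound of (a). (Infeasible targets $\iota_1$ are detected when the gadget DCS has no feasible solution.)

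Finally, part (c) is where the real work lies. Solving (b) independently for each of the $O(n)$ values of $\iota_1$ would be too slow, so I would instead produce all of them in one monotone sweep. The key structural facts are (i) monotonicity: for two weightings differing only by a uniform amount $\theta$ added to the unit-job edges, the exchange inequality $(\theta_2-\theta_1)(\iota_1^{(2)}-\iota_1^{(1)})\ge 0$ shows the number of scheduled unit jobs is nondecreasing in $\theta$; and (ii) that, as in the successive-shortest-path method for weighted matching and min-cost flow, the optimum for $\iota_1+1$ is obtained from the optimum for $\iota_1$ by a single minimum-weight augmenting path that schedules one additional unit job. I would start from $\iota_1=0$ (all unit jobs routed to $R$) and repeatedly augment, each augmentation found by a Dijkstra computation with reduced costs/potentials to cope with the negative weights arising from the slot and job loops, at cost $O(m+n'\log n')$ per step and $O(n(m+n'\log n'))$ overall, exactly the stated bound. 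The output is the table of the resulting $O(n)$ schedules, one per $\iota_1$, up to the largest feasible value. I expect the main obstacle to be establishing (ii) in the presence of loops and exact degree constraints --- that consecutive optima really differ by one augmenting path and that each such path both increases $\iota_1$ by one and preserves, along the whole sequence, the secondary and tertiary priorities (maximize length-two jobs, then minimize active slots) that the scalarization by $P$ encodes.
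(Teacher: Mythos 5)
Your proposal is correct and follows essentially the same route as the paper: reduce to a maximum weight DCS on $\widehat G$ with weights chosen so that the objective lexicographically maximizes the number of scheduled jobs and then the number of idle slots, pin $\iota_1$ via an auxiliary vertex with degree constraint $n_1-\iota_1$, and obtain the whole family in (c) by sweeping that degree constraint one unit at a time with a single weighted augmenting-path update (at cost $O(m+n'\log n')$ each). The only differences are cosmetic --- you put weight $P=|T|+1$ on job edges \emph{and} job loops where the paper uses $m$ on unit edges, $m/2$ on length-two edges, and $0$ on job loops (both make each additional scheduled job worth the same marginal amount, and yours arguably handles half-scheduled length-two jobs more cleanly), and you sweep $\iota_1$ upward from $0$ rather than downward from $n_1$.
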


\PS
(a) Assign edge weights to \wg.: An edge ($j,t)$ weighs
$m$ if $j$ is unit and $m/2$  if $j$ has length 2. A loop
weighs 1 if it is incident to a time slot and 0 if it is incident to a job.
A maximum weight DCS maximizes the number of jobs, and subject to that,
maximizes the number of idle time slots. The algorithm of
\cite{GT} 
finds a maximum weight DCS in time
$O(\sqrt {n\alpha(m,n)\log^3 n}\ m)$ on a graph
of $n$ vertices, $m$ edges, and integral weights polynomial in $n$.

(b) Add a vertex $x$ adjacent to every unit job. 
Each new edge weighs 0 and $d(x)$, the degree constraint of $x$. is
the number of unit jobs decreased by $\iota_1$.
 
(c) In the graph of (b), set $d(x)=0$ and find
a maximum weight DCS. Then repeatedly increase $d(x)$ by 1 and
update the maximum weight DCS. Each update amounts to finding one 
augmenting path. The algorithm of \cite{GMatch}
finds an augmenting path in time $O(m +n\log n))$
on a graph of $n$ vertices and $m$ edges. 
\qed

Part (c) is motivated by the fact that
unit jobs and length 2 jobs may differ in value.
In the same vein, part (b) obviously generalizes to schedules that maximize
an arbitrary  linear function of the number of unit and length 2 jobs.

%
%
\section{Active Time and Arbitrary Preemption}
\label{PreemptiveSection}

This section considers the case of scheduling a collection of jobs
with arbitrary non-negative lengths $\ell_j$ and
arbitrary sets $T_j$ of feasible intervals, $j\in J$.
There are $B$ processors that operate in a collection of
unit length time slots $s\in S$.
Preemptions are allowed at any time, i.e., job $j$ must be scheduled
in $\ell_j$ units of time
but it can be be started and stopped arbitrarily many times, perhaps
switching processors -- the only constraint is that
it must never execute on more than one processor at any instant of time.
We seek a schedule minimizing the active time.

Note that when we allow preemption
the multi-slot jobs of Section \ref{sec:disjoint} correspond
to arbitrary length jobs. 
Similarly a special case of this model is preemptively scheduling
jobs of arbitrary length with integral release times and deadlines, i.e.,
the generalization of Section \ref{sec:main} which treats
nonpreemptive unit
jobs.

\subsection{Linear program formulation}
\label{LPFormulationSection}
The problem can be formulated as a linear program.
Form a graph of edges $E$ where job $j$ and slot $s$ have an edge
$js$ when $j$ can be scheduled in slot $s$. A variable
$x_{js}$ gives the amount of time job $j$ is scheduled in slot $s$, and
a variable $i_s$ gives the amount of idle (inactive) 
time in slot $s$. The problem is
equivalent to the following LP:

{\parindent=0pt

\bigskip

\begin{equation}
\label{PreEqn}
\global\def\xcns.{(\ref{PreEqn}.\mbox{a})}%
\global\def\icns.{(\ref{PreEqn}.\mbox{c})}%
\global\def\pcns.{(\ref{PreEqn}.\mbox{b})}%
\begin{array}{lllllr}
\mbox{maximize} & \sum_{s\in S}\, i_s\\
\mbox{subject to}& \sum_{js\in E} x_{js} &\ge & \ell_j &j\in J 
&{\hskip0.5in}\xcns.\\
& \sum_{js\in E} x_{js} + Bi_s  &\le  & B &s\in S
&\pcns.\\
&x_{js}+i_s &\le &1&js\in E
&\icns.\\
&i_s,\, x_{js} &\ge &0& s\in S,\,js\in E
\end{array}
\end{equation}

\bigskip

}

To see the formulation is correct first observe that although
\xcns. 
is an inequality we can assume equality holds, since
we need only schedule $\ell_j$ units of job $j$.

Secondly, observe that the
inequalities \pcns.--\icns. are necessary and sufficient
conditions for scheduling $x_{js}$ units of job $j$, $js \in E$, 
in $1-i_s$ units of time (on $B$ processors).
Necessity is clear. For sufficiency,
order the jobs arbitrarily; order the processors arbitrarily too.
Repeatedly schedule the next $1-i_s$ units of jobs on the next  processor,
until all jobs are scheduled. The last processor may not receive
a full $1-i_s$ units of work. Also a job may be split across processors.
However such a job is scheduled last on one
processor and first on the next, so \icns. guarantees
it is not executed simultaneously on two processors. 
\newline
\newline
\noindent
We note that
the following 
LP is equivalent to (\ref{PreEqn}).  Partition $S$ 
into disjoint intervals $[ a,b)$ where $a$ and $b$
are consecutive integers in the sorted set of interval
boundaries over all jobs.
Denote these intervals $I_1, \ldots, I_k$ and
let $E'$ be the set of edges $ij$
where job $j$ is feasible in $I_i$.

{\parindent=0pt

\bigskip

\begin{equation}
\label{PreEqnGenl}
\global\def\xcnsg.{(\ref{PreEqnGenl}.\mbox{a})}%
\global\def\icnsg.{(\ref{PreEqnGenl}.\mbox{c})}%
\global\def\pcnsg.{(\ref{PreEqnGenl}.\mbox{b})}%
\begin{array}{lllllr}
\mbox{maximize} & \sum_{i=1}^k\, y_i\\
\mbox{subject to}& \sum_{ij \in E'} x_{ij} &\ge & \ell_j &j\in J 
	&{\hskip0.5in}\xcnsg.\\
& \sum_{ij \in E'} x_{ij} +B \cdot y_i &\le  & B\cdot |I_i| 
	&i = 1\ldots k &\pcnsg.\\
&x_{ij}+y_i &\le &|I_i|& ij \in E'
	&\icnsg.\\
&y_i,\, x_{ij} &\ge &0& i = 1\ldots k, ij \in E'
\end{array}
\end{equation}

\bigskip

}

The variable $y_i$ denotes the amount of idle 
time in interval $I_i$.  The variable $x_{ij}$ 
denotes the amount of time devoted to job $j$ 
in interval $I_i$.
For jobs of arbitrary length, the size of this LP
is still polynomial in the number of jobs and the 
total number of intervals in the sets $T_j$.

\subsection{The case {\boldmath $B=2$} and 2-matchings}

It is convenient to use  a variant of 
(\ref{PreEqn}) based on  
the matching graph $MG$ of Section \ref{sec:disjoint}.
Each edge of $MG$ has a linear programming variable. Specifically
each edge $js\in E(G)$ gives rise to two variables
$x_e$, $e=js_i$, $i\in \{1,2\}$, that give the amount of time 
job $j$ is scheduled on processor $i$ in time slot $s$.
Also each time slot  $s\in S$, has a variable $x_e$, $e=s_1s_2$
that gives the amount of inactive time in slot $s$.
The problem is
equivalent to the following LP:

{\parindent=0pt

\bigskip

\begin{equation}
\label{Pre2Eqn}
\global\def\jcns.{(\ref{Pre2Eqn}.\mbox{a})}%
\global\def\jscns.{(\ref{Pre2Eqn}.\mbox{c})}%
\global\def\scns.{(\ref{Pre2Eqn}.\mbox{b})}%
\begin{array}{lllllr}
\mbox{maximize} & \sum \set{x_e} {e\in E(MG)}\\
\mbox{subject to}& \sum \set{x_e} {e \hbox{ incident to }j} &= & \ell_j &j\in J 
&{\hskip0.5in}\jcns.\\
& \sum \set{x_e} {e \hbox{ incident to }s_i} &\le  & 1 &s\in S,\ i\in \{1,2\}
&\scns.\\
&\sum \set{x_e} {e\in \{js_1,js_2,s_1s_2\}} &\le &1&js\in E(G)
&\jscns.\\
&x_{e} &\ge &0& e\in E(MG)
\end{array}
\end{equation}

\bigskip

}
 
To see this formulation is correct note that any schedule supplies
feasible $x_e$ values. Conversely any feasible $x_e$'s give a feasible
solution to LP (\ref{PreEqn}) 
(specifically set $x_{js}= x_{js_1}+x_{js_2}$
and $i_s= x_{s_1s_2}$)
and so corresponds to a schedule.
Finally the objective of
(\ref{Pre2Eqn}) equals the total of all job lengths plus the total
inactive time, so like (\ref{PreEqn}) it maximizes the total inactive time.

Now consider an arbitrary graph $G=(V,E)$ and the polyhedron defined by the
following system of
linear inequalities:

{\parindent=0pt

\bigskip

\begin{equation}
\label{2MatchEqn}
\global\def\vcns.{(\ref{2MatchEqn}.\mbox{a})}%
\global\def\tcns.{(\ref{2MatchEqn}.\mbox{b})}%
\begin{array}{lllllr}
&\sum \set{x_e} {e \hbox{ incident to }v} &\le & 1 &v\in V &\vcns.\\
& \sum \set{x_e} {e \hbox{ an edge of  $T$}} &\le  & 1 &
\hbox{$T$ a triangle of $G$}&\tcns.\\
&x_{e} &\ge &0& e\in E
\end{array}
\end{equation}

\bigskip

}
\noindent
Call $\sum \set{x_e} {e \in E}$ the {\em size} of a solution to
(\ref{2MatchEqn}). Say vertex $v$ is {\em covered} if equality holds
in \vcns..
 
Define a  {\em 2-matching} $M$ to be  an assignment of 
weight 0,1 or $1/2$ to each edge
 to each edge of $G$
so that each vertex is incident to edges of total weight
$\le 1$.%
\footnote{This definition of a 2-matching
scales the usual definition by a factor $1/2$, i.e., the weights are 
usually 0,1 or 2.} 
$M$ is  {\em basic} if its edges of weight $1/2$ form a collection
of (vertex-disjoint) odd cycles. The basic 2-matchings are precisely
the vertices of the polyhedron determined by
the constraints (\ref{2MatchEqn}) with \tcns. omitted.
A basic 2-matching is {\em triangle-free} if no triangle
has positive weight on all three of its edges.
Cornu\'ejols and Pulleyblank \cite{CP1}
showed the 
triangle-free 2-matchings are 
precisely
the vertices of the polyhedron determined by
constraints (\ref{2MatchEqn}).

When all job lengths are one,
the  inequalities of (\ref{Pre2Eqn})
are system   (\ref{2MatchEqn})
for graph $MG$, with the further requirement that
every  job vertex  is covered.
So it is easy to see that the result of Cornu\'ejols and Pulleyblank
implies our scheduling problem is solved by
any  triangle-free 2-matching on $MG$
that has maximum size subject to the constraint that it
covers every job vertex.
Also, interestingly,
there is always solution where each job
is scheduled either completely in one time slot or  is split into
two pieces of size $1/2$ (see Fig.~\ref{MGFig}).


\begin{figure}[t]
\begin{center}

\epsfxsize=5in
\mbox{{\epsffile{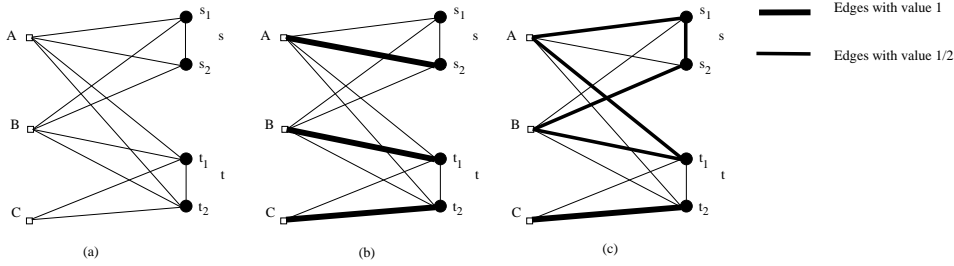}}}

\end{center}
\caption{(a) We illustrate the graph $MG$ with two time slots and three 
jobs with $B=2$. Jobs A and B can be scheduled in time slots s and t, 
while job C can  only be scheduled in time slot t.
(b) We show a solution corresponding to an integral matching and (c) a solution
corresponding to a triangle free 2-matching with $\frac{1}{2}$ unit of idle time in the first
slot.}

\label{MGFig}
\end{figure}

Cornuejols and Pulleyblank give two augmenting 
path algorithms for triangle-free 2-matching:
they find such a matching that is perfect
(i.e., every vertex is fully covered) in time 
$O(nm)$ \cite{CP2}, and such a matching that has
minimum cost in time $O(n^2m)$ \cite{CP1}. (The latter bound 
clearly applies to our scheduling problem, 
since it can model the constraint that every job 
vertex is covered.) Babenko et. al. \cite{COCOON10} 
showed that a maximum cardinality triangle-free 
2-matching can be found in time $O(\sqrt n m)$. 
This is done by reducing the problem to ordinary 
matching, with the help of the Edmonds-Gallai decomposition.

Here we give two easy extensions of 
\cite{COCOON10}: a simple application of
the Mendelsohn-Dulmage Theorem \cite{Lawler} 
shows that a maximum cardinality triangle-free 2-matching
can be found in the same asymptotic time as a 
maximum cardinality matching  (e.g., the algebraic 
algorithm of Mucha and Sankowski \cite{MS04}
can be used). This result extends to
maximum cardinality triangle-free 2-matchings 
that are constrained to cover a given set of
vertices (this is the variant needed for our 
scheduling problem). The development 
is based on standard data structures for blossoms. 
For instance, it is based on a simple definition of
an {\em augmenting cycle} (analog of an 
augmenting path), leading to an {\em augmenting blossom} 
(which models the {\em triangle cluster} of 
Cornuejols and Pulleyblank); we show a maximum 
cardinality matching with the greatest number 
of augmenting blossoms gives a maximum cardinality 
cardinality triangle-free 2-matching 
(Lemma 4(i)).\footnote{Our algorithm was developed 
independently of the 
authoritative algorithm of \cite{COCOON10}.}

Our proof shows that the scheduling problem is solved correctly,
independently of \cite{CP1,CP2,COCOON10}.
We also extend the ideas to get similar results for 
arbitrary job lengths.

\subsection{Triangle-free 2-matchings}

This section gives an efficient algorithm to find a maximum cardinality
triangle-free 2-matching.

\begin{figure}[t]
\begin{center}

\epsfxsize=4.0in
\mbox{{\epsffile{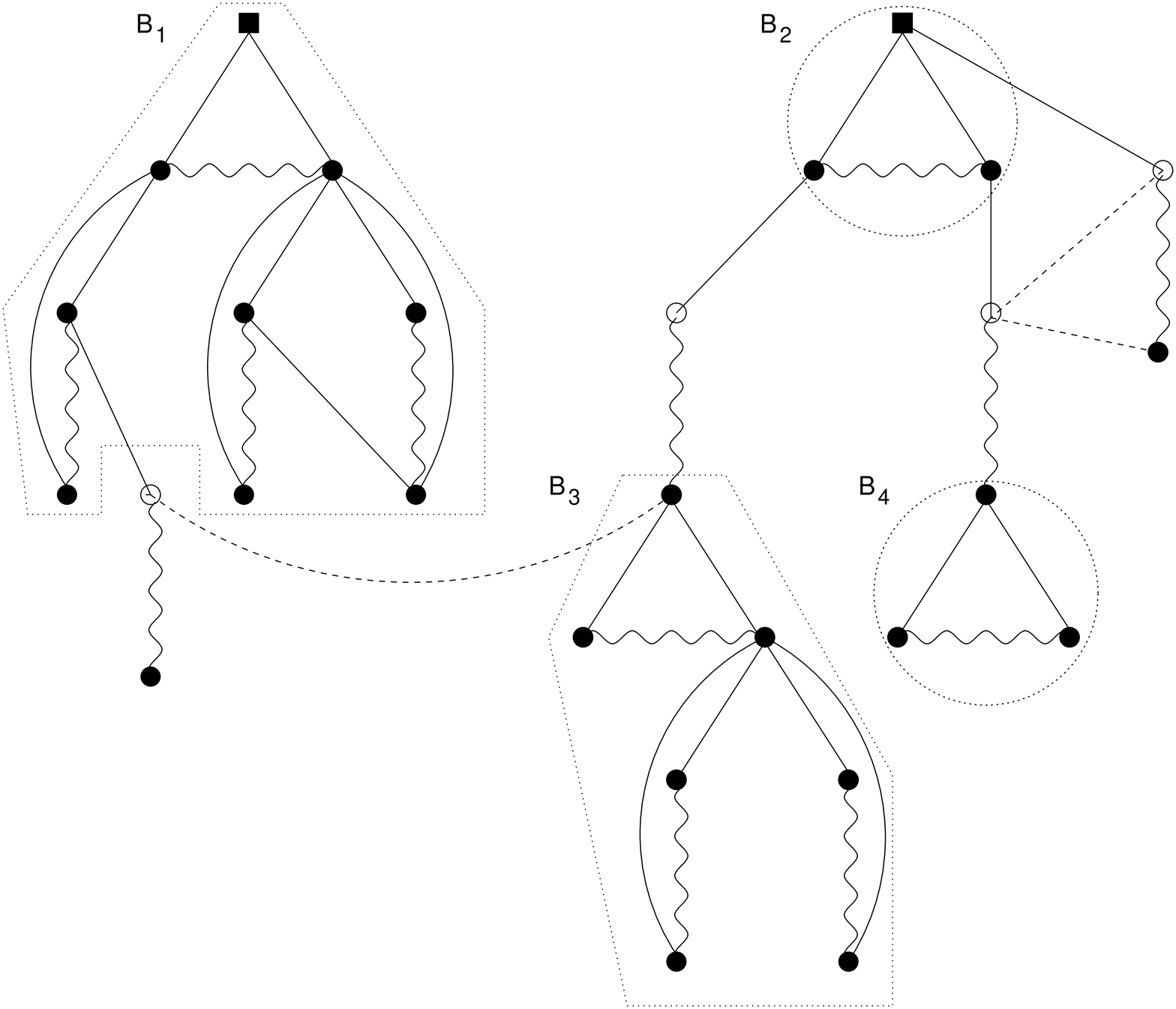}}}

\end{center}
\caption{Matching with four blossoms.}

\label{HungarianFig}
\end{figure}

The approach to 2-matchings is 
through ordinary matchings.
Fig.~\ref{HungarianFig}  illustrates the discussion. Wavy edges are
matched. Square vertices are unmatched.

All paths and cycles that we consider are simple.
A path is  {\em alternating} if its edges are alternately matched
and unmatched; a  cycle is {\em alternating} if one vertex is incident
to two unmatched edges and otherwise the edges are alternately matched and unmatched.

Consider a matching $M$ on an arbitrary graph $G$.  An {\em augmenting
cycle} $A$ is an alternating odd cycle of length $>3$ that contains an
unmatched vertex (i.e., a vertex $\notin V(M)$).  Changing the weight
of every edge of $A$ to $1/2$ gives a 2-matching of size $|M|+1/2$. We
will show that a maximum size triangle-free 2-matching, and a solution
to our scheduling problem, can be constructed using augmenting cycles.

We assume some familiarity with the blossom algorithm for maximum
cardinality matching \cite{Lawler} but we review most notions.

A {\em blossom} is a recursively defined subgraph $B$.  The vertices
of $B$ are partitioned into an odd number of subgraphs $B_i$.  Each
$B_i$ is either a single vertex or a blossom. If we contract each
$B_i$ to a vertex, the remaining edges of $B$ form an alternating odd
cycle spanning every $B_i$. Fig.~\ref{HungarianFig} has four maximal
blossoms $B_1$--$B_4$.

Let $B_0$ be the unique
$B_i$ that is incident to two unmatched edges of $B$.
The {\em base (vertex)} $b$ of $B$
is $B_0$ if $B_0$ is a vertex, else it is the base  of $B_0$. 
An easy induction shows that every vertex of $B-b$ is
on a matched edge of $B$.
If $b$ is on a matched edge, that edge 
is incident to $V(B)$. 
$B$ is a {\em matched (unmatched) blossom} if $b$ is
on some (no) matched edge.

Two well-known properties of blossoms can be shown by simple induction:

\bigskip

{\parindent=20pt
\narrower

{\parindent=0pt

P1: Let $B$ be a blossom with base $b$. For any $v\in V(B)$, $B$ contains
an even length alternating path from $v$ to $b$.

P2: Let $B$ be an unmatched blossom.
The matching on $E(B)$ can be modified
to make any given vertex of $B$ the base.

}}

\bigskip

A {\em
triangle cluster} \cite{CP1,CP2} is 
is a connected subgraph whose biconnected components
are all triangles.  Note that a single vertex
is a triangle cluster.  Also, a subgraph is a triangle
cluster exactly when it is connected, its edges
can be partitioned into triangles,
and any vertex shared by two or more triangles is a cutpoint.
A blossom whose vertices
induce a subgraph that is a triangle cluster is a
{\em t-blossom} ($B_2$--$B_4$ in Fig.~\ref{HungarianFig}); in the opposite case it is 
a {\em non-t-blossom}.  A non-t-blossom
is {\em augmenting} if it is
unmatched ($B_1$ in  Fig.\ref{HungarianFig}). 

\begin{lemma}
\label{AugmentingLemma}
The matching on an augmenting blossom 
$B$ can be modified so $V(B)$ is spanned by
a set of matched edges and an augmenting cycle.
\end{lemma}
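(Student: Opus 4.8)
The plan is to induct on $|V(B)|$ and, in the generic case, build the augmenting cycle directly by ``expanding'' the top-level odd cycle $C=B_0B_1\cdots B_{2k}$ of $B$ through its sub-blossoms, invoking only property P1 and leaving $M$ unchanged. Recall that $B$ is unmatched, so its base $b$ is the unique exposed vertex, and that the two cycle edges at $B_0$ are both unmatched (meeting $B_0$ at $b$), while every other sub-blossom $B_i$ ($i\ge 1$) meets its unique matched cycle edge at its own base $b_i$ and its unique unmatched cycle edge at some vertex $a_i\neq b_i$.

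First I would traverse $C$ starting and ending at $b$: along each $B_i$ ($i\ge 1$) I replace the contracted vertex by the even-length $M$-alternating path $P_i$ from $a_i$ to $b_i$ guaranteed by P1. Since $b_i$ is internally unmatched, $P_i$ ends with an unmatched edge at $b_i$ and, being even, begins with a matched edge at $a_i$; splicing the $P_i$ between the $2k+1$ connecting edges then yields a simple $M$-alternating cycle $A$ that meets $B_0$ only at the exposed vertex $b$, where two unmatched edges coincide. A parity count---$2k+1$ connecting edges (odd) plus $2k$ even paths---shows $A$ is odd, and $|A|\ge 2k+1$. I would then check the spanning condition for free: every vertex of $P_i$ other than $b_i$ is matched along $P_i$, so the remaining vertices $V(B_i)\setminus V(P_i)$, together with $V(B_0)\setminus\{b\}$, are exactly covered by internal matched edges of the sub-blossoms (no matched edge leaves a sub-blossom except the matched cycle edges, which lie on $A$). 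Hence $A$ is an augmenting cycle and $V(B)\setminus V(A)$ is spanned by matched edges, with no modification of $M$ at all---provided $|A|>3$.

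The construction gives $|A|=3$ exactly when $k=1$ and both $P_1,P_2$ are empty, i.e.\ the top cycle is a triangle whose two non-base sub-blossoms are single vertices $v_1,v_2$. This is the only case requiring the induction. Here I would argue from the triangle-cluster characterization that $B_0$ must itself be a non-$t$-blossom: if $B_0$ were a triangle cluster, attaching the triangle $bv_1v_2$ at the cut vertex $b$ would make $G[V(B)]$ a triangle cluster, contradicting that $B$ is a non-$t$-blossom. Thus $B_0$ is a strictly smaller unmatched non-$t$-blossom---an augmenting blossom---so by induction I may modify $M$ on $E(B_0)$ to span $V(B_0)$ by matched edges plus an augmenting cycle $A_0$ through $b$. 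Since the matched cycle edge $v_1v_2$ is untouched, $A_0$ together with $v_1v_2$ and the off-cycle matched edges of $B_0$ spans $V(B)$, completing the step.

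The main obstacle will be the bookkeeping that makes the generic construction airtight: confirming the attachment conventions (both unmatched cycle edges meet $B_0$ at $b$, and each matched cycle edge meets $B_i$ at $b_i$) and the precise parity of the canonical path from P1 (matched first edge at $a_i$, unmatched last edge at $b_i$), since these are exactly what force $A$ to alternate, to be odd, and to leave the off-cycle vertices perfectly matched by pre-existing $M$-edges. The base case is subsumed by the generic case---the smallest non-$t$-blossom is a five-cycle, handled with $k=2$ and no recursion---so once these structural facts are nailed down the induction closes cleanly.
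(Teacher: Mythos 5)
Your generic construction (expanding the top-level odd cycle by splicing in the P1 paths from each attachment vertex to each sub-blossom's base) is essentially the paper's device for its first two cases, but two of your structural ``recollections'' are not actually part of the blossom definition, and one of them sinks the degenerate case. The smaller issue first: the two unmatched cycle edges of $B$ attach to $B_0$ at two vertices $u_0,v_0\in V(B_0)$ that may be distinct and need not be the base $b$. So the cycle does not automatically ``meet $B_0$ only at $b$''; to close it you must first re-base via P2 (hence $M$ \emph{is} modified, contrary to your ``no modification at all'' claim), since P1 does not give an alternating path between two arbitrary vertices of $B_0$. The paper handles the $u_i\ne v_i$ possibility as a separate case precisely because it changes the length count.

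The serious gap is your $|A|=3$ case. Your descent into $B_0$ rests on the contrapositive ``if $G[V(B_0)]$ were a triangle cluster then so would $G[V(B)]$ be.'' But being a non-t-blossom is a property of the \emph{induced} subgraph, and $G[V(B)]$ can fail to be a triangle cluster solely because of a chord of $G$ that is not an edge of the blossom structure at any level. Concretely, let $B_0$ be the triangle $b,x,y$ (with $xy$ matched and $b$ exposed), let $B_1=\{v_1\}$, $B_2=\{v_2\}$ with $bv_1,v_2b$ unmatched and $v_1v_2$ matched, and suppose $G$ also contains the edge $xv_1$. Then every sub-blossom at every level is a t-blossom and $E(B)$ is a triangle cluster, yet $B$ is augmenting because of the chord; your recursion descends into the t-blossom $B_0$, where the induction hypothesis does not apply, and produces nothing. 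The augmenting cycle here is $b,y,x,v_1,v_2,b$, which necessarily uses the chord $xv_1$. This is exactly the paper's Case 3: when the blossom's own edges form a triangle cluster, one must locate an edge $uv\in E(G)-E(B)$ with both ends in $V(B)$, re-base at $u$ by P2, and close the P1 path from $v$ with that edge; the resulting cycle is non-triangular precisely because it contains the chord. Your proof never appeals to such a chord, so it cannot handle these blossoms, and the lemma is false for your argument's reach without that case.
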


\begin{proof}
Consider an arbitrary blossom $A$.
Its recursive definition gives various
subblossoms $B$, each composed of
subgraphs $B_i$, $i=0,\ldots, k$, $k$ even, 
along with edges $u_i v_{i+1}$, $i=0,\ldots, k$
joining $B_i$ to $B_{i+1}$. (Take $k+1$ to be 0.)
Consider one of these subblossoms $B$.

Make $u_0$ the base of $A$ (by P2).
Let $P$ be the even length alternating path in $B$
from $v_1$ to the base $u_0$ (by P1). 
Adding edge $u_0 v_1$ to $P$ completes an alternating cycle $C$.
$A$ is spanned by  $C$ and the
matched edges of $A-C$.
In the first two cases below 
$B$ can be chosen so $|C|\ge 5$,
i.e., $C$ is the desired augmenting cycle.

\case 1 {Some subblossom $B$ has $k>2$.} 
Use $B$ in the above procedure. Note that
$C$ goes through every $B_i$. 
(It
starts in $B_1$, so 
to reach $u_0\in B_0$
it must go through
$B_2,\ldots, B_k, B_0$ in that order.) 
Thus $C$ goes through
$\ge 5$ subblossoms and $|C|\ge 5$.

\case 2 {Some subblossom $B$ has some $u_{i}\ne v_i$.} 
$C$ traverses a path in $B_i$ from $v_i$ to $u_i$.
So it has $\ge 2$ vertices in $B_i$ and
$\ge 1$ vertex in every other  subblossom, i.e., $\ge 4$ vertices.
Since $C$ has odd length, $|C|\ge 5$.

\case 3 {Every subblossom $B$ has $k=2$ as well as $u_i=v_i$ for every $i$.}
Every subblossom consists of three edges forming a triangle.
So the edges of $A$ form a triangle cluster.

Suppose $A$ is an augmenting blossom.
So $A$ does not induce a triangle cluster, i.e.,
some edge $uv\in E(G)-E(A)$ joins two vertices
of $A$. 
Make $u$ the base of $A$. 
Let $P$ be the even length alternating path in $A$
from $v$ to $u$ (by P1). 
Adding edge $u v$ to $P$ completes an alternating cycle $C$.
$A$ is spanned by  $C$ and the
matched edges of $A-C$. $C$ is not a triangle since it contains edge
$uv$ not in the triangle cluster. Thus $|C|\ge 5$.
\qed
\end{proof}

We use some more concepts from ordinary matching \cite{Lawler}.
An {\em alternating tree} has an unmatched root,  every
path from the root is alternating, and every leaf
has even distance from the root. A vertex at even (odd) distance from
the root is {\em outer} ({\em inner}).

\begin{figure}[t]
\begin{center}

\epsfxsize=2.5in
\mbox{{\epsffile{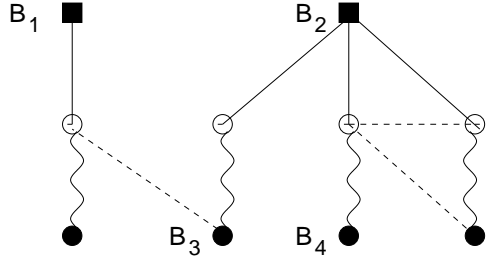}}}

\end{center}
\caption{Hungarian subgraph for the graph in Fig.~\ref{HungarianFig}.}

\label{Hungarian2Fig}
\end{figure}

Fig.~\ref{Hungarian2Fig} illustrates
the following definition (for the matching of 
Fig.~\ref{HungarianFig}):
Let $M$ be a maximum cardinality matching on $G$.
A {\em Hungarian subgraph $H$ 
for $M$} has nodes
that are either vertices of $G$ or contractions of blossoms in $G$.
No vertex of $G$ occurs more than once as a node of $H$ 
or as a member of a blossom. 
Every unmatched vertex of $G$
is either a node of $H$ or the base of an unmatched
blossom of $H$.
$H$ is spanned by a forest of alternating trees.
Each blossom $B$ of $H$ is outer in this forest.
Call a vertex of $G$ outer if it is an outer node
of $H$ or in some blossom of $H$, and inner if it is an inner node.
We sometimes write $V_H$ to denote the set of all inner and outer
vertices of $G$.
The key property is:

\bigskip
{

$(*)$ Any edge $vw$ of $G$ with $v$ an outer vertex 
has $w$ inner or $v$ and $w$ in the same blossom.

}
\bigskip
\noindent
In Fig.~\ref{Hungarian2Fig}  (as well Fig.~\ref{HungarianFig})
inner vertices
are hollow, outer are filled. The three dashed edges
are non-tree edges.

We  use a slight  extension
of this concept: Take $G,M,H$ as before.
Let \B. be the set of augmenting blossoms in $H$.
(We shall see below that \B.
does not depend on choice of Hungarian subgraph $H$.)
Let \Mt. be the triangle-free 2-matching
of size $|M|+|\B.|/2$ formed by 
using Lemma \ref{AugmentingLemma} on 
each blossom of \B. and enlarging $M$ in the obvious way.
A {\em reduced Hungarian subgraph $H$ for \Mt.} 
is a Hungarian subgraph on $G-\bigcup \set {V(B)} {B\in \B.}$
for $M$ restricted to this graph.
We reiterate that a vertex of an augmenting blossom
does not belong to $V_H$. 
The reduced Hungarian subgraph for the matching of Fig.~\ref{HungarianFig}
is Fig.~\ref{Hungarian2Fig}
with $B_1$ deleted,
i.e., the leftmost dashed edge becomes a tree edge.

\begin{lemma}
\label{RedHungarianLemma}
A reduced Hungarian subgraph for \Mt.
satisfies $(*)$.
\end{lemma}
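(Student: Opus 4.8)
The plan is to reduce the lemma to two facts about $M':=M\cap E(G')$, where $G'=G\setminus\bigcup_{B\in\mathcal B}V(B)$ is $G$ with the augmenting blossoms deleted. The first fact is that $M'$ is a maximum matching on $G'$, so the reduced Hungarian subgraph is genuinely a Hungarian subgraph of a maximum matching and therefore already satisfies $(*)$ for every edge with both endpoints in $G'$. The second, and real, fact is that no outer vertex of the reduced subgraph can be $G$-adjacent to a deleted blossom; since $(*)$ is a statement about \emph{all} edges of $G$, this is exactly what is needed to upgrade the within-$G'$ conclusion to all of $G$.

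First I would record that deleting an augmenting blossom is ``clean.'' By the induction noted after P2, every vertex of $B-b$ lies on a matched edge of $B$, while the base $b$ is exposed; hence every matched edge meeting $V(B)$ is internal to $B$. Thus deleting $V(B)$ removes only exposed bases and internally matched pairs, and no surviving vertex loses its mate. Consequently the $M$-exposed vertices lying in $G'$ are exactly the $M'$-exposed vertices, and $M'$ is maximum on $G'$: an $M'$-augmenting path $P\subseteq G'$ keeps the matched/unmatched status of each of its edges in $M$ and has $M$-exposed endpoints, so it would be an $M$-augmenting path in $G$, contradicting maximality of $M$. This makes the reduced Hungarian subgraph well defined, and by the key property of Hungarian subgraphs of maximum matchings it satisfies $(*)$ for all edges internal to $G'$.

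The crux is ruling out an edge from an outer vertex into a deleted blossom. Suppose $v$ is outer in the reduced subgraph and $vw\in E(G)$ with $w\in V(B)$ for some $B\in\mathcal B$. Such an edge is unmatched, since otherwise it would be internal to $B$ and force $v\in V(B)$. Because $v$ is outer it has an even-length alternating path $Q$ from a root of the reduced subgraph, lying entirely in $G'$ (using P1 to traverse any blossom that contains $v$), and $Q$ reaches $v$ along a matched edge. Because $B$ is a blossom with exposed base $b$, property P1 supplies an even-length alternating path inside $B$ from $w$ to $b$ beginning with the matched edge at $w$ and ending with an unmatched edge into $b$. Splicing $Q$, the unmatched edge $vw$, and this blossom path yields a simple alternating path from the exposed root to the exposed base $b$, i.e. an $M$-augmenting path in $G$ — contradicting maximality of $M$. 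Simplicity is immediate because $Q\subseteq G'$ while the blossom path lies in $V(B)$, and these vertex sets are disjoint. Hence no such edge exists, and combining this with the within-$G'$ conclusion gives $(*)$ for every edge of $G$.

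I expect the main obstacle to be this final splicing step: verifying that the alternation matches at the junction $v\to w$ (matched edge into $v$, then unmatched $vw$, then the matched first edge of the blossom path), that the even path inside $B$ terminates at the exposed base with the parity that makes the concatenation genuinely augmenting, and that simplicity survives when $v$ itself lies inside an outer blossom of the reduced subgraph. The ``clean deletion'' bookkeeping is routine, but it is what licenses both the maximality of $M'$ and the identification of exposed vertices, so I would establish it carefully before invoking $(*)$ for the edges internal to $G'$.
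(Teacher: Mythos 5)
Your proposal is correct and follows the same route as the paper: the only way $(*)$ can fail is via an edge from an outer vertex of the reduced subgraph into a deleted augmenting blossom, and such an edge would complete an $M$-augmenting path from the exposed base of that blossom to the root of the outer vertex's alternating tree, contradicting the maximality of $M$. The paper states this in two sentences and leaves implicit the bookkeeping you spell out (that deleting augmenting blossoms removes only exposed bases and internally matched pairs, so the restricted matching is maximum on the restricted graph and $(*)$ already holds for edges internal to it); your added detail, including the parity check at the splice, is consistent with the paper's argument.
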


\begin{proof}
For $(*)$ to fail an edge must join an outer vertex $v$ to 
an augmenting blossom $B$. This edge completes an
augmenting path for $M$, from the unmatched vertex of $B$ to
the root of $v$'s alternating tree.
But this contradicts the maximum cardinality of $M$. 
\qed
\end{proof}

This section uses the notion from matching theory
of ``set cover'' (as in odd set covers, reviewed below). 
A {\em set cover} \C. for an arbitrary graph $G$
is a subpartition of the vertices such that every edge of $G$ either has both
ends in the same set of \C.  
or at least one end in a singleton set of \C..
We use several types of these covers, each with its own definition
of ``capacity'' of a set. The intent is that the capacity \C. 
(i.e., the total capacity of all its sets) should upper bound
the size of a matching of some type.

In a {\em triangulated set cover} \C.,
the {\em
capacity} of a set $S\in \C.$ is 
1 for a singleton,
$\lfloor |S|/2 \rfloor$ if $E(S)$ is a
triangle cluster and $|S|/2$ otherwise.
The total weight of any solution to LP (\ref{2MatchEqn}) is at most
the capacity of any such cover \C..  In proof a singleton is incident to
edges of total weight $\le 1$ (its capacity).  Let $S\con V$ be a
non-singleton.  The total weight of all edges with both ends in $S$ is
$\le |S|/2 $.  Finally suppose $S$ induces a triangle
cluster of $t$ triangles.  $S$ has $2t+1$ vertices, and the edges with
both ends in $S$ weigh at most $t=\lfloor |S|/2 \rfloor$.

The 2-matchings we wish to construct
are characterized as follows: 

\begin{lemma}
\label{Max2MatchLemma}

\i If $M$ is a maximum cardinality matching with the greatest
possible number of augmenting blossoms in its Hungarian subgraph,
then $\Mt.$ is a maximum size triangle-free 2-matching.
In fact it is a maximum size solution to (\ref{2MatchEqn}).

\ii Let $J$ be a set of vertices that can be covered by a
triangle-free 2-matching. Let
$M$ be a maximum cardinality matching  with the greatest
possible number of  augmenting blossoms  in its Hungarian subgraph,
subject to the constraint that every vertex of $J$ 
is matched or in an augmenting blossom.
If such an $M$ exists,
 \Mt. is a maximum size triangle-free 2-matching
that covers $J$.
In fact it is a maximum size solution to (\ref{2MatchEqn})
when every vertex of $J$ 
is constrained to have equality in \vcns..
\end{lemma}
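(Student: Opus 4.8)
The plan is to prove both parts by playing $\Mt.$ off against the triangulated set cover bound proved above: since the size of every solution to (\ref{2MatchEqn}) is at most the capacity of any triangulated set cover \C., it suffices to exhibit a \C. whose capacity equals the size of $\Mt.$. By Lemma \ref{AugmentingLemma}, $\Mt.$ is a genuine triangle-free 2-matching of size $|M|+|\B.|/2$, so it is feasible for (\ref{2MatchEqn}), and the whole burden is to build a cover of capacity exactly $|M|+|\B.|/2$. Because a maximum size solution to (\ref{2MatchEqn}) is attained at a vertex, which by Cornu\'ejols--Pulleyblank is a triangle-free 2-matching, such a cover certifies $\Mt.$ as a maximum size triangle-free 2-matching.

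For (i), I would read the cover off the reduced Hungarian subgraph $H$ for $\Mt.$, which satisfies $(*)$ by Lemma \ref{RedHungarianLemma}. Put a singleton $\{v\}$ in \C. for every inner vertex $v$ of $H$; the set $V(B')$ for every outer blossom $B'$ of $H$ and every augmenting blossom $B\in\B.$; and, for each connected component $K$ of the vertices lying neither in $V_H$ nor in any augmenting blossom (the matched ``debris'', including the descendants of augmenting-blossom tree roots that are severed when those roots are deleted in the reduction), the single set $V(K)$. Property $(*)$ makes \C. a valid cover: an edge meeting an inner vertex dies on its singleton; an edge between two outer vertices must by $(*)$ sit inside one blossom set; and every remaining edge lies inside one debris set. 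For the capacity count, an augmenting blossom on $2t+1$ vertices is a non-triangle-cluster, so its set has capacity $t+\tfrac12$, exactly its $t$ internal $M$-edges plus the $\tfrac12$ contributed by its augmenting cycle; each inner singleton, each t-blossom (capacity $\lfloor|K|/2\rfloor$), and each even debris component (capacity $|K|/2$) has capacity equal to the number of $M$-edges it accounts for. Summing gives $\mathrm{cap}(\C.)=|M|+|\B.|/2=|\Mt.|$.

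The one configuration that wrecks this count, and what I expect to be the crux, is a \emph{matched} non-triangle-cluster blossom surviving as an outer blossom of $H$: its set costs $\tfrac12$ more than the matched edges it covers, pushing $\mathrm{cap}(\C.)$ above $|\Mt.|$. This is exactly where the hypothesis that $M$ has the greatest possible number of augmenting blossoms is spent. I would show no such blossom occurs: given a matched non-t-blossom $B'$ in an alternating tree, rotate $M$ along the even alternating path from the tree root to the base of $B'$ (expanding blossoms en route via P1). This keeps $M$ maximum but leaves the base of $B'$ exposed, turning $B'$ into an unmatched, hence augmenting, blossom. When the root is a single vertex or a t-blossom this strictly raises $|\B.|$ (and destroys no other augmenting blossom, since within a tree only the root can be augmenting), contradicting maximality; when the root is itself an augmenting blossom, deleting it in the reduction already absorbs $B'$ into an even debris component, so $B'$ never appears as an outer blossom of $H$. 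Making this dichotomy precise, and checking the rotation carefully, is the delicate step.

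For (ii), I would run the identical construction starting from a maximum matching $M$ maximizing $|\B.|$ subject to every vertex of $J$ being matched or lying in an augmenting blossom; the ``if such an $M$ exists'' clause supplies one. Then $\Mt.$ covers $J$, since each $J$-vertex is matched by $M$ or is spanned by the matched-edges-plus-cycle of its augmenting blossom (Lemma \ref{AugmentingLemma}). Here the plain cover bound no longer suffices, because the covering constraint can lower the optimum below the global maximum 2-matching; instead I would use the version of the bound for (\ref{2MatchEqn}) with \vcns. forced to equality on $J$, whose dual leaves the multipliers of the $J$-constraints free in sign, so sets meeting $J$ may be discounted and are no longer required to be tight. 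The capacity computation and the no-matched-non-t-blossom argument then carry over verbatim relative to the $J$-constrained family, so once more a cover of capacity $|\Mt.|$ certifies that $\Mt.$ is a maximum size solution to (\ref{2MatchEqn}) covering $J$.
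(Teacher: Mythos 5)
Your strategy is the same as the paper's: certify \Mt. against the triangulated set cover bound by reading a cover off a reduced Hungarian subgraph, and use the maximality of the number of augmenting blossoms (via a rotation of $M$ along the even alternating path from a tree root to the base of an offending blossom) to force every surviving blossom to be a triangle cluster, so that the capacities add up to $|M|+|\B.|/2$. The rotation argument, the capacity arithmetic, and the reduction of part (ii) to part (i) all match the paper's proof.

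There is, however, one concrete flaw in your cover construction. You give each augmenting blossom $B\in\B.$ and each ``debris'' component its own set. But a vertex $v$ that was \emph{inner} in the original Hungarian subgraph and sat in a tree rooted at an augmenting blossom $B$ may be joined to $B$ by an edge of $G$ (the tree edge from $B$ down to $v$ is already such an edge); after $B$ is deleted in the reduction, $v$ can land in a debris component, and then the edge from $v$ to $B$ has its two endpoints in two distinct non-singleton sets of your family, so the set-cover condition fails. Property $(*)$ only rules out edges leaving \emph{outer} vertices, and Lemma~\ref{RedHungarianLemma} only rules out edges from outer vertices of the reduced subgraph to augmenting blossoms; neither controls edges from such inner-turned-debris vertices. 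The repair is exactly what the paper does: use the single set $V-V_H$ (for the reduced $H$), which contains all augmenting blossoms together with all debris. Its capacity is still $|V-V_H|/2$ and still equals the weight of \Mt. inside it, so your counting is unaffected. A second, smaller point: in part (ii) the detour through a modified dual with sign-free multipliers on the $J$-constraints is unnecessary. The rotation that creates a new augmenting blossom preserves the property that every vertex of $J$ is matched or in an augmenting blossom (the root becomes matched, the rotated blossom becomes augmenting, intermediate vertices stay matched), so the same plain cover is tight and certifies \Mt. as a maximum size solution of (\ref{2MatchEqn}) even without the covering constraint; the paper's part (ii) is just part (i) plus the observation that \Mt. covers $J$.
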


\begin{proof}
\i Let $H$ be a reduced Hungarian subgraph for \Mt..
Let $B$ be any blossom in $H$. A maximum cardinality matching 
with $B$ unmatched can be formed 
(by interchanging the matched and unmatched edges in 
the path from $B$ to the root of its alternating tree).
This makes $B$ an augmenting blossom if it is
a non-t-blossom.
So the choice of $M$ makes
$B$ a t-blossom.

Consider the family of sets
\[\C. = \Big\{
V-V_H,\  \{v\},\ V(B) :
\hbox{$v$ an inner vertex,  $B$ a blossom of $H$} 
\Big\}.
\]
\C. is a  set cover (property $(*)$ shows
edges with an outer end are handled correctly).
The size of \Mt. equals the capacity of
\C. considered a triangulated set cover
(since no edge of positive weight leaves $V-V_H$,
any inner vertex is on a weight 1 edge leading to
an outer vertex, and the number of weight 1 edges 
with both ends in a blossom $B$ of $H$ is 
 $\lfloor |V(B)|/2 \rfloor$, which equals the capacity of $V(B)$
as a triangle cluster).
Part \i
follows.

\ii
\Mt. 
covers $J$ since it covers every vertex in an augmenting blossom
of $M$. The rest of the argument is identical to part \xi.
\qed
\end{proof}

We turn to the algorithm to find a maximum cardinality triangle-free
2-matching. First recall two more ideas from matching theory.
An {\em odd set cover} \C. for an arbitrary graph $G$
is 
set cover (as defined above)
where the
capacity of a set $S\in \C.$ is 1 for a singleton, else
$\lfloor |S|/2 \rfloor$  \cite{Lawler,LP}.

It is easy to see that
the cardinality of
any matching is at most the capacity of any odd set cover. 
Furthermore equality can always be achieved:
Take any Hungarian subgraph $H$ for any maximum cardinality matching
$M$. An argument similar to the previous one
shows that $|M|$ equals the capacity of the odd
set cover 

\[ \Big\{
V-V_H,\  \{v\},\ V(B) :
\hbox{$v$ an inner vertex,  $B$ a blossom} 
\Big\}.
\]

Next we recall the Edmonds-Gallai decomposition, which gives the
structure of any maximum cardinality matching. (Parenthetic remarks will
sketch proofs that this decomposition is correct.
Let \C. be an odd set cover whose capacity is the size of a maximum cardinality
matching.) Call a matching
{\em perfect on $S\con V$}
if every vertex of $S$ is matched with another vertex of $S$,
and {\em near-perfect} if this holds for all but one vertex of $S$.
(So $|S|$ is even in the first case and odd in the second.)

Start
with a Hungarian subgraph $H$ for some maximum cardinality matching.
(For our proofs, let \C. be the odd set cover constructed
from $H$ as described above.)
We will describe the structure of an arbitrary maximum cardinality matching
$M$.

\bigskip

{
\narrower

{\parindent=0pt

(a) $M$ contains a perfect matching  of the vertices not in $V_H$.
(This follows since $V-V_H\in \C.$.) 

(b) $M$ contains a near-perfect matching of any blossom $B$ of
$H$. (This follows since $B\in \C.$.)

(c) $M$ contains a maximum cardinality matching of the nodes of $H$.
 (This follows from the maximum cardinality of $M$ and (a)--(b).)
In fact this matching is a maximum cardinality matching of the
bipartite graph $BG$ whose edge sets are the inner nodes, and the
outer nodes of $H$ and whose edges are the edges of $H$ that 
join inner node to outer.
(This follows since \C. shows $M$
does not contain any edge joining two inner vertices.)
In Fig.\ref{Hungarian2Fig} $BG$ does not contain the horizontal dashed edge.

}

}

\bigskip

We remark that the matching of (c) covers every inner vertex (again by \C.).
Note that each edge of $BG$ can be identified with one or more edges of $G$.
Once this edge is chosen  the matching of (c) 
determines the unmatched vertices in the near-perfect matchings of (b).
So starting from any maximum cardinality matching $BM$ of $BG$, we can
construct a corresponding matching of $G$
satisfying (a)--(c), i.e.,
a maximum cardinality matching of $G$.

The last part of Edmonds-Gallai states that the 
sets $V(B)$ for blossoms $B$ of $H$ are an invariant of $G$.
(In proof, property $(*)$ implies
the outer vertices of $H$ are precisely
those vertices that can be reached from an unmatched vertex (of $M$) by an 
even length alternating path. Thus the outer vertices constitute
the set $U$ of
vertices that are unmatched in some maximum cardinality matching.
The sets $V(B)$
are the non-singleton connected components of $U$.)

In the rest of this section,
$O$ denotes the set of outer nodes 
 in the above
bipartite graph $BG$,  
and $N\con O$ denotes its set of non-t-blossoms.

\begin{lemma}
\i Let $BM$ be a maximum cardinality matching of $BG$ that 
matches the greatest possible number of nodes of $O-N$.
Then \Mt. is a maximum size triangle-free 2-matching
if $M$ is a maximum cardinality matching 
of $G$ corresponding to  $BM$ (as described above).

\ii Let $J$ be a set of vertices that can be covered by a
triangle-free 2-matching. Let
$BM$ be a maximum cardinality matching of $BG$ that 
matches every node of $O-N$ that is a subset of $J$,
and as many other nodes of $O-N$ as possible.
Then \Mt. is a maximum size triangle-free 2-matching
that covers $J$
if $M$ is a maximum cardinality matching 
of $G$ corresponding to  $BM$ 
such that no
base vertex 
of an unmatched blossom of $O-N$
belongs to 
$J$.
\end{lemma}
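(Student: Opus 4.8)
The plan is to reduce both parts to Lemma~\ref{Max2MatchLemma} by re-expressing the quantity ``number of augmenting blossoms of $M$'' entirely inside the bipartite graph $BG$. The starting point is the Edmonds--Gallai structure: every maximum cardinality matching $BM$ of $BG$ covers all inner nodes, so $|BM|$ equals the number of inner nodes and the number of \emph{unmatched} outer nodes is a constant $d$, the same for every maximum $BM$. Under the correspondence between $BM$ and $M$, an outer node is matched in $BM$ precisely when the single vertex or blossom it represents is matched in $M$; hence the augmenting blossoms of $M$ are exactly the unmatched nodes of $N$. Splitting the unmatched outer nodes as $(\text{unmatched in }N)+(\text{unmatched in }O-N)=d$ shows that maximizing the number of augmenting blossoms is \emph{equivalent} to maximizing the number of matched nodes of $O-N$.

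Given this equivalence, part \xi is immediate. A maximum cardinality $BM$ matching the greatest possible number of nodes of $O-N$ corresponds to a maximum cardinality $M$ with the greatest possible number of augmenting blossoms; since $|M|$ is fixed at the maximum and the size $|M|+|\B.|/2$ of $\Mt.$ depends only on $|M|$ and this count, the choice of bases inside the blossoms is irrelevant here. Lemma~\ref{Max2MatchLemma}\xi then gives that $\Mt.$ is a maximum size triangle-free 2-matching, in fact a maximum size solution to (\ref{2MatchEqn}).

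For part \xii I would first translate the covering requirement. Exactly as in the proof of Lemma~\ref{Max2MatchLemma}, $\Mt.$ covers a vertex $v$ iff $v$ is matched by $M$ or lies in an augmenting blossom, so ``$\Mt.$ covers $J$'' is precisely the side constraint of Lemma~\ref{Max2MatchLemma}\xii. I then classify the vertices of $J$ that could fail to be covered: those in $V-V_H$, the inner vertices, and the vertices of matched blossoms are always matched, while the vertices of augmenting (unmatched non-t-) blossoms are covered by the augmenting cycle of Lemma~\ref{AugmentingLemma}; the only danger is a vertex of $J$ that is an unmatched single outer node or the base of an unmatched t-blossom. A single outer node lying in $J$ must therefore be matched, and a t-blossom $B$ with $V(B)\con J$ must also be matched (every candidate base lies in $J$); a t-blossom meeting $J$ only partially may remain unmatched, since by P2 we may place its base in $V(B)-J$. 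Thus ``$\Mt.$ covers $J$'' is equivalent to matching every node of $O-N$ that is a subset of $J$ together with the base condition in the statement.

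Finally, subject to this covering constraint, maximizing the number of augmenting blossoms again amounts to maximizing the number of matched nodes of $O-N$ --- now forced to include every ``subset of $J$'' node and otherwise as large as possible --- which is exactly the $BM$ hypothesized. The assumption that $J$ can be covered by a triangle-free 2-matching guarantees that such a maximum $BM$ (matching all the forced nodes) exists, supplying the existence hypothesis of Lemma~\ref{Max2MatchLemma}\xii; applying that lemma to the corresponding $M$ yields the claim. I expect the main obstacle to be the case analysis of part \xii: verifying that reassigning a blossom's base via P2 keeps $M$ of maximum cardinality and does not decrease the number of augmenting blossoms, so that the bipartite condition on $O-N$ together with the base placement faithfully encodes both the covering of $J$ and the maximality demanded by Lemma~\ref{Max2MatchLemma}\xii.
\qed
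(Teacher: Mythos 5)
Your part ($i$) is essentially the paper's own argument: since every edge of $BG$ joins an inner node to an outer node, all maximum cardinality matchings of $BG$ leave the same number of outer nodes unmatched, so maximizing the matched nodes of $O-N$ is equivalent to maximizing the unmatched nodes of $N$, i.e.\ the augmenting blossoms, and Lemma \ref{Max2MatchLemma}($i$) finishes. Your translation in part ($ii$) of ``\Mt. covers $J$'' into the two conditions on $BM$ and on the bases of unmatched t-blossoms is also the paper's reasoning, and the P2 re-basing step you worry about is routine (P2 only permutes matched edges inside the blossom, so it changes neither $|M|$ nor the set of augmenting blossoms).

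The genuine gap is the sentence ``The assumption that $J$ can be covered by a triangle-free 2-matching guarantees that such a maximum $BM$ \dots exists.'' This is not automatic, and it is the actual content of the paper's proof of part ($ii$). Lemma \ref{Max2MatchLemma}($ii$) is itself conditional (``if such an $M$ exists''), so the hypothesis about a triangle-free 2-matching covering $J$ must be converted into the existence of a matching of $BG$ that covers every node of $J_0$, the set of nodes of $O-N$ that are subsets of $J$; only then can one augment to a maximum cardinality matching still covering $J_0$ and run the counting argument subject to that constraint. The paper proves this as a separate Claim: take a triangle-free 2-matching $M^2$ covering $J$, view it as a disjoint union of matched edges and odd cycles, and extract a subgraph $S\con M^2\cap BG$ in which every outer node covered by $M^2$ has degree $\ge 1$ and every inner node has degree $\le 1$ --- for an odd cycle $C$ this is done by taking alternate edges in each connected component of $C\cap BG$, using property $(*)$ to see that each visit of $C$ to an outer node enters and leaves through edges of $BG$ while an inner node lies on only two edges of $C$. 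The required matching $BM_0$ sits inside $S$. Without some such construction your reduction has no input, so you need to supply this step (or an equivalent Hall-type argument) before part ($ii$) is complete.
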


\begin{proof}
\i Let $BM^*$ be a matching on $BG$ that corresponds to
a maximum cardinality matching of $G$ having
the greatest possible number of augmenting blossoms.
The invariance of blossoms implies that the augmenting blossoms
of $BM^*$ are precisely the blossoms of $N$ that are unmatched in $BM^*$.
By Lemma \ref{Max2MatchLemma}
we need only show that $BM$ has as many augmenting blossoms as $BM^*$. 

The size of a matching on $BG$ is the number of matched outer nodes, so
$|O\cap V(BM)| = |O \cap V(BM^*)|$.
By definition
$|(O-N) \cap V(BM)| \ge |(O-N) \cap V(BM^*)|$.
So $|N\cap V(BM)| \le |N\cap V(BM^*)|$. 
This implies the desired inequality
 $|N-V(BM)| \ge |N-V(BM^*)|$.

\ii Suppose
\bigskip

{\narrower

{\parindent=0pt

(a)
the matching $BM$ of part \ii
exists,
 and 

(b) matching $M$ of part \ii exists and  covers every
vertex of $J$.

}}
 
\bigskip
\noindent
Then part \ii follows using
the argument for part \xi.

To establish (a)--(b),
let $J_0$ denote the set of 
nodes of $O-N$ that are  subsets of $J$.

\claim {$BG$ has a matching $BM_0$ that covers 
every node of $J_0$.}

The claim implies (a).
To show it implies (b),
Edmonds-Gallai shows any unmatched node
of $BM$ is outer. 
The claim shows
any blossom of $O-N$ that is unmatched in 
$BM$ contains a vertex $v\notin J$.
$v$ can be made the unique unmatched vertex of $B$ (P2).
So $M$ covers $J-N$, as does \Mt..
\Mt. covers every vertex in a blossom of 
$N$. (b) follows.

To prove the claim
let $M^2$ be a triangle-free 2-matching that covers $J$, viewed
as a collection of matched edges and odd cycles.
We will construct a set of
edges $S\con  M^2$ 
such that
each outer node covered by $M^2$ has degree $\ge 1$ in $S$,
and each inner node has degree $\le 1$.
Clearly $S$ contains the desired matching $BM_0$
($BM_0$ is a subset of $BG$ by property $(*)$).

We construct $S$ by
traversing the connected components of $M^2$, maintaining
the invariant that
any outer node 
covered by a traversed edge of $M^2$ has degree $\ge 1$
in $S$ and any inner node has degree $\le 1$ in $S$.
We traverse a component $C$  of $M^2$ as follows.

Suppose $C$ is a matched edge. Add it to
$S$. Clearly this preserves the invariant.

Suppose $C$ is an odd cycle.
The edges of $C \cap BG$ form a number of connected components.
Traverse each such component, adding alternate edges  to $S$. 
An outer node $x$ in such a component is on $\ge 1$
edge added to $S$,
since $C$ leaves $x$ each time it enters it, and property $(*)$.
An inner node $x$ maintains degree $\le 1$ in $S$, 
since $x$ is on just 2 edges of $C$
(each of which may or may not be in $BG$).
The claim follows.
\qed
\end{proof}

In summary the algorithm to find a maximum cardinality triangle-free
2-matching works as follows.  Find a maximum cardinality matching
$M_0$ of $G$ and its Hungarian subgraph $H$.  Use $H$ to construct
$BG$. Find a maximum cardinality matching $M_1$ of $BG-N$.  Find a
maximum cardinality matching $M_2$ of $BG$ that covers all nodes covered
by $M_1$.  Extend $M_2$ to a maximum cardinality matching $M_3$ of $G$,
using (a)--(c). Convert $M_3$ to the 2-matching $T(M_3)$ by reweighting
the augmenting blossoms.  Return $T(M_3)$.

If the 2-matching is required to cover a set of vertices $J$
two simple changes are needed:
Take $M_1$ as a maximum cardinality matching of $BG-N$
subject to the constraint that
it matches every node of $O-N$ that is a subset of $J$.
Take $M_3$ so it leaves a vertex not in $J$
unmatched in each
 unmatched blossom of $O-N$.

\begin{theorem}
\label{2MatchThm}
A maximum cardinality triangle-free
2-matching can be found in time  $O(\sqrt n m)$.
The same holds if the 2-matching is constrained to 
cover a set of vertices $J$, assuming $J$ can be covered by
some triangle-free
2-matching.
\end{theorem}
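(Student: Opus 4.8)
The plan is to establish the two claims of the theorem by analyzing the algorithm summarized immediately before the statement, treating correctness and running time separately; correctness is essentially inherited from the earlier lemmas, so the substance lies in the time accounting. For the unconstrained case, part (i) of the preceding lemma identifies the target object (a maximum cardinality matching $BM$ of $BG$ matching the greatest possible number of nodes of $O-N$) and, via Lemma \ref{Max2MatchLemma}, guarantees that the corresponding \Mt. is a maximum size triangle-free 2-matching. For the constrained case I would invoke part (ii): since $J$ is assumed coverable, the lemma supplies the matching that covers $J_0$ (the nodes of $O-N$ that are subsets of $J$) and produces a \Mt. that covers $J$. Thus it suffices to implement the stated algorithm within the claimed bound.

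First I would compute a maximum cardinality matching $M_0$ of $G$ by Micali--Vazirani \cite{MV} in $O(\sqrt n m)$ time, and from it read off the Hungarian subgraph $H$, the Edmonds--Gallai blossoms, and the bipartite graph $BG$ in $O(m)$ further time; classifying each blossom as a t-blossom or non-t-blossom is a local test of whether its induced subgraph is a triangle cluster, costing $O(m)$ in total and fixing the set $N$. Next, a maximum cardinality matching $M_1$ of the bipartite graph $BG-N$ is found in $O(\sqrt n m)$ (Hopcroft--Karp/\cite{ET}); because every outer node of $BG-N$ lies in $O-N$, this maximizes the number of $O-N$ nodes that can be matched. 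The hard part will be producing $M_2$: a maximum cardinality matching of $BG$ that retains $M_1$'s coverage of $O-N$ while being as large as possible. I would obtain it by a single application of the Mendelsohn--Dulmage theorem \cite{Lawler} to $M_1$ together with an unconstrained maximum matching of $BG$, which in $O(m)$ time yields one matching inheriting $M_1$'s coverage on the outer side and the maximum cardinality of the unconstrained matching, hence is itself maximum. Extending $M_2$ to a maximum cardinality matching $M_3$ of $G$ through properties (a)--(c), and then reweighting the augmenting blossoms as in Lemma \ref{AugmentingLemma} to form $T(M_3)$, each cost $O(m)$. Summing all steps gives $O(\sqrt n m)$.

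For the covering variant the two unconstrained bipartite matchings are replaced by constrained ones, and the main obstacle is enforcing the coverage of $J_0$ (and ultimately of $J$) without either losing maximality or exceeding the time bound; Mendelsohn--Dulmage again resolves this in linear time. Since $J$ is coverable, the claim proved inside the preceding lemma furnishes a matching $BM_0$ of $BG$ covering $J_0$, and every edge of $BM_0$ meeting $J_0$ already lies in $BG-N$; combining this with a maximum matching of $BG-N$ by Mendelsohn--Dulmage produces, in $O(m)$ time, an $M_1$ that is maximum in $BG-N$ and covers $J_0$. The same combination yields $M_2$, and when extending to $M_3$ I would exploit property P2 to expose, in each unmatched blossom of $O-N$, a vertex not in $J$, so that \Mt. covers all of $J$. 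Each added step stays within $O(\sqrt n m)$, so the overall bound is unchanged and both statements of the theorem follow.
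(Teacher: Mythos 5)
Your proof is correct and follows essentially the same route as the paper: run the algorithm summarized before the theorem, charge the maximum-cardinality matching computations to the $O(\sqrt n m)$ matching algorithms, and do everything else (Hungarian subgraph, $BG$, blossom classification, the coverage-preserving matching $M_2$, the extension to $M_3$, and the reweighting) in $O(m)$. Your explicit use of Mendelsohn--Dulmage for $M_2$ and for enforcing coverage of $J_0$ is the same device the paper uses (it spells out the alternating-path version in the proof of the subsequent corollary), so the two arguments coincide in substance; yours merely supplies details the paper leaves implicit.
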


\begin{proof}
The maximum cardinality matchings are found in time  $O(\sqrt n m)$
\cite{MV,GTSet}.
A Hungarian subgraph for a given maximum cardinality
matching can be found in time $O(m)$
\cite{GE,GTSet}. Hence the total time for both
of our algorithms is $O(\sqrt n m)$.  
\qed
\end{proof}

\begin{corollary}
A maximum cardinality triangle-free
2-matching can be found in the same asymptotic time
as a maximum cardinality matching.
The same holds if the 2-matching is constrained to 
cover a set of vertices $J$.
\end{corollary}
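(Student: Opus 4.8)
The plan is to revisit the algorithm underlying Theorem~\ref{2MatchThm} and to observe that, apart from a constant number of maximum cardinality matching computations, every step runs in linear time $O(m)$. Recall the pipeline: compute a maximum cardinality matching $M_0$ of $G$; build its Hungarian subgraph $H$ and the bipartite graph $BG$; compute $M_1$ and $M_2$; extend $M_2$ to $M_3$ via (a)--(c); and reweight the augmenting blossoms to obtain $T(M_3)$. The Hungarian subgraph is found in $O(m)$ time \cite{GE,GTSet}, the Edmonds--Gallai extension of $M_2$ to $M_3$ touches each vertex and edge a constant number of times, and the blossom reweighting (Lemma~\ref{AugmentingLemma}) is likewise linear. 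Thus if $T=T(n,m)=\Omega(m)$ denotes the time for one maximum cardinality matching, the whole algorithm costs $O(T)$ \emph{provided} the two matchings $M_1$ and $M_2$ can each be obtained from a constant number of unconstrained maximum matchings plus $O(m)$ extra work.

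The two matchings are not plain maximum matchings: in the $J$-constrained version, $M_1$ must saturate the set $J_0$ of nodes of $O-N$ that are subsets of $J$, and in both versions $M_2$ must be a maximum matching of $BG$ covering every node covered by $M_1$. Both are handled by the Mendelsohn--Dulmage theorem \cite{Lawler}, which states that for two matchings $P,Q$ of a bipartite graph with sides $X,Y$ there is a matching $R\subseteq P\cup Q$ covering every $X$-vertex covered by $P$ and every $Y$-vertex covered by $Q$; such an $R$ is extracted in $O(m)$ time by walking the paths and cycles of $P\cup Q$. For $M_1$, whose required set $J_0$ lies entirely on the outer side, I would first compute a maximum matching of the subgraph of $BG-N$ induced on $J_0$ together with the inner nodes; since the structural results above guarantee $J_0$ is coverable, K\"onig's theorem forces this matching to saturate $J_0$, giving a matching $P$ covering $J_0$. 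Combining $P$ with an unconstrained maximum matching of $BG-N$ by one Mendelsohn--Dulmage step (with $P$ on the outer side) then yields a maximum matching of $BG-N$ that still covers $J_0$, i.e. a valid $M_1$.

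For $M_2$ the required set $S$ of vertices covered by $M_1$ lies on both sides of $BG$, so one Mendelsohn--Dulmage step, which controls only a single side, does not suffice. The remedy is to apply the theorem twice. Let $M^*$ be an unconstrained maximum matching of $BG$. A first application to $M_1$ and $M^*$, taking $M_1$ on the outer side, yields a maximum matching $R_1$ covering every outer vertex of $M_1$ together with every inner vertex of $M^*$. A second application to $M_1$ and $R_1$, now taking $M_1$ on the inner side and $R_1$ on the outer side, yields $M_2$ covering every inner vertex of $M_1$ and every outer vertex of $R_1$; hence $M_2$ covers all of $S$, and since it covers every outer vertex of the maximum matching $R_1$ it is itself maximum. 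Verifying that maximality survives both combinations is the one genuinely delicate point; everything else is bookkeeping. Tallying the cost --- a constant number of maximum matchings plus $O(m)$ for the Hungarian subgraph, the Mendelsohn--Dulmage extractions, the extension to $M_3$, and the reweighting --- gives total time $O(T(n,m))$, which is the claim. The $J$-constrained statement is covered by the same argument, since $J$ enters only through $J_0$ in the construction of $M_1$ and through the choice of unmatched base vertices in the extension step, both treated above.
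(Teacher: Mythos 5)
Your proposal is correct, and its skeleton matches the paper's: both re-examine the pipeline of Theorem~\ref{2MatchThm} and argue that everything outside a constant number of maximum-cardinality-matching calls costs $O(m)$ (Hungarian subgraph, classification of blossoms, extension to $M_3$ via the Edmonds--Gallai structure, reweighting of augmenting blossoms). The one step where you genuinely diverge is the construction of $M_2$, the maximum matching of $BG$ covering $V(M_1)$. The paper does this in a single sweep: take any maximum matching $M_2$, form the components of $M_1\cup M_2$, and for each $x\in V(M_1)-V(M_2)$ swap along its alternating path, which must be even and end at a vertex of $V(M_2)-V(M_1)$ since $M_2$ is maximum; this protects both sides of $BG$ at once and in fact works on arbitrary graphs. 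You instead invoke Mendelsohn--Dulmage twice, protecting the outer side in the first pass and the inner side in the second; your maximality check at each pass (the new matching covers every inner, respectively outer, vertex of a maximum matching, hence has at least its cardinality) is exactly the right one, so the argument is sound, though it is confined to bipartite graphs --- which suffices here because $BG$ is bipartite by construction. Since Mendelsohn--Dulmage is itself proved by the same symmetric-difference walk, the two routes are close cousins and both run in $O(m)$; the paper's is marginally more economical and more general. A point in your favor: you spell out how to obtain the constrained $M_1$ saturating $J_0$ (a maximum matching on the subgraph induced by $J_0$ and the inner nodes, which must saturate $J_0$ because $J_0$ is coverable, followed by one Mendelsohn--Dulmage merge with an unconstrained maximum matching of $BG-N$), a detail the paper dismisses with the remark that the constrained case is ``essentially the same'' as the unconstrained one.
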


\begin{proof}
The argument for a 2-matching constrained to
cover $J$ is essentially the same
as the unconstrained case, so we discuss only the latter.
We show that excluding the time for
maximum cardinality matching, our algorithm uses
time $O(m)$.

As just noted $H$ is constructed in $O(m)$ time
in \cite{GE,GTSet}. This  construction 
also supplies the recursive 
decomposition of each blossom $B$ into subblossoms $B_i$,
so it is easy to find the near-perfect
matchings of (b) in Edmonds-Gallai (for $M_3$).
It is also easy to classify each blossom of $H$ as
a triangle cluster or a non-t-blossom.

It only remains to describe
how to find the 
maximum cardinality matching $M_2$ of $BG$ that covers all nodes covered
by $M_1$. The following well-known procedure works for an arbitrary matching
$M_1$ on an arbitrary graph $G$.

Start by finding any maximum cardinality matching $M_2$ of $G$.
The edges of $M_1 \cup M_2$ form a number of connected components,
each of which is a path or cycle with edges alternating between the two matchings.
A node $x\in V(M_1) -V(M_2)$
is the end of a component that is an alternating path $P$.
$P$ has even length
and its other end $y$ is in $V(M_2)-V(M_1)$
(since $M_2$ has maximum cardinality).
So replacing the edges of $P\cap M_2$ by $P\cap M_1$ makes $x$ covered
and keeps $M_2$ maximum cardinality.
($y$ becomes uncovered but this is not a problem.)
Doing this for all such $x$ makes $M_2$ the desired
maximum cardinality matching. Clearly the entire construction uses time $O(n)$.
\qed
\end{proof}

It is well-known that the problems of 
maximum size 2-matching and maximum cardinality bipartite matching
have the same asymptotic time bound.
It is also clear that finding a maximum size triangle-free 2-matching
is at least as hard as  maximum cardinality bipartite matching.

\subsection{Applications to preemptive scheduling 
for {\boldmath $B=2$}}
\label{sec:app}

The algorithm of Theorem \ref{2MatchThm}, along with the 
LP (\ref{Pre2Eqn}), solves the preemptive scheduling problem
for unit jobs. We begin by extending the solution to arbitrary
job lengths $\ell_j$.

We reduce
the general problem 
to the unit length case, using a graph $UG$ with unit jobs
defined as follows.  As before each time slot $s$ is represented by
vertices $s_1, s_2$ and edge $s_1s_2$. A job $j$ of length $\ell_j$,
which may be scheduled in a set of slots $S_j$, is represented by
vertices $u_{js},\ s\in S_j$, each with two edges $u_{js} s_i$,
$i=1,2$. In addition job $j$ has vertices $\bar u_{ji}$,
$i=1,\ldots,|S_j|-\ell_j$.  with a complete bipartite graph joining
its two types of vertices $u_{js}$ and $\bar u_{ji}$.

The triangles of $UG$ have the form
$u_{js}, s_1,s_2$ and are similar to those of $MG$.
Define the sets of vertices
$J_U=\set {u_{js}} {j\in J,\ s\in S_j}$,
$\bar J_U=\set {\bar u_{ji}} {j\in J,\ i\le |S_j|-\ell_j}$.
It is easy to see that a solution to 
LP (\ref{2MatchEqn})
on graph $UG$
that covers every vertex of $J_U \cup \bar J_U$
gives a 
feasible solution to LP (\ref{Pre2Eqn}) 
and vice versa.
So as before  \cite{CP2} 
(or Lemma \ref{Max2MatchLemma}\xii)
shows the 
solution to our problem is given by a 
triangle-free 2-matching on $UG$
that covers $J_U \cup \bar J_U$
and has the greatest cardinality possible.

\begin{theorem}
\label{2PreAlgThm}
Let $J$ be a set of unit jobs  that can be scheduled
on $B=2$ processors. 
A preemptive schedule for $J$ minimizing the
active time can be found in $O(\sqrt n m)$ time. 
The result extends to
jobs of  arbitrary integral lengths $\ell_j$,
where the time is
$O(\sqrt L m)$ for $L$ the sum of all job lengths.
\end{theorem}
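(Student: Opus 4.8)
The plan is to reduce both cases to finding a maximum-cardinality triangle-free 2-matching and then invoke Theorem~\ref{2MatchThm}. For unit jobs I would argue as follows. LP~(\ref{Pre2Eqn}) shows that an optimal preemptive schedule corresponds to a maximum-size solution of the matching-type system~(\ref{2MatchEqn}) on $MG$, subject to the constraint that every job vertex is covered (each unit job must run to completion, forcing equality in constraint~(\ref{Pre2Eqn}.a)). By Cornu\'ejols--Pulleyblank, equivalently by Lemma~\ref{Max2MatchLemma}(ii) with $J$ the set of job vertices, such a maximum-size solution is realized by a maximum-cardinality triangle-free 2-matching on $MG$ that covers $J$. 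Theorem~\ref{2MatchThm} produces one in $O(\sqrt n\, m)$ time, and its half-integral weights are read off as a schedule (weight $1$ on $js_i$ meaning $j$ runs in slot $s$; weight $1/2$ on the edges of an augmenting cycle meaning the incident jobs are split into two half-slot pieces, so preemptions occur only at integral or half-integral times).

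For arbitrary integral lengths I would use the graph $UG$ defined just above the theorem. The stated correspondence gives that a maximum-cardinality triangle-free 2-matching on $UG$ covering $J_U\cup\bar J_U$ yields an optimal schedule: covering the $|S_j|-\ell_j$ vertices $\bar u_{ji}$ forces exactly $|S_j|-\ell_j$ of the $u_{js}$ to be absorbed inside job $j$'s gadget (which, being bipartite, carries only integral weight), leaving exactly $\ell_j$ of them matched to slot vertices, i.e. a length-$\ell_j$ assignment of $j$ to its feasible slots. I would therefore apply the covering-constrained part of Theorem~\ref{2MatchThm} with $J:=J_U\cup\bar J_U$ and decode the 2-matching exactly as in the unit case.

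The correctness half is thus a short chain of citations; the substance, and the step I expect to be the main obstacle, is the running-time bound for the arbitrary-length case. Two things must be checked. First, the graph on which matching is actually run must have $O(m)$ edges: the literal complete-bipartite gadget joining the $u_{js}$ to the $\bar u_{ji}$ contributes $\sum_j |S_j|(|S_j|-\ell_j)$ edges, which can exceed $m=\sum_j|S_j|$, so I would either replace each gadget by a sparse equivalent that still lets the matching free up the $\ell_j$-subset of $u_{js}$ corresponding to an optimal choice of slots, or fold the count-$\ell_j$ constraint into the degree-constrained machinery so it costs only $O(|S_j|)$ edges. Second---and this is the crux---the number of augmenting phases must be $O(\sqrt L)$ rather than $O(\sqrt{|V(UG)|})=O(\sqrt m)$. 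As with the $O(\sqrt n\, m)$ bound of Section~\ref{sec:disjoint}, this rests on the observation that loops and the forced gadget and idle edges contribute no real length to augmenting paths: seeding with an initial matching that already saturates every gadget and every idle slot (the length-$1$ augmentation step used in the proof of Theorem~\ref{I*Theorem}) leaves only the scheduling of the $L$ units as substantive augmentation, so the phase count is governed by the total demand $L$, exactly as the Gabow--Tarjan cardinality-DCS bound gives $O(\sqrt L\, m)$ for the integral-preemption theorem of Section~\ref{sec:nonunit}.

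Combining a phase count of $O(\sqrt L)$ with $O(m)$ work per phase yields the claimed $O(\sqrt L\, m)$, and setting every $\ell_j=1$ recovers $O(\sqrt n\, m)$ for unit jobs. I would present the time analysis as the one place needing genuine care, since it is precisely where the special structure of $MG$ and $UG$ (loops, triangles, and the forced cover of $J_U\cup\bar J_U$) must be exploited to hold the $\sqrt{\cdot}$ factor at $\sqrt L$ instead of $\sqrt m$.
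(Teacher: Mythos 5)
Your proposal follows essentially the same route as the paper: reduce to a maximum-cardinality triangle-free 2-matching on $MG$ (resp.\ $UG$) constrained to cover the job vertices (resp.\ $J_U\cup\bar J_U$), invoke Theorem~\ref{2MatchThm}, and read the half-integral weights off as a schedule. The one issue you flag as the crux --- that the complete bipartite gadgets blow up the edge count and that the augmentation count must be tied to $L$ rather than $|V(UG)|$ --- is exactly what the paper's two-sentence proof disposes of by citing the vertex substitution technique of \cite{GabDCS}, which is precisely the ``sparse equivalent of each gadget'' you propose to construct.
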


\begin{proof}
The algorithm 
for triangle-free 2-matching
(Theorem \ref{2MatchThm})
 can be implemented in
time $O(\sqrt L m)$  on graph $UG$.
To do this we use the vertex substitution technique of \cite{GabDCS}
which works on graphs of $O(m)$ edges rather than $UG$ itself.
\qed
\end{proof}

A standard construction from network flow shows that regarding
feasibility, allowing arbitrary preemption does not help in this sense:
For any number of processors, a set of jobs that 
can be feasibly scheduled with arbitrary preemption can 
also be scheduled limiting preemption to integral
time points, assuming the jobs are either
all unit length, or 
they have arbitrary integral lengths $\ell_j$ and time
is slotted. 
(In detail use a flow graph
where job $j$ is a source of capacity $\ell_j$,
slot $t$ is a sink of capacity $p$,
and edge $jt$ exists when $j$ can be scheduled in slot $t$.
The Integrality Theorem shows we can assume
each $j$ gets scheduled
in exactly
$\ell_j$ time slots.) 

However, even in those situations,
arbitrary preemption can reduce active time.
For example when $B=2$, three unit
jobs that may be scheduled in slots 1 or 2
require 2 units of active time using preemption at integer times
and only $3/2$ units with arbitrary preemption.
We now show this example gives
the greatest disparity between the two preemption models 
for $B=2$.

Recall that we construct an optimal preemptive schedule \P.
by finding a 
a maximum size triangle-free 2-matching
as specified in Lemma \ref{Max2MatchLemma}\ii
(on $MG$ for unit jobs, and $UG$ for arbitrary length jobs)
and converting it to \P. in the obvious way 
via LP (\ref{Pre2Eqn}). Part \iii of the lemma below
gives the main
property for establishing the desired bound.
The following lemma is proved by examining the 
structure of  blossoms in our special graphs, 
e.g., the triangles all have the form $j s_1 s_2$
for $j$ a job vertex and $s_1 s_2$ the edge 
representing a time slot; also vertices $s_1$ and
$s_2$ are isomorphic. 

\begin{lemma}
\label{SNodesLemma}
\i A blossom in $MG$ ($UG$) is a triangle cluster iff it contains 
exactly one vertex of $J$ ($J_U$), respectively.

\ii Let $H$ be a Hungarian subgraph  in $MG$ or $UG$.
Any slot $s$ either has both its vertices
inner, or both its vertices outer (and in the same blossom) 
or neither vertex in $V_H$.

\iii Let the optimal preemptive schedule \P.
be constructed from \Mt., and let
$B$ be an augmenting blossom of $M$.
The time slots with both vertices  in $B$ 
have $\ge 3/2$ units of active time in \P..
\end{lemma}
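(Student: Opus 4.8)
The plan is to prove the three parts in order, using (i) and (ii) as the structural inputs to the active-time count in (iii). The common ingredient is that every triangle of $MG$ (resp. $UG$) has the form $j\,s_1 s_2$ (resp. $u_{js}\,s_1 s_2$): the only edges are job-slot edges, the slot edges $s_1 s_2$, and, in $UG$, job-dummy edges, and no two job-type vertices are adjacent while slot vertices of distinct slots are never adjacent.

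\textbf{Part (i).} I would analyze the block--cut tree of a triangle cluster. Its blocks are triangles $j\,s_1 s_2$. A slot pair lies in a triangle only together, so if a slot vertex were in two triangles of the cluster those triangles would share the whole edge $s_1 s_2$, contradicting that each biconnected component is a single triangle; hence every slot vertex sits in exactly one triangle and every cut vertex is a job-type vertex. Walking the block--cut tree, consecutive triangles share a cut vertex, and since each triangle has only one job-type vertex, all these cut vertices coincide: every triangle of the cluster shares one common job-type vertex, giving the forward direction. Conversely, a blossom with a single job-type vertex $j$ has all its edges either meeting $j$ or lying inside a slot pair, so its biconnected components are exactly the triangles $j\,s_1 s_2$ (a windmill), i.e.\ a triangle cluster.

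\textbf{Part (ii).} The vertices $s_1,s_2$ of a slot are adjacent twins, $N(s_1)\setminus\{s_2\}=N(s_2)\setminus\{s_1\}$ (both equal the set of job-type vertices schedulable in $s$), so the transposition swapping $s_1\leftrightarrow s_2$ and fixing everything else is an automorphism of the graph. Since the Edmonds--Gallai decomposition (the partition of $V$ into $V-V_H$, the inner vertices, and the outer/blossom vertices) is invariant under automorphisms, $s_1$ and $s_2$ land in the same class; and if both are outer they lie in the same blossom, because blossoms are the nontrivial components of the outer set and $s_1 s_2\in E$. This is exactly the trichotomy of (ii).

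\textbf{Part (iii), the heart.} After Lemma~\ref{AugmentingLemma} reweights $B$, the augmenting cycle (all edges of weight $1/2$) together with the matched edges (weight $1$) spans $V(B)$, so \emph{every} vertex of $B$ carries total incident $T(M)$-weight exactly $1$: the augmenting blossom fully covers itself. By (ii) the slots meeting $B$ occur as complete pairs; say there are $\sigma$ of them, with total idle time $I_B=\sum_s x_{s_1 s_2}$. Counting weight at the $2\sigma$ slot vertices gives $2\sigma = 2 I_B + W$, where $W$ is the total weight on job-slot edges inside $B$, so the active time is $\sum_s (1 - x_{s_1 s_2}) = \sigma - I_B = W/2$. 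In $MG$ every edge at a job vertex is a job-slot edge, so $W$ equals the number $p$ of job vertices of $B$; as $|V(B)| = 2\sigma + p$ is odd, $p$ is odd, and part (i) gives $p\neq 1$ for the non-t-blossom $B$, whence $p\ge 3$ and the active time is $W/2 = p/2 \ge 3/2$.

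\textbf{The $UG$ case and the main obstacle.} In $UG$ the same count yields active time $=W/2$, and weighing at the dummy and $J_U$ vertices gives $W = |J_U\cap V(B)| - |\,\text{dummies}\cap V(B)|$, which is odd by the parity of $|V(B)|$. The hardest step, which I expect to be the real obstacle, is ruling out $W=1$: I would argue that $W=1$ forces the job-dummy/job-slot structure of $B$ to collapse to a single effective triangle $u_{js}\,s_1 s_2$, so $B$ would be a triangle cluster, contradicting that it is a non-t-blossom; equivalently, since job-dummy edges stay inside one job's gadget and cannot alone close an odd cycle, a length-$\ge 5$ augmenting cycle must carry at least three half-units of job-slot weight. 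Hence $W\ge 3$ and the active time is $\ge 3/2$ in both models. Parts (i) and (ii) are routine once the twin and triangle structure is isolated; the delicate points are this $UG$ bookkeeping and making the block--cut-tree argument of (i) airtight for degenerate one-vertex blossoms.
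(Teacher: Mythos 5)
Your proof is essentially correct, but in parts (ii) and (iii) it takes a genuinely different route from the paper's. For (ii) the paper argues directly from property $(*)$: first $s_1\in V_H\Rightarrow s_2\in V_H$ (via the edge $s_1s_2$ if $s_1$ is outer, via a common job neighbour if $s_1$ is inner), then rules out ``$s_1$ outer, $s_2$ inner'' because $s_2$'s outer neighbours in its parent and child would be forced into $s_1$'s blossom. Your observation that $s_1,s_2$ are adjacent twins, so the transposition is an automorphism preserving the Edmonds--Gallai classes, is slicker and buys the trichotomy in one line, at the cost of invoking canonicity of the decomposition (which the paper does state and use elsewhere). For (iii) the paper's $MG$ case is a pure count ($\ge 3$ job vertices on $2$ processors gives $\ge 3/2$), and its $UG$ case is a case analysis on the number of slot-pair edges of the augmenting cycle: with $\ge 3$ such edges each contributes a slot with $1/2$ unit of active time; with exactly one, the path avoiding $s_1s_2$ must pass through another slot vertex $t_1$ with $t_1t_2\notin C$, whose two cycle edges are job--slot edges, giving one full unit at $t$ plus the half unit at $s$. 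Your identity ``active time of the slots of $B$ equals $W/2$'' (from the fact that the reweighted blossom saturates all its vertices) unifies the two cases and replaces the case analysis by the parity of $W$; your key step for $UG$ --- ruling out $W=1$ because the two job--slot edges at $s_1,s_2$ go to gadgets of distinct jobs, and job--dummy edges cannot bridge gadgets, so the cycle must visit another slot vertex --- is exactly the same structural fact the paper exploits, just packaged as a weight bound. The one place you should tighten is the converse of (i): the paper proves the contrapositive (a non-t-blossom, made augmenting, has a length-$\ge 5$ alternating cycle whose unique-slot edge has two distinct job neighbours, hence $\ge 2$ job vertices), whereas your direct ``windmill'' argument needs the additional observation that a blossom containing a single job vertex necessarily contains \emph{both} vertices of every slot it meets (otherwise a job--slot edge is a bridge and the induced subgraph's edges do not partition into triangles); this does hold --- with one job vertex and no edges between distinct slots the defining odd cycle collapses to a triangle $j\,s_1s_2$ --- but it is the degenerate case you flagged and it should be argued explicitly rather than left to the block--cut-tree picture.
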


\begin{proof}
\i A triangle $T$ in $MG$ or $UG$ must contain a slot edge $s_1s_2$.
So the other two edges must be $js_1, js_2$ for vertex $j\in J \cup J_U$.
A triangle $T'$ that shares a vertex but no edge with $T$ must 
contain vertex $j$. So the triangle clusters
are sets of edges $js_1,js_2,s_1s_2$ where $j$ is a fixed vertex of
$J\cup J_U$.
(In $UG$ a cluster has only one triangle.)

Clearly this implies any t-blossom has exactly one vertex of $J\cup J_U$.

Conversely let $B$ be a 
non-t-blossom. Make $B$ an augmenting blossom by unmatching
any edge incident to its base. Then 
Lemma \ref{AugmentingLemma} shows $B$ has an alternating cycle $C$
of $\ge 5 $ edges. $C$ must contain an edge of the form
$s_1s_2$. The two other edges of $C$ incident to $s_1$ and $s_2$
must go to vertices of $J\cup J_U$ that are distinct.

\ii
First observe that  $s_1\in V_H$ implies $s_2\in V_H$: 
If $s_1$ is outer this follows from edge $s_1s_2$. 
If $s_1$ is inner then its parent contains an outer 
vertex  $j\in J\cup J_U$ 
adjacent to $s_1$. $j$ is also adjacent to $s_2$ so $s_2\in V_H$.

We complete the proof by showing that if $s_1$ is outer then so is $s_2$.
If, on the contrary, $s_2$ is inner 
then it is adjacent to an outer vertex in its parent
and an outer vertex in its child. These vertices are obviously in different
blossoms. But $s_1$ is adjacent to both of them, so both are in
the blossom containing $s_1$, contradiction.

\iii First consider
the case of unit jobs and graph $UG$.
Any blossom $B$ in a Hungarian subgraph
contains an even number of slot vertices $s_1,s_2$
(part \xii)
and an odd number of vertices in total.
So 
$B$ contains an odd number of $J$-nodes.
Suppose $B$ is augmenting.
Part \i now implies $B$ has $\ge 3$ $J$-nodes.
\P. schedules the jobs of $B$  in the time slots of $B$.
Clearly they use $\ge 3/2$ units of active time.

Next consider arbitrary length jobs and graph $MG$.
Let $B$ be an augmenting blossom in $UG$.
Lemma \ref{AugmentingLemma} shows $B$ has an alternating cycle $C$
of $\ge 5 $ edges.

Since $|C|$ is odd, $C$ contains an odd number of edges of the form
$s_1s_2$. 
Each such $s_i$ is adjacent in $C$ to a $J_U$-vertex.
The corresponding edge has weight $1/2$ in \Mt., so slot
$s$ has (exactly) $1/2$ unit of active time in \P..
If $C$ has $\ge 3$ such edges $s_1s_2$ then 
the slots of $B$ have $\ge 3/2$ units of active time in \P..
So suppose $C$ contains exactly one $s_1s_2$ edge.

The path in $C$ that avoids 
$s_1s_2$ must contain some slot vertex $t_1$, $t\in S$ (recall
the definition
of $MG$).
Since $t_1t_2\notin C$, $t_1$ is adjacent to two $J_U$-vertices in $C$.
The corresponding edges have weight $1/2$ in \Mt..
So slot $t$ has
1 unit of active time and $s$ has
$1/2$, giving the desired $3/2$ units. 
\qed
\end{proof}

\begin{theorem}
\label{PreThm}
For $B=2$ and a set of jobs $J$, the minimum active
time in a schedule permitting preemption only at
integer times
is $\le 4/3$ times that of one allowing arbitrary preemptions.
\end{theorem}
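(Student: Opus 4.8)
The plan is to compare the two optima directly through the matching representation developed in Section~\ref{PreemptiveSection}, rather than by explicitly rounding a preemptive schedule. Let $A^{*}$ be the active time of the optimal arbitrary-preemption schedule $\mathcal{P}$, constructed from the maximum triangle-free $2$-matching $T(M)$ of $MG$ (or $UG$) that covers all job vertices, as in Lemma~\ref{Max2MatchLemma}(ii). Let $A_{I}$ be the optimal active time when preemption is restricted to integer times, and let $\beta$ be the number of augmenting blossoms of the underlying maximum matching $M$. The whole argument rests on two facts that I would establish and then combine: (1) $A_{I} = A^{*} + \beta/2$, and (2) $A^{*} \ge \tfrac{3}{2}\beta$.

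For (1) I would argue that both optima are read off from the same maximum-cardinality matching $M$ of $MG$ and differ only by the half-unit savings of the augmenting blossoms. An integer-preemption schedule completing all jobs is exactly a DCS covering every job (equivalently a matching of $MG$ covering every job vertex); assuming feasibility ($\iota = L$), equation~(\ref{DEqn}) shows a maximum such object has idle time (number of loops) $|M| - L$, so that $A_{I} = N - (|M| - L)$, where $N$ is the total number of candidate slots and $L$ the total job length. An optimal arbitrary-preemption schedule is, by LP~(\ref{Pre2Eqn}) together with Lemma~\ref{Max2MatchLemma}, the maximum triangle-free $2$-matching $T(M)$, whose size $|M| + \beta/2$ equals $L$ plus its total idle time (the paper already notes the objective of~(\ref{Pre2Eqn}) is total job length plus total inactive time). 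Hence its idle time is $|M| - L + \beta/2$ and $A^{*} = N - (|M| - L + \beta/2)$. Subtracting the two expressions gives $A_{I} = A^{*} + \beta/2$.

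For (2) I would invoke the blossom structure lemma. By Lemma~\ref{SNodesLemma}(ii) every time slot has both of its vertices inside a single blossom or neither vertex in $V_{H}$, so distinct augmenting blossoms own pairwise-disjoint sets of time slots; by Lemma~\ref{SNodesLemma}(iii) the slots owned by each augmenting blossom carry at least $3/2$ units of active time in $\mathcal{P}$. Summing over the $\beta$ blossoms and using disjointness yields $A^{*} \ge \tfrac{3}{2}\beta$. Combining the two facts, $\beta/2 \le \tfrac{1}{3}A^{*}$, whence $A_{I} = A^{*} + \beta/2 \le \tfrac{4}{3}A^{*}$, which is the claim.

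The step I expect to be delicate is identity~(1): one must verify that the optimal integer schedule and the optimal arbitrary-preemption schedule are governed by the \emph{same} matching $M$, so that the entire discrepancy between the two models is precisely the $\beta/2$ units of idle time gained by re-weighting the augmenting blossoms to $1/2$. This requires care with the feasibility assumption (both models complete all jobs, so $\iota = L$), with the bookkeeping that converts $2$-matching and DCS cardinalities into active time via idle time against the common constant $N$, and with the arbitrary-length case, where one argues on $UG$ instead of $MG$ and sets $L = \sum_{j}\ell_{j}$. Once~(1) is pinned down, the bound is immediate from Lemma~\ref{SNodesLemma}, and the instance of three unit jobs feasible in slots $1$ or $2$ (a single augmenting blossom with $A^{*} = 3/2$ and $A_{I} = 2$) shows the $4/3$ ratio is attained.
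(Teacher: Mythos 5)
Your proposal is correct and follows essentially the same route as the paper's own proof: the paper likewise writes the integer-preemption optimum as $\alpha+\pi/2$ where $\alpha$ is the preemptive optimum and $\pi$ the number of augmenting blossoms (your fact (1)), invokes Lemma~\ref{SNodesLemma}(iii) to get $\alpha\ge(3/2)\pi$ (your fact (2)), and concludes $1+(\pi/2)/\alpha\le 4/3$. Your more explicit bookkeeping of fact (1) via equation~(\ref{DEqn}) and the LP objective is just a spelled-out version of the paper's appeal to Lemma~\ref{MaxCardLemma}.
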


\remark {The theorem holds for unit jobs, and for jobs
of arbitrary length
$\ell_j$ when time is slotted, i.e.,
a schedule with preemptions only at integer times
is allowed to execute a length $\ell_j$
job in $\ell_j$ distinct time slots, but no more.
If time is continuous the ratio of the theorem 
can approach the trivial bound of $B=2$
when job lengths are arbitrary.
For example consider a length $\ell$ job
with release time 0 and deadline $\ell^2$, and
unit jobs $j=0,\ldots, \ell-1$ with
release time $\ell j$ and deadline $\ell j+1$.
The ratio $(2\ell-1)/\ell$ approaches 2.}

\begin{proof}
An augmenting blossom
increases the size of the 2-matching by $1/2$. 
Recalling the objective function of
LP (\ref{Pre2Eqn}) we see
this increases the number of inactive time units by $1/2$,
i.e.,
it decreases the number of active time units by $1/2$.

Our optimal preemptive schedule \P. is constructed
from \Mt., where the matching $M$ corresponds to
an optimal schedule \N. with preemptions limited
to integer times (recall Lemma \ref{MaxCardLemma}).
Let $M$ have $\pi$ augmenting blossoms
and let \P. have  $\alpha$ active time units.
So \N. has $\alpha+\pi/2$ active time units.

The preceding lemma (part \xiii) shows
$\alpha\ge (3/2)\pi$.
Thus the number of active time units in \N. exceeds
the number in \P. by a factor
\[ (\alpha+\pi/2)/\alpha = 1+ (\pi/2)/\alpha \le 4/3.\]
\qed
\end{proof}


\section{Conclusion}
In this paper, we defined a new problem which involves 
scheduling jobs in batches of size at most $B$.  Each 
job has periods of time within which it must be scheduled 
and the goal is to minimize the number of active time slots 
in the schedule.  No cost is incurred for slots in which 
no jobs are scheduled.  There is a strong connection 
between this problem and other classic covering problems
such as vertex cover with hard capacities and the 
$K$-center problem.  Another general model of energy 
consumption might allow for the energy consumption
of each active slot to depend on the number of jobs 
actually assigned to that time slot.  At least for the 
$B=2$ case this can be handled easily, by adapting the
matching based solution described in Section~\ref{sec:disjoint}.

One could generalize this model further to include 
other objective functions that measure completion 
times, tardyness, etc.  Furthermore, it would be 
interesting to consider the online setting in which the 
entire set of jobs is not known in advance, 
but jobs arrive over time and are known only when 
they are released, or perhaps shortly before 
they are released.  

\end{document}

